\providecommand{\nnreals}{\mathbb{R}_{\geq 0}}
\tikzset{>={Latex[width=1.5mm,length=1.5mm]}}
\def\R{\mathbb{R}}
\def\N{\mathbb{N}}
\newcommand{\opt}{\textsf{OPT}}
\def\ep{\varepsilon}
\def\tO{\tilde{O}}
\newtheorem{theorem}{Theorem}[section]
\newtheorem{lemma}[theorem]{Lemma}
\newtheorem{claim}[theorem]{Claim}
\theoremstyle{definition}
\newtheorem{definition}[theorem]{Definition}
\newtheorem{remark}[theorem]{Remark}
\providecommand{\email}[1]{\href{mailto:#1}{\nolinkurl{#1}\xspace}}
\newcommand{\tuple}[1]{\left(#1\right)} 
\newcommand{\eps}{\varepsilon}
\def\*#1{\mathbf{#1}}
\def\+#1{\mathcal{#1}}
\newcommand{\poly}{\ensuremath{\mathsf{poly}}}
\newcommand{\polylog}{\ensuremath{\mathsf{polylog}}}
\newcommand*{\inlineequation}[2][]{
  \begingroup
    \refstepcounter{equation}
    \ifx\\#1\\
    \else
      \label{#1}
    \fi
    \relpenalty=10000 
    \binoppenalty=10000 
    \ensuremath{
      #2
    }
    ~\@eqnnum
  \endgroup
}
\newcommand{\pwsul}{\textsc{Pairwise Weighted Spanner}\xspace}
\newcommand{\rsp}{\textsc{Restricted Shortest Path}\xspace}
\newcommand{\rcsp}{\textsc{Resource-constrained Shortest Path}\xspace}
\newcommand{\dpwj}{\textit{ Distance-preserving Weighted Junction Tree }\xspace}
\newcommand{\pwspl}{\textsc{Pairwise Weighted Spanner}\xspace}
\newcommand{\opwspl}{\textsc{Online Pairwise Weighted Spanner}\xspace}
\newcommand{\swspl}{\textsc{Single-source Weighted Spanner}\xspace}
\newcommand{\oswspl}{\textsc{Online Single-source Weighted Spanner}\xspace}
\newcommand{\awdpl}{\textsc{All-pair Weighted Distance Preserver}\xspace}
\newcommand{\slc}{\textsc{Steiner Label Cover}\xspace}
\newcommand{\oslc}{\textsc{Online Steiner Label Cover}\xspace}
\title{Approximation Algorithms for Directed Weighted Spanners}
\author{
Elena Grigorescu\thanks{Purdue University.
 E-mail: \email{elena-g@purdue.edu}. Supported in part by NSF CCF-1910659, NSF CCF-1910411, and NSF CCF-2228814.
}
 \and
 Nithish Kumar\thanks{Purdue University. E-mail: \email{kumar410@purdue.edu}. Supported in part by NSF CCF-1910411, and NSF CCF-2228814.}
 \and
 Young-San Lin\thanks{Melbourne Business School. 
 E-mail: \email{y.lin@mbs.edu}.}
}
\begin{document}

\maketitle
\begin{abstract}

In the \emph{pairwise weighted spanner} problem, the input consists of a weighted directed graph on $n$ vertices, where each edge is assigned both a \emph{cost} and a \emph{length}. Furthermore, we are given $k$ terminal vertex pairs and a distance constraint for each pair. The goal is to find a minimum-cost subgraph in which the distance constraints are satisfied. A more restricted variant of this problem was shown to be $O(2^{{\log^{1-\ep} n}})$-hard to approximate under a standard complexity assumption, by Elkin and Peleg (Theory of Computing Systems, 2007). This general formulation captures many well-studied network connectivity problems, including spanners, distance preservers, and Steiner forests. 

We study the weighted spanner problem, in which the edges have positive \emph{integral} lengths of magnitudes that are \emph{polynomial} in $n$, while the costs are \emph{arbitrary} non-negative rational numbers. Our results include the following in the classical offline setting:
\begin{itemize}
    \item  An $\tO(n^{4/5 + \ep})$-approximation algorithm for the pairwise weighted spanner problem. When the edges have unit costs and lengths, the best previous algorithm gives an  $\tO(n^{3/5 + \ep})$-approximation, due to Chlamt{\'a}{\v{c}}, Dinitz, Kortsarz, and  Laekhanukit (Transactions on Algorithms, 2020).

    \item An $\tO(n^{1/2+\ep})$-approximation algorithm for the weighted spanner problem when the terminal pairs consist of \emph{all} vertex pairs and the distances must be preserved \emph{exactly}. When the edges have unit costs and arbitrary positive lengths, the best previous algorithm gives an $\tO(n^{1/2})$-approximation for the all-pair spanner problem, due to Berman, Bhattacharyya, Makarychev, Raskhodnikova, and Yaroslavtsev (Information and Computation, 2013).
\end{itemize}
We also prove the first results for the weighted spanners in the \emph{online} setting. In the online setting, the terminal vertex pairs arrive one at a time, in an  online fashion, and edges are required to be added irrevocably to the solution in order to satisfy the distance constraints, while approximately minimizing the cost. Our results include the following:
\begin{itemize}
    \item An $\tO(k^{1/2 + \ep})$-competitive algorithm for the online pairwise weighted spanner problem. The state-of-the-art results are an $\tilde{O}(n^{4/5})$-competitive algorithm when edges have unit costs and arbitrary positive lengths, and a $\min\{\tO(k^{1/2 + \ep}), \tO(n^{2/3 + \ep})\}$-competitive algorithm when edges have unit costs and lengths, due to Grigorescu, Lin, and Quanrud (APPROX, 2021).
    
    \item An $\tO(k^{\ep})$-competitive algorithm 
    for the online single-source (or single-sink) weighted spanner problem. Without distance constraints, this problem is equivalent to the online directed Steiner tree problem. The best previous algorithm for online directed Steiner trees is an $\tO(k^{\ep})$-competitive algorithm, due to Chakrabarty, Ene, Krishnaswamy, and Panigrahi (SICOMP, 2018).
\end{itemize}
Our online results also imply efficient approximation algorithms for the corresponding offline problems. To the best of our knowledge, these are the first approximation (online) polynomial-time algorithms with sublinear approximation (competitive) ratios for the weighted spanner problems.
\end{abstract}

\section{Introduction} \label{sec:intro}

We study a multi-commodity problem in directed graphs, which we call the \emph{pairwise weighted spanner} problem. In this problem, we are given a directed simple graph $G = (V,E)$ with $n$ vertices, and a set of $k$ terminal pairs $D \subseteq V \times V$. Furthermore, each edge $e \in E$ is associated with a {\em cost} given by the function $c: E \to \nnreals$ and a {\em length} given by the function  $\ell: E \to \nnreals$. We say that the graph has unit lengths if $\ell(e) = 1$ (respectively, the graph has unit costs if $c(e) = 1$) for all $e \in E$. Each pair $(s, t)\in D$ is also associated with a target distance given by a function $Dist: D \to \nnreals$. Let $H=(V(H),E(H))$ be a subgraph of $G$ and $d_H(s,t)$ denote the \emph{distance} from $s$ to $t$ in $H$, i.e., the total length of a shortest $s \leadsto t$ path of edges in $E(H)$. The cost of $H$ is $\sum_{e \in E(H)} c(e)$. The goal is to find a {\em minimum-cost subgraph} $H$ of $G$ such that the distance from $s$ to $t$ is at most $Dist(s,t)$, namely, $d_H(s,t) \le Dist(s,t)$ for each $(s,t) \in D$. 

The pairwise weighted spanner problem captures many network connectivity problems and is motivated by common scenarios, such as constructing an electricity or an internet network, which requires not only cost minimization but also a delivery time tolerance for the demands. Each edge is thus associated with two ``weights'' in this setting: the cost and the delivery time. This general formulation has been studied under many variants: when the edges have general lengths and unit costs, one may ask for sparse subgraphs that {\em exactly} maintain pairwise distances, i.e., \emph{distance preservers}, or for sparse subgraphs that {\em 
approximately} maintain pairwise distances, i.e., \emph{spanners}; when the edges have general costs and unit lengths, one may ask for cheap subgraphs that maintain pairwise connectivity, i.e., \emph{Steiner forests}.
Spanners and distance preservers are well-studied objects, which have found applicability in domains such as distributed computation \cite{Awerbuch, PelegS89}, data structures \cite{Yao1982SpacetimeTF, Alon87optimalpreprocessing}, routing schemes \cite{PelegU89a,CowenW04,RodittyTZ08,PachockiRSTW18},  approximate shorthest paths \cite{DorHZ00, Elkin05, BaswanaK10}, distance oracles \cite{BaswanaK10, Chechik15, PatrascuR14}, and property testing \cite{bhattacharyya2012transitive, AwasthiJMR16}.
Similarly, Steiner forests have been studied in the context of multicommodity network design \cite{gupta2003approximation,fleischer2006simple}, mechanism design and games \cite{konemann2008group,chawla2006optimal,roughgarden2007optimal,konemann2005primal}, computational biology \cite{pirhaji2016revealing,khurana2017genome}, and computational geometry \cite{borradaile2015polynomial,bateni2012euclidean}.

A slightly more special case of the pairwise weighted spanner problem was originally proposed by Kortsarz \cite{Kortsarz2001OnTH} and Elkin and Peleg \cite{elkin2007hardness}, where it was called the \emph{weighted $s$-spanner} problem. The precise goal in \cite{Kortsarz2001OnTH,elkin2007hardness} is to find a minimum-cost subgraph that connects \emph{all} the pairs of vertices in $G$, and each $Dist(s,t)=s \cdot d_G(s, t)$ for some integer $s$ called the {\em stretch} of the spanner. The work of \cite{elkin2007hardness} presents a comprehensive list of inapproximability results for different variants of sparse $s$-spanners. Even in the special case where edges have unit costs (i.e., the directed $s$-spanner problem defined below), the problem is hard to approximate within a factor of $O(2^{{\log^{1-\eps} n}})$ unless $NP \subseteq DTIME(n^{\operatorname{polylog} n})$. 

In the case when the edges have unit costs, the weighted $s$-spanner problem is called the \emph{directed $s$-spanner} problem. For low-stretch spanners, when $s=2$, there is a tight $\Theta(\log n)$-approximate algorithm \cite{elkin1999client,Kortsarz2001OnTH}; with unit lengths and costs, when $s=3,4$, there are $\tO(n^{1/3})$-approximation algorithms \cite{berman2013approximation,dinitz2016approximating}. For $s > 4$ with general lengths, the best known approximation is $\tO(n^{1/2})$ \cite{berman2013approximation}.

The \emph{pairwise spanner} problem considers graphs with unit edge costs, $D$ can be any subset of $V \times V$, and the target distances are general. The state-of-the-art is $\tO(n^{4/5})$-approximate for general lengths \cite{grigorescu2021online} 
and $\min\{\tO(k^{1/2+\ep}), \tO(n^{3/5+\ep})\}$-approximation for unit lengths \cite{chlamtavc2020approximating,grigorescu2021online}.

When the target distances are infinite and the edges have unit lengths, the pairwise weighted spanner problem captures the \emph{directed Steiner forest} problem. For the directed Steiner forest problem, there is an  $\min\{\tO(k^{1/2+\ep}), \tO(n^{2/3+\ep})\}$-approximate algorithm for general costs \cite{berman2013approximation,chekuri2011set} and an $\tO(n^{4/7})$-approximate algorithm  for unit costs \cite{abboud2018reachability}. 

The extreme case when there is only one terminal pair, namely the \emph{restricted shortest path} problem, is $NP$-hard, and it admits a fully polynomial-time approximation scheme (FPTAS)\footnote{An FPTAS for a minimization problem minimizes the objective within a factor of $1+\ep$ in polynomial time. The polynomial is in terms of the input size of the problem and $1/\ep$.}  \cite{hassin1992approximation,lorenz2001simple}.

\subsection{Our contributions}

\subsubsection{Pairwise weighted spanners}

To the best of our knowledge, none of the variants studied in the literature gives efficient sublinear-factor approximation algorithms for the pairwise weighted spanner problem, even in the case of unit edge length. 
Our main result for pairwise weighted spanners is stated as follows and proved in Section \ref{sec:wps}.

\begin{restatable}{definition}{defpws} \label{def:pws}
    \pwsul

    \textbf{Instance}: A directed graph $G = (V, E)$ with $n$ vertices and edge costs $c: E \to \mathbb{Q}_{\ge 0}$, edge lengths $\ell: E \to \{1,2,3,...,\poly(n)\}$, and a set $D \subseteq V \times V$ of vertex pairs and their corresponding pairwise distance bounds $Dist: D \rightarrow \mathbb{Q}_{\geq 0}$ (where $\textit{Dist}(s,t) \geq d_G(s,t)$) for every terminal pair $(s,t) \in D$.

\textbf{Objective}: Find a min-cost subgraph $H$ of $G$ such that
    $d_H(s,t) \leq Dist(s,t)$ for all $(s,t) \in D$.
\end{restatable}

\begin{restatable}{theorem}{thmplwps}\label{thm:plwps}
     For any constant $\ep > 0$, there is a polynomial-time randomized algorithm for \pwsul with approximation ratio $\tilde{O}(n^{4/5 + \ep})$, which succeeds in resolving all pairs in $D$ with high probability.\footnote{Throughout our discussion, when we say high probability we mean probability at least $1-1/n$.}
\end{restatable}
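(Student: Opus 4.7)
The plan is to extend the standard ``thick pairs via distance-preserving junction trees, thin pairs via random sampling'' framework (used by \cite{berman2013approximation, chlamtavc2020approximating, grigorescu2021online} for unit-cost spanner problems) so that it simultaneously handles arbitrary rational edge costs and polynomial integral edge lengths. First I would guess $\opt$ to within a factor of $2$ and discard any edge of cost exceeding $\opt$; using the FPTAS for \rsp, I would discretize the $\poly(n)$ length range into $O(\log n / \ep)$ scales while losing only a $(1+\ep)$ factor on each pair's distance. Fix a density threshold $\beta$ to be chosen at the end, and call a still-unsatisfied pair $(s,t)\in D$ \emph{$\beta$-thick} if some vertex $r$ lies on a feasible $s\leadsto r\leadsto t$ path (of total length $\leq Dist(s,t)$) for at least $\beta$ still-unsatisfied pairs, and \emph{$\beta$-thin} otherwise.

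For $\beta$-thick pairs the plan is to iteratively pick a popular hub $r$ and compute an approximately min-cost \dpwj rooted at $r$, i.e., an in-arborescence into $r$ and an out-arborescence from $r$ such that $d(s,r)+d(r,t)\leq Dist(s,t)$ for every covered pair $(s,t)$. The computation uses an LP whose variables are edge indicators of the two arborescences, whose constraints enforce the per-pair length budget, and whose objective is total edge cost; randomized rounding by flow-path decomposition in the spirit of \cite{chekuri2011set, berman2013approximation} is expected to cover $\Omega(\beta)$ pairs at cost $\tilde{O}(\opt)$ per iteration, so all thick pairs are handled at total cost $\tilde{O}((k/\beta)\cdot \opt)$. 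For $\beta$-thin pairs I would sample $\tilde{\Theta}(n/\beta)$ vertices uniformly at random and, from each sample $v$, compute approximately distance-respecting in- and out-arborescences to every candidate terminal via the \rsp FPTAS. A hitting-set argument guarantees with high probability that every thin pair's feasible path passes through some sample, and charging the arborescence costs yields total cost $\tilde{O}((n/\beta)\cdot \opt)$ on this side.

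Balancing the $k/\beta$ bound from the junction-tree step against the $n/\beta$ bound from sampling, together with the $\tilde{O}(\beta)$-type cost contribution hidden in each arborescence construction, should yield $\beta\approx n^{1/5}$ and the claimed $\tilde{O}(n^{4/5+\ep})$ approximation (matching the unit-cost bound of \cite{grigorescu2021online}). The main technical obstacle, and likely the bulk of the argument, will be rounding the \dpwj LP in the cost-weighted setting: in the unit-cost case the subgraph's edge count follows directly from a flow decomposition, but here each rounded edge contributes its arbitrary cost, so the analysis must couple path-sampling with a cost-weighted covering bound while preserving every covered pair's integral length constraint $Dist(s,t)$. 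The $n^{\ep}$ slack is where this cost/length coupling together with the \rsp discretization is absorbed.
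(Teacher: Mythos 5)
Your decomposition into thick and thin pairs is not the one the paper uses, and as stated it breaks down in both branches. You define thick pairs via \emph{hub popularity} (some vertex $r$ lies on feasible $s \leadsto r \leadsto t$ paths for at least $\beta$ pairs) and thin pairs as the complement, then handle thick pairs by junction trees and thin pairs by random sampling plus a hitting-set argument. The paper instead classifies a pair $(s,t)$ as \emph{thick} when the local graph $G^{s,t}$ of feasible $s\leadsto t$ paths of cost at most $L$ has at least $n/\beta$ vertices (with $\beta=n^{3/5}$, $L=\tau/n^{4/5}$), and handles those by sampling; the remaining thin pairs are split into a \emph{costly} sub-case (handled by distance-preserving weighted junction trees) and a \emph{cheap} sub-case (handled by an LP plus randomized rounding against anti-spanners). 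Your ``no popular hub'' condition gives no lower bound on the size of a pair's local graph, so the hitting-set argument does not apply: a thin pair in your sense could have a feasible path on only two or three vertices, and $\tilde{\Theta}(n/\beta)$ uniform samples will miss it with probability close to $1$. Your cost accounting on that branch is also not justified: with arbitrary edge costs an in/out arborescence from a sampled hub is not $\tilde{O}(\opt)$ unless each path is capped at cost $L$, in which case each sample costs up to $n\cdot L$ and the totals do not balance at $\beta\approx n^{1/5}$ (with $|D|$ as large as $n^2$, $(k/\beta)\cdot\opt$ on the junction-tree side is $n^{9/5}\opt$).

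The harder conceptual gap is that you omit the third case entirely. The paper's junction-tree argument (its Claim 2.4) only yields low density when at least half the remaining pairs \emph{require} paths of cost at least $L$ in the optimum; when most pairs admit cheap feasible paths, junction trees give no useful density bound, and the paper resorts to a flow LP with a resource-constrained shortest-path separation oracle and rounds it via a union bound over \emph{anti-spanners}. That anti-spanner counting step is where the $n^{4/5}$ exponent actually comes from (the paper explicitly notes it cannot replicate the tighter $n^{2/3}$-style arborescence count of Berman et al.\ once both lengths and costs are present). Without this case your argument has no way to resolve pairs whose cheap feasible paths have few vertices, so the proposal as written does not close. The ingredients you do mention — the RSP FPTAS, distance-preserving weighted junction trees, and the $n^{\ep}$ slack from height reduction — are all genuinely used in the paper, but they are arranged quite differently and the missing LP/anti-spanner machinery is essential.
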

 
The \pwsul problem is equivalent to the problem of finding a minimum-cost Steiner forest under pairwise distance constraints, and hence our result is the first polynomial-time $o(n)$-approximate algorithm for the directed Steiner forests with distance constraints. This problem is hard to approximate within a factor of $O(2^{{\log^{1-\eps} n}})$ unless $NP \subseteq DTIME(n^{\operatorname{polylog} n})$ even for the special case when all vertex pairs are required to be connected and the stretch $s \ge 5$ \cite{Kortsarz2001OnTH}.

\subsubsection{All-pair weighted distance preservers}

When the target distances are the distances in the given graph $G$, the spanner problem captures the \emph{distance preserver} problem. When edges have unit costs, there exists a distance preserver of cost $O(n)$ if the number of the source vertices is $O(n^{1/3})$ \cite{bodwin2021new}. When edges have unit costs and lengths, the state-of-the-art result is $\tO(n^{3/5+\ep})$-approximate \cite{chlamtavc2020approximating}.
We consider the case where the terminal set consists of \emph{all} vertex pairs and the subgraph is required to preserve the distances of all vertex pairs. This problem is called \awdpl. Our result for this problem is stated as follows and proved in Section \ref{sec:allpair}.

\begin{restatable}{definition}{defawdp} \label{def:awdp}
    \awdpl
    
    \textbf{Instance}: A directed graph $G = (V, E)$ with edge costs $c: E \to \mathbb{Q}_{\ge 0}$, edge lengths $\ell: E \to \{1,2,3,...,\poly(n)\}$. 
    
    \textbf{Objective}: Find a min-cost subgraph $H$ of $G$ such that $d_H(s,t) = d_G(s,t)$, for all $(s,t) \in V \times V$.
\end{restatable}

\begin{restatable}{theorem}{thmawdpl}\label{thm:awdpl}
     For any constant $\ep > 0$, there is a polynomial-time randomized algorithm for \awdpl with approximation ratio $\tilde{O}(n^{1/2+\ep})$, which succeeds in resolving all pairs in $V \times V$ with high probability.
\end{restatable}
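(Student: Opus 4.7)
The plan is to adapt the LP-rounding and distance-preserving weighted junction tree framework behind Theorem \ref{thm:plwps} to this setting, exploiting two features of \awdpl: the number of terminal pairs is $k = n^2$ (so the denominator in a density argument is huge) and the distance bounds are exact equalities $Dist(s,t) = d_G(s,t)$. The LP is a flow relaxation on an auxiliary layered graph: since $\ell(e)\in\{1,\ldots,\poly(n)\}$ makes every relevant distance $L_{st}:=d_G(s,t)$ polynomial in $n$, for each pair $(s,t)$ I build a DAG on $V\times\{0,1,\ldots,L_{st}\}$ in which each edge $(u,v)\in E$ lifts to arcs $((u,i),(v,i+\ell(u,v)))$, and require one unit of flow from $(s,0)$ to $(t,L_{st})$ with $x_e$ upper bounding the aggregate flow through all lifts of $e$. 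Minimizing $\sum_e c(e)\,x_e$ gives a lower bound on $\opt$, and feasibility encodes the existence of distance-preserving $s\leadsto t$ paths for every pair.

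The outer algorithm is an iterative density-based cover: at each round, find a cheap subgraph that preserves the required distances for many remaining pairs, add it to the output, and remove those pairs; the approximation ratio is the worst-case density normalized by the per-pair LP lower bound. The key structural claim is that such a cheap subgraph always exists in one of two forms. Fix a threshold $\beta = n^{1/2}$ and an optimal $H^*$ with $c(H^*)=C^*$, and for each pair $(s,t)$ fix a witness shortest path $P^*_{st}\subseteq H^*$. If $|P^*_{st}|\le\beta$ (a \emph{short} pair), the path itself is a cheap distance-preserving $s\leadsto t$ structure and can be extracted via the restricted shortest path FPTAS \cite{hassin1992approximation,lorenz2001simple} against the LP support. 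If $|P^*_{st}|>\beta$ (a \emph{long} pair), a uniformly random sample of $\tilde O(n/\beta)=\tilde O(\sqrt n)$ vertex roots hits $P^*_{st}$ with high probability, and for any hit root $r$, the in- and out-shortest-path trees of $H^*$ at $r$ together form a distance-preserving junction tree of cost $\le 2C^*$ that resolves all long pairs routed through $r$. A set-cover-style charging then shows that in each round the best density is $\tilde O(\sqrt n)\cdot C^*/n^2$, giving the $\tilde O(n^{1/2+\eps})$ approximation after $O(\log n)$ rounds.

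The main obstacle will be converting this existential density bound into an efficient, LP-guided subroutine. Given a candidate root $r$, we must extract an integral low-cost distance-preserving in/out arborescence whose cost is comparable to its fractional LP mass at $r$; this requires careful LP rounding on the layered graph (height-based partitioning by approximate-distance buckets), which is precisely where the extra $n^\eps$ factor arises, together with the $(1+\eps)$ error of the FPTAS. A secondary subtlety is combining the short-pair and long-pair treatments into a single density argument that does not double-count $C^*$, and amplifying the randomness so that all $n^2$ pairs are resolved with high probability simultaneously.
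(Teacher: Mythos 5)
Your proposal takes a genuinely different route from the paper, and it has gaps that you partly acknowledge but do not close.

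The paper does \emph{not} use an iterative density/junction-tree framework for \awdpl. Instead, it splits pairs into thick and thin (by the number of vertices in the \emph{local graph} $G^{s,t}$ of vertices on shortest $s\leadsto t$ paths, not by the hop-length of a witness path as you propose), with $\beta=n^{1/2}$, and handles each side in a single shot. For thick pairs it samples $\tilde O(\sqrt n)$ roots and, for each sampled root $v$, adds an approximately optimal \emph{single-source} and \emph{single-sink} weighted distance preserver rooted at $v$, produced by the $\tilde O(k^\eps)$-approximation for \swspl (Theorem~\ref{thm:oswps}). The decisive observation — which your sketch never makes — is that the optimal all-pair distance preserver is itself a feasible single-source (or single-sink) distance preserver from any root $v$, so $\opt(S_v^{source}),\,\opt(S_v^{sink})\le\opt$; this is what lets each of the $\tilde O(\sqrt n)$ calls be charged $\tilde O(n^\eps\cdot\opt)$ and caps the thick-pair cost at $\tilde O(n^{1/2+\eps}\opt)$ without any junction-tree approximation, any layered-graph LP, or any density accounting. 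For thin pairs the paper simply runs the anti-spanner LP of Berman et al.\ with the cost objective and invokes their rounding corollary to get $\tilde O(\sqrt n\cdot\opt)$.

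The gaps in your route: (i) you reason about in/out shortest-path trees of an unknown optimum $H^*$ and defer the algorithmic extraction to ``careful LP rounding on the layered graph,'' but that is exactly the hard step you would need to fill in, and the available distance-preserving junction-tree approximation (Lemma~\ref{le:poly_junction_tree_weighted_distance}) is a \emph{density} approximation, not a per-root cost guarantee, so it does not directly give a bounded-cost arborescence at a sampled root; (ii) for short pairs, extracting one feasible cheap path per pair via the RSP FPTAS does not by itself yield a useful density bound, since a single path can have cost $\Theta(\opt)$ — you would need the anti-spanner LP rounding argument (or something equivalent exploiting that short pairs live in small local subgraphs) to aggregate; and (iii) your thick/thin split by hop-length of a witness path is not the right notion for the hitting-set argument that feeds the anti-spanner union bound, which counts vertices in the local graph. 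Your plan could in principle be repaired — it is essentially the \pwspl machinery specialized to $k=n^2$ — but it is considerably heavier than, and lands weaker guarantees than, the paper's direct two-part construction, which is enabled precisely by the exact-distance, all-pairs structure you listed as motivation but did not fully exploit.
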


Beside distance preservers, there are other previous special-case results for the all-pair weighted spanner problem. When edges have unit costs, the state-of-the-art is an $\tO(n^{1/2})$-approximation algorithm \cite{berman2013approximation}. When there are no distance constraints, this problem is termed the \emph{minimum strongly connected subgraph} problem and is equivalent to the all-pair Steiner forest problem. This problem is $NP$-hard and does not admit a polynomial-time approximation scheme if $NP \neq P$ \cite{khuller1995approximating}. The best algorithm is a $3/2$-approximation \cite{vetta2001approximating}.

\subsubsection{Online weighted spanners}
Next, we turn to online weighted spanners. In the online problem, the directed graph, the edge lengths, and the edge costs are given offline. The vertex pairs and the corresponding target distances arrive one at a time, in an online fashion, at each time stamp. The algorithm must irrevocably select edges at each time stamp and the goal is to minimize the cost, subject to the target distance constraints. We call this problem the \opwspl\ problem. For notation convenience, the vertex pair $(s_i, t_i)$ denotes the $i$-th pair that arrives online.

\begin{restatable}{definition}{defopws} \label{def:opws}
    \opwspl

\textbf{Instance}: A directed graph $G = (V, E)$ with edge costs $c: E \to \mathbb{Q}_{\ge 0}$, edge lengths $\ell: E \to \{1,2,3,...,\poly(n)\}$, and vertex pairs $D = \{(s_i,t_i) \in V \times V \mid i \in [k]\}$ ($k$ is unknown) with their corresponding pairwise distance bounds $Dist(s_i,t_i) \in \mathbb{Q}_{\geq 0}$ (where $\textit{Dist}(s_i,t_i) \geq d_G(s_i,t_i)$) arrive online one at a time. 
    
\textbf{Objective}: Upon the arrival of $(s_i,t_i)$ with $Dist(s_i,t_i)$, construct a min-cost subgraph $H$ of $G$ such that
    $d_H(s_i,t_i) \leq Dist(s_i,t_i)$ by irrevocably adding edges from $E$.
\end{restatable}

The performance of an online algorithm is measured by its {\em competitive ratio}, namely the ratio between the cost of the online solution and that of an optimal offline solution. With unit edge costs, the best algorithm is $\tO(n^{4/5})$-competitive; with unit edge costs and lengths, the state-of-the-art is $\min\{\tO(k^{1/2 + \ep}), \tO(n^{2/3 + \ep})\}$-competitive \cite{grigorescu2021online}. Our result for \opwspl is stated as follows and proved in Section \ref{sec:online}.

\begin{restatable}{theorem}{thmklwps}\label{thm:oklwps}
     For any constant $\ep > 0$, there is a polynomial-time randomized online algorithm for \opwspl with competitive ratio $\tilde{O}(k^{1/2 + \ep})$, which succeeds in resolving all pairs in $D$ with high probability.
\end{restatable}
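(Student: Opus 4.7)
The plan is to adapt the junction-tree framework used for the offline \pwsul result (Theorem \ref{thm:plwps}) to the online setting, but parameterize everything in terms of $k$ rather than $n$, and use the \oswspl algorithm (whose competitive ratio is $\tilde{O}(k^\ep)$) as the main subroutine. At a high level, I will fix a threshold $\beta = k^{1/2}$ and argue that for any arriving pair $(s_i,t_i)$, there is a near-optimal way to satisfy its distance constraint either (i) by routing $s_i \leadsto v \leadsto t_i$ through some ``popular'' junction vertex $v$ that will ultimately serve $\Omega(\beta)$ pairs, or (ii) by paying for an approximate restricted shortest $s_i\leadsto t_i$ path more or less in isolation, because $(s_i,t_i)$ belongs to a ``thin'' part of the optimum that has few pairs per vertex.

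First I would set up the online analogue of the \dpwj machinery: for every potential junction $v \in V$ and every pair of integer length budgets $(L_1,L_2)$ with $L_1 + L_2$ a target distance value (there are only $\poly(n)$ choices because $\ell$ is polynomially bounded), maintain two online single-source weighted spanner instances, one for the in-arborescence into $v$ with stretch budget $L_1$, the other for the out-arborescence from $v$ with stretch budget $L_2$. When $(s_i,t_i,Dist(s_i,t_i))$ arrives, I would either (a) feed $s_i$ as a new terminal to some in-instance and $t_i$ to the matching out-instance — this is the ``thick'' case, satisfied by the \oswspl guarantee with competitive ratio $\tilde{O}(k^\ep)$; or (b) buy an approximate restricted shortest $s_i\leadsto t_i$ path within distance $Dist(s_i,t_i)$ using the FPTAS for \rsp, paying a factor $\tilde{O}(\beta)$ over its share of $\opt$. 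The standard charging argument — that any optimal subgraph $H^*$ can be decomposed into $\tilde O(k/\beta)$ junction trees plus a residual component where each vertex serves $O(\beta)$ pairs — controls the total cost: the thick part costs $\tilde{O}((k/\beta)\cdot k^\ep)\cdot \opt$, and the thin part costs $\tilde{O}(\beta)\cdot \opt$. Balancing with $\beta=k^{1/2}$ gives the claimed $\tilde{O}(k^{1/2+\ep})$ bound.

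The core technical obstacle is that in the online setting I do not know in advance which vertex $v$ will be ``popular,'' nor how the budget $Dist(s_i,t_i)$ should be split between the two halves of the junction path. My approach is to pay for this by enumeration: there are only $n$ candidate junctions and $\poly(n)$ budget splits, so I can afford to instantiate that many parallel \oswspl subroutines in parallel and route each arriving pair through the cheapest feasible one in the current solution, charging the increment. The competitive analysis then only pays an extra $\polylog(n)$ factor absorbed into the $\tilde O$. A secondary obstacle is ensuring that the online single-source subroutine used per junction actually covers the correct stretched-distance constraint; this is handled by building the \oswspl instance on an auxiliary layered graph that encodes the length budget into the vertex set, exactly as in the offline reductions used for Theorems \ref{thm:plwps} and \ref{thm:awdpl}, keeping the blow-up polynomial in $n$.

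Finally I would verify the high-probability success claim: the \oswspl subroutine succeeds with high probability on each of the $\poly(n)$ parallel instances, so a union bound over $\poly(n)$ events preserves ``with high probability'' in the statement of Theorem \ref{thm:oklwps}. The overall running time is polynomial because each arrival triggers at most $\poly(n)$ updates to online single-source instances plus one FPTAS call.
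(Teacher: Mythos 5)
Your proposal takes a genuinely different route from the paper, but the route has gaps that I do not think can be patched without essentially reconstructing the paper's argument.

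The paper's proof of Theorem~\ref{thm:oklwps} does not split pairs into thick and thin in the online setting, nor does it call the restricted-shortest-path FPTAS per pair, nor does it invoke \oswspl as a subroutine. Instead it proves two separate facts: (i)~Lemma~\ref{lem:sqrt-k-den}, a \emph{density} bound showing a single distance-preserving weighted junction tree of density at most $\opt/\sqrt{k}$ always exists, which by the standard telescoping argument $\sum_j (n_{j-1}-n_j)\opt/\sqrt{n_{j-1}} = O(\sqrt{k})\opt$ implies that the \emph{optimal junction-tree solution} $\opt_{junc}$ satisfies $\opt_{junc} = O(\sqrt{k})\opt$; and (ii)~Theorem~\ref{thm:kep}, an online algorithm that is $\tO(k^\ep)$-competitive against $\opt_{junc}$, proved by reducing to \oslc on a layered graph plus height reduction. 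The two facts compose multiplicatively to give $\tO(k^{1/2+\ep})$. Crucially, the "thin" case of Lemma~\ref{lem:sqrt-k-den} is not handled outside the junction-tree framework: a single low-cost path is itself a degenerate junction tree, so the online junction-tree algorithm handles both cases uniformly and the algorithm never needs to classify an arriving pair as thick or thin.

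Your proposal has the following concrete problems. First, the claim that "any optimal subgraph $H^*$ can be decomposed into $\tilde O(k/\beta)$ junction trees plus a residual component where each vertex serves $O(\beta)$ pairs" is asserted as standard but is not proved and is not the statement actually needed. What can be shown (Lemma~\ref{lem:sqrt-k-den}) is an existence statement about a single low-density junction tree; converting that into a decomposition requires the greedy iteration, which is precisely the step the paper notes cannot be performed online (see Section~\ref{subsubsec:online} and Remark~\ref{re:online_to_offline}). Second, your cost accounting for the "thick" side does not hold: you instantiate $n \cdot \poly(n)$ parallel \oswspl subroutines, and each is $\tO(k^\ep)$-competitive against the optimum of \emph{its own} single-source instance, not against a share of $\opt$. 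Without the density machinery you have no bound relating the sum of those single-source optima to $\opt$, and "route through the cheapest feasible one in the current solution, charging the increment" is a greedy rule with no competitive analysis attached. Third, the "thin" side is supposed to pay $\tO(\beta)$ times a pair's "share of $\opt$" via the RSP FPTAS, but you never define the share nor show the shares sum to $O(\opt)$; the RSP is computed on $G$, not on $H^*$, so a per-vertex load bound in $H^*$ does not translate into a cost bound on those paths. Fourth, even if the decomposition existed, the online algorithm cannot tell at arrival time whether $(s_i,t_i)$ is thick or thin, so the dispatch between option (a) and option (b) is undetermined; the paper avoids this entirely by never making the distinction online. Your instinct to reuse the single-source online subroutine and a layered-graph encoding of the length budget is pointed in a sensible direction, and indeed the paper's Theorem~\ref{thm:kep} proof does use a layered graph and height reduction, but the glue holding the pieces together is the density bound plus the online Steiner label cover reduction, neither of which your sketch supplies.
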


In a special case of \pwspl where the source vertex $s \in V$ is fixed, we call this problem \swspl. Without distance constraints, this problem is equivalent to the directed Steiner tree problem.\footnote{Throughout the paper, the term \emph{without distance constraints} means that the target distances are infinity. This is equivalent to the connectivity problem.} The best algorithm for the directed Steiner tree problem is $O(k^\ep)$-approximate \cite{charikar1999approximation}.

\begin{restatable}{definition}{defssws} \label{def:ssws}
    \swspl
   
    \textbf{Instance}: A directed graph $G = (V, E)$ with edge costs $c: E \to \mathbb{Q}_{\ge 0}$, edge lengths $\ell: E \to \{1,2,3,...,\poly(n)\}$, and a set $D \subseteq \{s\} \times V$ of vertex pairs and their corresponding pairwise distance bounds $Dist: D \rightarrow \mathbb{Q}_{\geq 0}$ (where $\textit{Dist}(s,t) \geq d_G(s,t)$) for every terminal pair $(s,t) \in D$). 
    
\textbf{Objective}: Find a min-cost subgraph $H$ of $G$ such that
    $d_H(s,t) \leq Dist(s,t)$ for all $(s,t) \in D$.
\end{restatable}

When $D \subseteq \{s\} \times V$, a single-source weighted spanner connects $s$ to the sinks. We say that $s$ is the \emph{root} of the single-source weighted spanner and the single-source weighted spanner is \emph{rooted at} $s$. The definition for a single-sink weighted spanner where the terminal pairs share the same sink is defined similarly.

The online version of \swspl is termed \oswspl. For notation convenience, the vertex pair $(s, t_i)$ denotes the $i$-th pair that arrives online.

\begin{restatable}{definition}{defossws} \label{def:ossws}
    \oswspl 
    
\textbf{Instance}: A directed graph $G = (V, E)$ with edge costs $c: E \to \mathbb{Q}_{\ge 0}$, edge lengths $\ell: E \to \{1,2,3,...,\poly(n)\}$, and vertex pairs $D = \{(s,t_i) \mid t_i \in V, i \in [k]\}$ ($k$ is unknown) with their corresponding pairwise distance bounds $Dist(s,t_i) \in \mathbb{Q}_{\geq 0}$ (where $\textit{Dist}(s,t_i) \geq d_G(s,t_i)$) arrive online one at a time. 
    
\textbf{Objective}: Upon the arrival of $(s,t_i)$ with $Dist(s,t_i)$, construct a min-cost subgraph $H$ of $G$ such that
    $d_H(s,t_i) \leq Dist(s,t_i)$ by irrevocably selecting edges from $E$.
\end{restatable}

The state-of-the-art result for online directed Steiner trees is $\tO(k^{\ep})$-competitive implied by a more general online buy-at-bulk network design framework \cite{cekp}. Our result is stated as follows and proved in Section \ref{sec:online}.

\begin{restatable}{theorem}{thmswps}\label{thm:oswps}
     For any constant $\ep > 0$, there is a polynomial-time randomized online algorithm for \oswspl with approximation ratio $\tilde{O}(k^{\ep})$, which succeeds in resolving all pairs in $D$ with high probability.
\end{restatable}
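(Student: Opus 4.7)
The plan is to reduce \oswspl to online directed Steiner tree on a polynomial-size layered auxiliary graph, and then invoke the $\tO(k^\ep)$-competitive algorithm of \cite{cekp}. Let $L$ be an upper bound on $\max_i \lfloor Dist(s, t_i) \rfloor$; since edge lengths are positive integers of magnitude $\poly(n)$ and any meaningful distance is at most the diameter of $G$, we may take $L = \poly(n)$. Construct $G^\star = (V^\star, E^\star)$ with $V^\star = V \times \{0, 1, \ldots, L\}$. For each $(u,v) \in E$ and each $d$ with $d + \ell(u,v) \le L$, include a ``travel'' edge $((u, d), (v, d + \ell(u,v)))$ of cost $c(u,v)$; additionally include zero-cost ``wait'' edges $((v, d), (v, d+1))$ for every $v, d$. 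Root at $r = (s, 0)$. Upon the online arrival of the $i$-th pair $(s, t_i)$ with budget $D_i = Dist(s, t_i)$, declare the terminal $(t_i, \lfloor D_i \rfloor) \in V^\star$ to the online directed Steiner tree algorithm of \cite{cekp} executed on $G^\star$.

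Let $T \subseteq G^\star$ denote the Steiner tree maintained by that subroutine, and output $H \subseteq G$ defined as the projection of $T$, i.e., the set of $G$-edges whose lift appears in $T$. Then $c(H) \le c(T)$ because a single $G$-edge is counted once in $H$ yet may appear at multiple layers in $T$. For every terminal $t_i$, the tree path in $T$ from $(s, 0)$ to $(t_i, \lfloor D_i \rfloor)$ projects to a walk in $H$ from $s$ to $t_i$ whose $\ell$-length equals the net increase in the $d$-coordinate, since travel edges advance $d$ by exactly their $\ell$-value and wait edges do not change the $G$-vertex. Therefore $d_H(s, t_i) \le \lfloor D_i \rfloor \le D_i$, which establishes online feasibility.

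The crux of the analysis is showing $\opt(G^\star) \le \opt(G)$ for the corresponding Steiner tree instance. Given any feasible $H^\star \subseteq G$, fix a shortest-path tree $T_{H^\star}$ of $H^\star$ rooted at $s$, and lift each edge $(u,v) \in T_{H^\star}$ to the $G^\star$-edge $((u, d_{H^\star}(s, u)), (v, d_{H^\star}(s, v)))$. Because $T_{H^\star}$ is a shortest-path tree, this lifting yields a subtree of $G^\star$ rooted at $(s, 0)$ that contains $(t_i, d_{H^\star}(s, t_i))$ for every terminal; since $d_{H^\star}(s, t_i) \le D_i$, appending the appropriate zero-cost wait edges extends it to $(t_i, \lfloor D_i \rfloor)$. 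The lifted tree has cost exactly $c(T_{H^\star}) \le c(H^\star)$, so $\opt(G^\star) \le \opt(G)$. Combined with the $\tO(k^\ep)$ competitive ratio of \cite{cekp} applied on $G^\star$ (whose size is $\poly(n)$, so the polylogarithmic dependence on $|V^\star|$ is absorbed into $\tO$) and with the projection above, we obtain an $\tO(k^\ep)$-competitive algorithm for \oswspl; the single-sink case follows by edge reversal.

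The main obstacle is the lifting argument in the third paragraph: one must ensure that passing to the layered graph does not inflate the optimum. The single-source structure is essential, as a shortest-path tree rooted at $s$ assigns each vertex a canonical ``layer,'' so the natural lift is a tree of the same cost. For the general pairwise problem the same lifting would charge an edge at up to $k$ distinct layers across different $(s_i, t_i)$ paths, which is precisely why the $\tO(k^\ep)$ bound is specific to the single-source (or, symmetrically, single-sink) setting; the pairwise case of Theorem~\ref{thm:oklwps} requires additional techniques.
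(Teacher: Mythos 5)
Your proof is correct and takes a genuinely different, more self-contained route than the paper. The paper derives Theorem~\ref{thm:oswps} as a corollary of a unified machine (Theorem~\ref{thm:kep} and Lemma~\ref{lem:sqrt-k-den}): it constructs an online \emph{junction-tree} solution by passing through an online Steiner-label-cover reformulation over a union of layered graphs $G_r$ for \emph{all} candidate roots $r$, applies the height-reduction lemma, and then observes that in the single-source case the optimal solution is itself a junction tree rooted at $s$, so the $\alpha=O(\sqrt{k})$ loss collapses to $\alpha=1$. You instead black-box the $\tO(k^\ep)$-competitive online directed Steiner tree algorithm of \cite{cekp} on a single layered graph rooted at $(s,0)$, using zero-cost wait edges to collapse the multiplicity of acceptable terminal layers $\{(t_i,j):j\le\lfloor D_i\rfloor\}$ into the single terminal $(t_i,\lfloor D_i\rfloor)$; this avoids the Steiner-label-cover and height-reduction machinery entirely and replaces the ``$\alpha=1$'' observation with a clean shortest-path-tree lifting argument for $\opt(G^\star)\le\opt(G)$. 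Two harmless imprecisions: (i) the $\ell$-length of the projected walk is \emph{at most} (not equal to) the net increase in $d$, because wait edges advance $d$ without contributing length -- the conclusion $d_H(s,t_i)\le\lfloor D_i\rfloor$ is unaffected; (ii) the shortest-path tree $T_{H^\star}$ should implicitly be pruned to the union of $s\leadsto t_i$ tree paths so that all lifted layers stay within $\{0,\dots,L\}$. What the paper's route buys is uniform treatment of the pairwise and single-source cases from a single online junction-tree theorem; what your route buys is a shorter, modular proof for the single-source case that cleanly isolates why single-source is easier (each vertex receives one canonical layer via the shortest-path tree), which you correctly note fails to extend to the pairwise case.
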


Our online framework essentially generalizes the online Steiner forest problem by allowing distance constraints when edge lengths are positive integers of magnitude $\poly(n)$. We note that the online algorithms also imply efficient algorithms for the corresponding offline problems with the same approximation ratios.

\subsubsection{Summary}

We summarize our main results for weighted spanners in Table \ref{table:sum} by listing the approximation (competitive) ratios and contrast them with the corresponding known approximation (competitive) ratios. We note that offline $\tO(k^{1/2 + \ep})$-approximate \pwspl and offline $\tO(k^{\ep})$-approximate \swspl can be obtained by our online algorithms.

\begin{table}[!htb]
\begin{center}
\def\arraystretch{1.2}
\begin{tabular}{|*3{l|}}
\hline
\textbf{Problem} & \textbf{Our Results} & \textbf{Previous Results}\\
\hline
\begin{tabular}{@{}l@{}} \textsc{Pairwise} \\ \textsc{Weighted} \\ \textsc{Spanner} \end{tabular} & \begin{tabular}{@{}l@{}} $\tO(n^{4/5 + \ep})$ (Thm \ref{thm:plwps}) \\ $\tO(k^{1/2+\ep})$ (Thm \ref{thm:oklwps})\end{tabular} &  \begin{tabular}{@{}l@{}} $\tO(n^{4/5})$ (unit edge costs) \cite{grigorescu2021online} \\ $\tO(n^{3/5+\ep})$ (unit edge costs and lengths) \cite{chlamtavc2020approximating} \\ $\min\{\tO(k^{1/2+\ep}),\tO(n^{2/3+\ep})\}$ (without distance constraints) \cite{chekuri2011set,berman2013approximation} \\ $\tO(n^{4/7+\ep})$ (unit edge costs and lengths, without distance \\
constraints) \cite{abboud2018reachability} \end{tabular} \\
\hline
\begin{tabular}{@{}l@{}} \textsc{All-pair} \\ \textsc{Weighted} \\ \textsc{Spanner}  \end{tabular} & \begin{tabular}{@{}l@{}} $\tO(n^{1/2 + \ep})$ (distance \\ preservers, Thm \ref{thm:awdpl})\end{tabular} & \begin{tabular}{@{}l@{}} $\tO(n^{1/2})$ (unit edge costs) \cite{berman2013approximation} \\ $3/2$ (without distance constraints) \cite{vetta2001approximating} \end{tabular} \\
\hline
\begin{tabular}{@{}l@{}} \textsc{Online} \\ \textsc{Pairwise} \\ \textsc{Weighted} \\ \textsc{Spanner} \end{tabular} &  $\tO(k^{1/2+\ep})$ (Thm \ref{thm:oklwps}) &  \begin{tabular}{@{}l@{}} $\tO(n^{4/5})$ (unit edge costs) \cite{grigorescu2021online} \\ $\min\{\tO(k^{1/2+\ep}),\tO(n^{2/3+\ep})\}$ (unit edge costs and lengths) \cite{grigorescu2021online} \\ $\tO(k^{1/2+\ep})$ (without distance constraints) \cite{cekp} \end{tabular} \\
\hline
\begin{tabular}{@{}l@{}} \textsc{Single-} \\ \textsc{Source} \\ \textsc{Weighted} \\ \textsc{Spanner} \end{tabular} & \begin{tabular}{@{}l@{}} $\tO(k^{\ep})$ (also holds \\ for online, Thm \ref{thm:oswps}) \end{tabular} &  \begin{tabular}{@{}l@{}} $O(k^{\ep})$ (without distance constraints) \cite{charikar1999approximation} \\ $\tO(k^{\ep})$ (online, without distance constraints) \cite{cekp} \end{tabular} \\
\hline
\end{tabular}
\caption{Summary of the approximation and competitive ratios. Here, $n$ refers to the number of vertices and $k$ refers to the number of terminal pairs. All edge lengths are positive integers in $\poly(n)$ and all edge costs are non-negative rational numbers.  We note that \pwspl without distance constraints is equivalent to the directed Steiner forest problem. The all-pair weighted spanner problem without distance constraints is equivalent to the all-pair Steiner forest problem or the minimum strongly connected subgraph problem.} \label{table:sum}
\end{center}
\end{table}

\subsection{High-level technical overview}

Most of the literature on approximation algorithms for \emph{offline} spanner problems \cite{dinitz2011directed,bhattacharyya2012transitive,berman2013approximation,feldman2012improved,chlamtavc2020approximating,grigorescu2021online} partition the
terminal pairs into two types: thin or thick. 
A pair $(s, t)\in D$ is {\em thin} if the graph $G^{s,t}$ induced by feasible $s \leadsto t$ paths has a small number of vertices, and {\em thick} otherwise.
To connect each thick terminal pair $(s,t)$, it is sufficient to sample vertices from the graph $G$ to hit $G^{s,t}$, and then add shortest-path in-and-out-arborescences rooted at the sampled vertices.
To connect each thin terminal pair $(s,t)$, one uses a flow-based linear program (LP) and then rounds the solution.

\subsubsection{Pairwise Weighted Spanners}

For this problem, the goal is to approximately minimize the total cost while maintaining the required distances between terminal pairs, so it turns out that the approach for directed Steiner forests \cite{feldman2012improved,berman2013approximation} is more amenable to this formulation. The approach for directed Steiner forests \cite{berman2013approximation,feldman2012improved} is slightly different, namely, thick pairs are connected by adding cheap paths that contain at least one sampled vertex. In the Steiner forest algorithms, there are usually three cases for the terminal pairs: 1) pairs that are thick and have low-cost connecting paths, 2) the majority of the remaining pairs have high-cost connecting paths, and 3) the majority of the remaining pairs have low-cost connecting paths.

With distance constraints, we have to modify the analysis for \emph{all} three cases. The actual implementation of the strategy requires several new ideas and it significantly departs from the analysis of \cite{feldman2012improved,berman2013approximation} in several aspects, as we describe below.

In our first case, we cannot simply add cheap paths because they might violate the distance requirement.
Instead, we add \emph{feasible} cheap paths that satisfy the distance requirements, in order to connect the terminal pairs. 
For this purpose, we use the restricted shortest path FPTAS from \cite{hassin1992approximation,lorenz2001simple} as our subroutine.

The remaining two cases are resolved by using an iterative greedy algorithm based on a \emph{density} argument. In each iteration, the greedy algorithm constructs a partial solution $E' \subseteq E$ with low density. We define the density of $E'$ to be the ratio of the total edge cost of $E'$ to the number of pairs connected by $E'$. Iteratively adding low-density partial solutions leads to a global solution of approximately minimum cost.

In the second case, \cite{feldman2012improved,berman2013approximation} use the low-density \emph{junction trees} (the union of an in-arboresence and an out-arboresence rooted at the same vertex) from \cite{chekuri2011set} in order to connect pairs with high-cost paths. 
However, the junction tree approximation in \cite{chekuri2011set} cannot handle the distance constraints in our setting. Fortunately, with slight modifications, the junction tree approximation from  \cite{chlamtavc2020approximating} can be made to handle our distance requirements.

In the third case, \cite{feldman2012improved,berman2013approximation} formulate an LP where each edge has an indicator variable, then round the LP solution, and argue that with high probability, one can obtain a low-density partial solution that connects the terminal pairs with cheap paths. 
Two challenges arise in our setting.
First, the LP formulation is different from the one in \cite{feldman2012improved,berman2013approximation}
because we have to handle both distance and cost requirements.
We resolve these constraints by using a different separation oracle from the previous literature \cite{hassin1992approximation,lorenz2001simple}, namely we use the FPTAS for the \emph{resource-constrained shortest path} problem from \cite{horvath2018multi} (see Section \ref{subsubsec:multi-cri} for more details). Secondly, in order to round the LP solution, we can no longer use the analysis in \cite{berman2013approximation}. This is because the LP rounding scheme uses a union bound that depends on the number of the minimal subset of edges whose removal disconnects the terminal pairs (i.e., anti-spanners). Since we have to handle both lengths and costs in the LP constraints, we consider all possible subsets of edges and this is sufficient to achieve the $\tO(n^{4/5 + \ep})$-approximation. 

\subsubsection{All-pair Weighted Distance Preservers}

For this problem, the solution takes advantage of the requirement that we have to \emph{exactly} preserve the \emph{all-pair} distances. It turns out that the strategy for the spanner problems \cite{dinitz2011directed,bhattacharyya2012transitive,berman2013approximation,chlamtavc2020approximating,grigorescu2021online} is more amenable. Recall that terminal pairs are partitioned into thin or thick.

To settle thick terminal pairs, most of the previous work that considers graphs with unit edge cost samples vertices from the graph $G$ to hit $G^{s,t}$, and then adds shortest-path in-and-out-arborescences rooted at the sampled vertices. Note that the cost of an in-arborescence or an out-arborescence is always $n-1$. However, with edge costs, it is not clear how the \emph{cheapest} shortest-path in-and-out arborescences can be obtained. Instead, we add cheap \emph{single-source and single-sink weighted distance preservers} rooted at the sampled vertices. This approach requires using the algorithm for \oswspl described 
in details in Section \ref{sec:online}. The key observation is that the terminal pairs of any single-source (single-sink) weighted distance preserver is a subset of all vertex pairs. This implies that any approximately optimal single-source (single-sink) weighted distance preserver must be cheap compared to the cost of the optimal all-pair weighted distance preserver.\footnote{
One might ask if the same approach for thick pairs works if we consider all-pair spanners instead of distance preservers. Unfortunately, for spanners, it is possible that a nice approximate solution connects some thick pairs by a cheap path that is feasible but not the shortest. Single-source and single-sink weighted distance preservers might be too expensive. This issue does not arise when edges have unit costs since the cost of any arborescence is always $n-1$.}

The approach that settles the thin pairs closely follows the algorithm in \cite{berman2013approximation}, which rounds a fractional solution of the LP for all-pair spanners. Different from \pwspl, we only have to handle the lengths in the LP constraints, so the analysis follows \cite{berman2013approximation} and we can get a better approximation ratio. Ultimately, the costs for settling thick and thin pairs are both at most an $\tO(n^{1/2 + \ep})$ factor of the optimal solution.

\subsubsection{Online Weighted Spanners} \label{subsubsec:online}
The main challenge for the online pairwise weighted spanner problem is that the standard greedy approach, which iteratively extracts low-density greedy solutions partially connecting terminal pairs, is no longer applicable. Another challenge is to handle the distance constraints for the terminal pairs that arrive online. Fortunately, the online spanner framework from \cite{grigorescu2021online} already adapts both the approach introduced in \cite{cekp}, which  constructs a \emph{collection of junction trees} in an online fashion, and the approach of \cite{chlamtavc2020approximating}, which judiciously handles distance constraints when edges have unit lengths. Our online results are obtained by extending the framework of \cite{grigorescu2021online} from graphs with unit edge costs to general edge costs.

\subsection{Related work}

\subsubsection{Resource-constrained Shortest Path} \label{subsubsec:multi-cri}

In the resource-constrained shortest path problem \cite{horvath2018multi} for directed networks, each edge is associated with $r$ non-negative weights. Each type of weight $i \in [r-1]$ is associated with a budget. The $r$-th weight denotes the cost of the edge. The goal is to find a minimum-cost path that connects the single source to the single sink without violating the $r-1$ budgets. The results of\cite{horvath2018multi} show that when $r$ is a constant, there exists an FPTAS that finds a path with a cost at most the same as the feasible minimum-cost path by violating each budget by a factor of $1+\ep$. When $r=2$, this problem is equivalent to the restricted shortest path problem \cite{lorenz2001simple,hassin1992approximation}, which has been used extensively in the LP formulations for spanners and directed Steiner forests \cite{dinitz2011directed,chlamtavc2020approximating,feldman2012improved,berman2013approximation,grigorescu2021online}. For our purpose, $r=3$ because the LP formulation implicitly considers whether there exists a feasible path between terminal pairs whose cost exceeds a given threshold.

\subsubsection{Undirected Bi-criteria Network Design}

A general class of undirected bi-criteria network problems was introduced by \cite{marathe1998bicriteria}. A more related problem to ours is the undirected Steiner tree problem. The goal is to connect a subset of vertices to a specified root vertex. In the bi-criteria problem, the distance from the root to a target vertex is required to be at most the given global threshold. \cite{marathe1998bicriteria} presented a bi-criteria algorithm for undirected Steiner trees that is $O(\log n)$-approximate and violates the distance constraints by a factor of $O(\log n)$. Following \cite{marathe1998bicriteria}, \cite{hajiaghayi2009approximating} extends this result to a more general buy-at-bulk bi-criteria network design problem, where the objective is $\polylog(n)$-approximate and violates the distance constraints by a factor of $\polylog(n)$.

Recently, \cite{haeupler2021tree,filtser2022hop} studied the tree embedding technique used for undirected network connectivity problems with \emph{hop} constraints. For a positively weighted graph with a global parameter $h \in [n]$, the \emph{hop distance} between vertices $u$ and $v$ is the minimum weight among the $u$-$v$-paths using at most $h$ edges. Under the assumption that the ratio between the maximum edge weight and the minimum edge weight is $\poly(n)$, the tree embedding technique allows a $\polylog(n)$-approximation by relaxing the hop distance within a $\polylog(n)$ factor for a rich class of undirected network connectivity problems.

\subsubsection{Other related directed network problems}

The more related directed network problems are variants of spanners and Steiner problems, including directed Steiner trees \cite{zelikovsky1997series,charikar1999approximation}, directed Steiner network \cite{chekuri2011set}, fault-tolerance spanners \cite{dinitz2011fault,dinitz2011directed}, and parameterized complexity analysis for directed $s$-spanners \cite{fomin2022parameterized}. For a comprehensive account of the vast literature, we refer the reader to the excellent survey for spanners \cite{ahmed2020graph}. 

There is an extensive list of other related directed network problems, including distance preservers \cite{bodwin2021new,chlamtavc2020approximating}, approximate distance preservers \cite{kogan2022having}, reachability preservers \cite{abboud2018reachability}, and buy-at-bulk network design \cite{antonakopoulos2010approximating}. One direction along this line of research is to study the extremal bounds for the optimal subgraph in terms of the input parameters, instead of comparing the costs of an approximate and an optimal solution \cite{bodwin2021new,kogan2022having,abboud2018reachability}. Another direction is to consider the online problem where terminal pairs arrive online and the goal is to irrevocably select edges so that the cost of the network is approximately minimized \cite{cekp,grigorescu2021online,alon2006general}.

\subsection{Organization}
In Section \ref{sec:wps}, we present the $\tO(n^{4/5 + \ep})$-approximation algorithm for \pwspl. In Section \ref{sec:allpair}, we present the $\tO(n^{1/2 + \ep})$-approximation algorithm for \awdpl. In Section \ref{sec:online}, we present the $\tO(k^{1/2 + \ep})$-competitive algorithm for \opwspl and the $\tO(k^{\ep})$-competitive algorithm for \oswspl.

\section{Pairwise Weighted Spanners} \label{sec:wps}

We recall the definition of the \pwspl problem.

\defpws*

In this section, we prove Theorem \ref{thm:plwps}.

{\renewcommand\footnote[1]{}\thmplwps*}

For ease of presentation, we assume that we have a guess for the cost of the optimal solution - $\opt$ for that instance as in \cite{feldman2012improved,berman2013approximation}. Let $\tau$ denote the value of our guess. We set $\tau_0 = \min_{e \in E} \{c(e) \mid c(e) > 0\}$; then we carry out multiple iterations of our overall procedure setting $\tau$ to be equal to an element in $\{(\tau_0, 2 \cdot \tau_0, 4 \cdot \tau_0,\ldots,2^i \cdot \tau, \ldots, \sum_{e \in E}c(e))\}$ for those iterations. Finally, we take the cheapest spanner from across all these iterations. Thus, it is sufficient to give the approximation guarantee for the iteration when $\opt \leq  \tau \leq 2\cdot \opt$. We can obtain this guess in $O(\log (\sum_{e \in E}c(e)/\tau_0))$ iterations, which is polynomial in the input size.

We next define some notions that are commonly used in the spanner and Steiner forest literature. Let us fix some useful parameters $\beta = n^{3/5},$ and $L = \tau/n^{4/5}$. We say that a path $p(s,t)$ that connects a specific terminal pair $(s,t)$ is {\em feasible} if $\sum_{e \in p(s,t)} \ell(e) \leq \textit{Dist}(s,t)$. We say that $p(s,t)$ is {\em cheap} if the $\sum_{e \in p(s,t)} c(e) \leq L$. We say that a terminal pair $(s,t) \in D$ is {\em thick} if the {\em local graph} $G^{s,t} = (V^{s,t},E^{s,t})$ induced by the vertices on feasible $s \leadsto t$ paths of cost at most $L$ has at least $n/\beta$ vertices; we say it is {\em thin} otherwise. We note that the definitions of thick and thin pairs are slightly different from how they are defined in \cite{berman2013approximation,feldman2012improved} (see Remark \ref{re:thick_pair_change} for the precise reason). 
We say that a set $E' \subseteq E$ settles (or resolves) a pair $(s,t) \in D$ if the subgraph $(V,E')$ contains a feasible $s \leadsto t$ path.

\subsection{Resolving thick pairs} \label{section_thick}

Let $S = \{s \mid \exists t: (s,t) \in D\}$ and $T = \{t \mid \exists s: (s,t) \in D\}$. We first settle the thick pairs with high probability. We do this by sampling some vertices using  Algorithm \ref{alg:sampling_thick} and then adding some incoming paths and outgoing paths from the samples to the vertices in $S$ and $T$ respectively using Algorithm \ref{alg:resolve_thick}. We have to ensure that any path we build is both feasible and cheap and we do that with the help of a black box for the \textsc{restricted shortest path problem} (see Definition \ref{prob:rsp}) which we formally define later.

\begin{restatable}[]{claim}{samph}
\label{cl:sampling_thick}
Algorithm \ref{alg:sampling_thick} selects a set of samples $R$ such that with high probability any given thick pair $(s,t)$ has at least one vertex from its local graph in $R$.
\end{restatable}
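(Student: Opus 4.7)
The plan is to carry out the standard ``random hitting set'' analysis, exploiting the lower bound $|V^{s,t}| \geq n/\beta = n^{2/5}$ which is precisely what the definition of a thick pair gives us. I expect Algorithm \ref{alg:sampling_thick} to include each vertex of $V$ in $R$ independently with some probability $p = \Theta(\beta \ln n / n)$, or equivalently to take $|R| = \Theta(\beta \log n) = \tilde O(n^{3/5})$ uniformly sampled vertices; the target is that the expected number of samples inside any particular $V^{s,t}$ is $\Omega(\log n)$, which will be strong enough to drive a union bound.

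First I would fix an arbitrary thick pair $(s,t) \in D$ and bound the probability that $R$ misses its local graph. Using independence of the samples together with $1 - x \leq e^{-x}$,
\[
\Pr\bigl[R \cap V^{s,t} = \emptyset\bigr] \leq (1-p)^{|V^{s,t}|} \leq (1-p)^{n/\beta} \leq \exp\bigl(-p \cdot n/\beta\bigr).
\]
Choosing the constant in $p$ so that $p \cdot n/\beta \geq (c+2)\ln n$ for some constant $c \geq 1$ makes the right-hand side at most $n^{-(c+2)}$. Plugging in $\beta = n^{3/5}$ shows that $p = \Theta(\log n / n^{2/5})$ suffices, so the expected size $|R| = \tilde O(n^{3/5})$ is as desired.

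Next I would apply a union bound over all thick pairs. Since $D \subseteq V \times V$, there are at most $|D| \leq n^2$ thick pairs, so
\[
\Pr\bigl[\exists \text{ thick } (s,t) \in D :\ R \cap V^{s,t} = \emptyset\bigr] \leq n^2 \cdot n^{-(c+2)} = n^{-c} \leq 1/n,
\]
matching the notion of ``high probability'' from the footnote in the introduction. The main obstacle is really just bookkeeping rather than a conceptual difficulty: the constants inside $p$ must simultaneously (i) drive the per-pair failure probability below $n^{-(c+2)}$ so that the union bound goes through, and (ii) keep $|R| = \tilde O(n^{3/5})$ so that when Algorithm \ref{alg:resolve_thick} later attaches feasible cheap in- and out-paths of cost $\leq L = \tau/n^{4/5}$ at each sampled vertex, the total cost from the thick phase remains at most $\tilde O(|R| \cdot |S \cup T| \cdot L) = \tilde O(n^{4/5+\varepsilon}) \cdot \opt$, consistent with Theorem \ref{thm:plwps}.
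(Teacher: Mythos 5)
Your proof is correct and follows the same ``random hitting set plus union bound'' approach as the paper's. The only cosmetic difference is the sampling model: the paper's Algorithm~\ref{alg:sampling_thick} draws exactly $k = 3\beta\ln n$ vertices uniformly at random and bounds the miss probability by $(1 - 1/\beta)^{3\beta\ln n} \le n^{-3}$, whereas you analyze the equivalent Bernoulli model where each vertex is kept independently with probability $p = \Theta(\beta\ln n / n)$ and bound by $(1-p)^{n/\beta}$; both yield an inverse-polynomial per-pair failure probability, and the union bound over $|D| \le n^2$ pairs then gives high probability exactly as you describe.
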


\begin{proof}
The standard proof is deferred to Appendix \ref{sec:missingproofs}. 
\end{proof}

\begin{algorithm}[!htb]
\caption{Sample$(G(V,E))$} \label{alg:sampling_thick}
\begin{algorithmic}[1]
\State{$R \gets \phi$, $k \gets 3 \beta \ln n$.}

\State Sample $k$ vertices independently and uniformly at random and store them in the set $R$.

\State \Return $R$.
\end{algorithmic}
\end{algorithm}

In Algorithm \ref{alg:resolve_thick}, we call Algorithm \ref{alg:sampling_thick} to get a set of samples $R$. Then, for each $u \in R,s \in S,t \in T$, we try to add a shortest $s \leadsto u$ path and a shortest $u \leadsto v$ path each of cost at most $L$, of increasing length. 

For this purpose, we use the restricted shortest path (bi-criteria path) problem from  \cite{lorenz2001simple}.

\begin{definition} \label{prob:rsp}
\textsc{\rsp
}
  
\textbf{Instance}: A directed graph $G = (V,E)$, edge lengths $\ell: E \to \mathbb{R}_{\ge 0}$, edge costs $c: E \to \mathbb{R}_{\ge 0}$, a vertex pair $(s,t) \in V \times V$, and a threshold $T \in \mathbb{R}_{> 0}$.

\textbf{Objective}: Find a minimum cost $s \leadsto t$ path $P$ such that $\sum_{e \in P} \ell(e) \leq T$.
\end{definition}

The following lemma from \cite{lorenz2001simple,hassin1992approximation} gives an FPTAS for \rsp.

\begin{lemma} \emph{(\cite{lorenz2001simple,hassin1992approximation})} \label{le:bicreteria_lorenz}
    There exists an FPTAS for \rsp that gives a $(1+\ep,1)$ approximation, i.e., the path ensures that $\sum_{e \in P} \ell(e) \leq T$ and has a cost at most $1+\ep$ times the optimal.
\end{lemma}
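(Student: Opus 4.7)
The plan is to prove Lemma \ref{le:bicreteria_lorenz} via the classical cost-scaling plus dynamic programming approach. As a first step, I would observe that \rsp admits a pseudo-polynomial exact algorithm when costs are integral: letting $M[v, j]$ denote the minimum total length of an $s \leadsto v$ path of total cost at most $j$, we have the recurrence $M[v, j] = \min_{(u,v) \in E} \{M[u, j - c(u,v)] + \ell(u,v)\}$, with $M[s, 0] = 0$ and $M[v, j] = +\infty$ for $j < 0$. Computing this table by relaxation in cost-layer order takes time $O(|V| \cdot |E| \cdot C_{\max})$, where $C_{\max}$ is the largest integer cost considered, and the optimum is the smallest $j$ such that $M[t, j] \leq T$.

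To convert the pseudo-polynomial DP into an FPTAS, I would apply the standard scaling trick: given a guess $U$ for an upper bound on the optimum cost, define rescaled integer costs $\tilde c(e) = \lfloor c(e) \cdot K / U \rfloor$ with $K = \lceil n/\ep \rceil$, and run the DP above on the instance $(G, \ell, \tilde c, T)$. Since lengths and the threshold $T$ are untouched, every path that is feasible for the scaled instance is feasible for the original, so the length constraint $\sum_{e \in P}\ell(e) \leq T$ is preserved exactly. Meanwhile, the true optimal path $P^*$ satisfies $\sum_{e \in P^*} \tilde c(e) \leq (K/U)\sum_{e \in P^*} c(e) \leq K$, so the DP discovers some feasible $P$ with $\sum_{e \in P}\tilde c(e) \leq K$. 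The per-edge rounding gap is $c(e) - (U/K)\tilde c(e) \leq U/K$, and summing over the at most $n-1$ edges of $P$ yields $\sum_{e \in P} c(e) \leq U + nU/K \leq U(1+\ep)$. Provided $U$ is within a constant factor of $\opt$, this gives a $(1+O(\ep))$-approximation in time $\poly(n, 1/\ep)$; rescaling $\ep$ by a constant recovers the stated bound.

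The remaining step, which I expect to be the main obstacle, is producing a sufficiently tight initial estimate $U$ of $\opt$ in polynomial time without knowing $\opt$ in advance. My plan is to follow the Hassin--Lorenz--Raz bootstrapping procedure: first obtain loose bounds by testing each distinct edge cost $c(e)$ as a threshold and running unweighted shortest-path queries on the subgraph of edges with cost $\leq c(e)$, which pins $\opt$ into a window $[L, U]$ with $U/L \leq n$. Then I would iteratively shrink the window by running the scaled DP with a \emph{constant} accuracy parameter (say $\ep_0 = 1/2$), using the returned cost as simultaneous evidence for a new upper bound and, via a thresholding argument on the scaled optimum, for a new lower bound. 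Each such refinement roughly halves $\log(U/L)$, so after $O(\log\log n)$ rounds the window satisfies $U/L \leq 2$, and one final invocation of the scaled DP with parameter $\ep$ delivers the desired $(1+\ep)$-approximation. The delicate part is verifying that each refinement preserves the invariant $L \leq \opt \leq U$ despite the multiplicative slack introduced by the constant-accuracy DP call — this bookkeeping is precisely the improvement of \cite{lorenz2001simple} over \cite{hassin1992approximation}, and I would handle it by tracking the two-sided guarantees of the scaled DP (feasibility plus $(1+\ep_0)$-factor optimality) through each iteration.
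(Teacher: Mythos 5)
The paper does not prove Lemma~\ref{le:bicreteria_lorenz}: it is stated as a black-box citation to Hassin~\cite{hassin1992approximation} and Lorenz--Raz~\cite{lorenz2001simple}, and is used only as a subroutine in Algorithm~\ref{alg:resolve_thick}. So there is no in-paper proof to compare against; what you have written is a reconstruction of the cited result.

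Your reconstruction is substantively correct and matches the standard Hassin/Lorenz--Raz strategy (pseudo-polynomial DP over a bounded cost parameter, scale-and-round to get an FPTAS, then bootstrap a tight window $[L,U]$ with $U/L = O(1)$). Two small technical points are worth tightening. First, after scaling, edges with $\tilde c(e)=0$ are possible, so the table $M[v,j]$ cannot be filled by a pure ``cost-layer'' sweep: within a fixed cost layer $j$ you need a shortest-path relaxation (e.g.\ Dijkstra on lengths over the zero-scaled-cost edges, combined with transitions from layers $<j$), and the monotone term $M[v,j-1]$ belongs in the recurrence. This is a known bookkeeping detail in~\cite{hassin1992approximation}. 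Second, in the bootstrapping step, ``unweighted shortest-path queries'' should be length-weighted shortest-path queries on the subgraph of edges with cost $\le c(e)$: the test for a candidate threshold $c(e)$ is whether the minimum-length $s\leadsto t$ path in that subgraph has length $\le T$. The smallest passing threshold gives $L=c(e)$ and $U=(n-1)\,c(e)$, hence $U/L\le n-1$ as you claim. With those fixes the sketch is a faithful outline of the proof of the cited FPTAS.
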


Using  Lemma \ref{le:bicreteria_lorenz} as our black box, we binary search for a path of length between $\min_{e \in E} \{\ell(e)\}$ and $ n \cdot \max_{e \in E} \{\ell(e)\}$ that will give us a cheap $s \leadsto u$ path. Since the edge lengths and thus the path lengths are all integers, this is possible in $O(\log(n \cdot \max_{e \in E} \{\ell(e)\}))$ iterations which is polynomial in the input size. It is possible that we never find an $s \leadsto u$ path of cost less than $L$, in which case we just ignore this $(s,u)$ pair. We then do the same for all the $(u,t)$ pairs. See the full details in Algorithm \ref{alg:resolve_thick}.

\begin{remark} \label{re:thick_pair_change}
    The need to use Lemma \ref{le:bicreteria_lorenz}, as opposed to a simple shortest path algorithm, is the main difference we have from \cite{berman2013approximation} for resolving thick pairs. We have to ensure that any path we add is both cheap and feasible, unlike \cite{berman2013approximation}, which just needs to ensure that the path is cheap. Note that this is the reason our definition of thick pairs differs from that of \cite{berman2013approximation} which necessitates some more major modifications in subsequent parts of the algorithm as well.
\end{remark}

\begin{algorithm}[!htb]
\caption{Thick pairs resolver $(G(V,E),\{\ell(e),c(e)\}_{e\in E})$} \label{alg:resolve_thick}
\begin{algorithmic}[1]
\State{$R \gets \phi$, $G' \gets \phi$.}

\State{$R \gets \text{Sample}(G(V,E))$.}

\For{$u \in R$}
\For{$s \in S$}
\State $\text{Use RSP to binary search for the shortest $s  \leadsto u$ path of cost at most $L \cdot (1+\ep)$ and add it to } G'$ if it exists.
\EndFor
\EndFor

\For{$u \in R$}
\For{$t\in T$}
\State $\text{Use RSP to binary search for the shortest $u  \leadsto t$ path of cost at most $L \cdot (1+\ep)$ and add it to } G'$ if it exists.
\EndFor
\EndFor

\State \Return $G'$
\end{algorithmic}
\end{algorithm}

\begin{lemma} \label{le:thick_final}
    With high probability, the set of edges returned by Algorithm \ref{alg:resolve_thick} resolves all thick pairs in $D$ with a total cost $\tO(n^{4/5} \cdot \tau)$. Moreover, Algorithm \ref{alg:resolve_thick} runs in polynomial time.
\end{lemma}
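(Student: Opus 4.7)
The plan is to condition on the high-probability event from Claim \ref{cl:sampling_thick} that every thick pair $(s,t)$ has some vertex of $V^{s,t}$ in the sampled set $R$, and then verify feasibility, the cost bound, and polynomial runtime separately.

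For feasibility, fix a thick pair $(s,t)$ and pick $u \in V^{s,t} \cap R$. By definition of $V^{s,t}$, there is a feasible $s \leadsto t$ path $P^*$ through $u$ with total length at most $Dist(s,t)$ and total cost at most $L$; split it into $P_1^* : s \leadsto u$ and $P_2^* : u \leadsto t$ of lengths $\ell_1, \ell_2$ and costs $c_1, c_2$ with $\ell_1 + \ell_2 \le Dist(s,t)$ and $c_1 + c_2 \le L$. Applying Lemma \ref{le:bicreteria_lorenz} to the $(s,u)$ instance with length threshold $\ell_1$ returns a path of length at most $\ell_1$ and cost at most $(1+\ep)c_1 \le (1+\ep)L$, so the binary search for the smallest length threshold admitting a path of cost $\le L(1+\ep)$ finds a path of length at most $\ell_1$. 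The analogous argument for $(u,t)$ produces a path of length at most $\ell_2$ and cost at most $(1+\ep)L$. Concatenating these two paths yields a feasible $s \leadsto t$ path in the returned subgraph $G'$, which settles $(s,t)$.

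For the cost bound, observe that $|R| = O(\beta \log n) = O(n^{3/5} \log n)$ and that for each sample $u \in R$ we add at most $|S| + |T| \le 2n$ paths, each of cost at most $L(1+\ep) = O(\tau/n^{4/5})$. Summing gives total cost $O(|R| \cdot n \cdot L) = \tO(n^{3/5} \cdot n \cdot \tau/n^{4/5}) = \tO(n^{4/5} \tau)$, matching the claim. For the runtime, the outer loops contribute $O(|R| \cdot n) = O(n^{8/5} \log n)$ iterations; each binary search ranges over integer lengths in $\{1, 2, \ldots, n \cdot \max_{e} \ell(e)\}$ which, since $\ell(e) \le \poly(n)$, admits $O(\log n)$ probes, and each probe invokes the FPTAS of Lemma \ref{le:bicreteria_lorenz} which itself runs in polynomial time.

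The main subtlety I anticipate is ensuring that the concatenated $s \leadsto u \leadsto t$ path remains length-feasible. This reduces to the monotonicity that the minimum cost among $s \leadsto u$ paths of length at most $T$ is non-increasing in $T$: thus the subpath lengths $\ell_1, \ell_2$ from the witness $P^*$ act as valid upper bounds for the length thresholds at which the binary search succeeds, and the FPTAS's $(1+\ep,1)$ guarantee propagates this length bound without inflation. Once this monotonicity-plus-FPTAS step is in place, the cost and runtime accounting are direct.
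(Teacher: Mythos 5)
Your proof is correct and follows essentially the same route as the paper's: condition on the hitting-set event from Claim~\ref{cl:sampling_thick}, use the $(1+\ep,1)$ guarantee of Lemma~\ref{le:bicreteria_lorenz} to argue that the binary search produces an $s\leadsto u$ path of length at most $\ell_1$ (and likewise $u\leadsto t$), concatenate, and then account for cost and runtime exactly as the paper does. Your write-up makes the witness-path decomposition $P^*=P_1^*\cup P_2^*$ and the monotonicity that justifies the binary search explicit, which the paper leaves somewhat implicit, but this is only a matter of exposition rather than a different argument.
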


\begin{proof}
    If some $u \in R$ was originally in the local graph $G^{s,t}$, then Algorithm \ref{alg:resolve_thick} would have added at least one $s \leadsto u \leadsto t$ path from $G^{s,t}$ that is feasible and has cost less than $2L (1+\ep)$. This is because if $u$ was in the local graph of $(s,t)$, then there exists an $s \leadsto u$ path of cost less than $L$ of some length $\ell \in [n \cdot \max_{e \in E}\{\ell(e)\}]$. Since we binary search over the possible values for $\ell$ and take the lowest possible one, we will find such a path with distance at most the minimum length of an $s \leadsto t$ path that is cheap. Note that we could have a smaller distance because we use a larger bound for cost for our path in comparison to the local graph. Using Claim \ref{cl:sampling_thick}, such a cheap and feasible path will exist with a high probability for all $(s,t) \in D$ that are thick for some samples $u \in R$. 
    
    Now we analyze the cost of Algorithm \ref{alg:resolve_thick}. The cost from all the edges we add in Algorithm \ref{alg:resolve_thick} would be $O(n \cdot k \cdot L)$. 
    This is because we pick $k$ samples and each of them needs to add an incoming and outgoing path of cost $L (1+\ep)$ to at most $n$ vertices. Plugging in the values for $k$ and $L$, we can see that the total cost would be $\tO(n^{4/5} \cdot \tau)$. In addition, note that Algorithm \ref{alg:resolve_thick} will run in polynomial time because our binary search only needs to search the integers in $[ \min_{e \in E}\{\ell(e)\},n \cdot \max_{e \in E}\{\ell(e)\}]$.\footnote{Note that Algorithm \ref{alg:resolve_thick} will still run in polynomial-time if we use an exhaustive search instead of a binary search since the edge lengths are in $\poly(n)$.}
\end{proof}

\subsection{Resolving thin pairs} \label{subsec:pws-thin}

Now we focus on the thin pairs after removing the settled thick pairs from the set $D$. We define the {\em density} of a set of edges $E'$ to be the ratio of the total cost of these edges to the number of pairs settled by those edges. We first describe how to efficiently construct a subset $K$ of edges with density $\tO(n^{4/5+\ep}) \tau / |D|$. Then we iteratively find edge sets with that density, remove the pairs, and repeat until we resolve all thin pairs. This gives a total cost of $\tO(n^{4/5+\ep} \cdot \tau)$.

We construct $K$ by building two other sets $K_1$ and $K_2$ and picking the smaller density of them. Let $H$ be an optimal solution with cost $\tau$. Let $C$ be the set of demand pairs for which the minimum cost of a feasible $s \leadsto t$ path in $H$ is at least $L$ (note that the local graph for these pairs would be empty).

We have two cases:
\begin{itemize}
    \item$|D|/2 \leq |C| \leq |D|$ and
    \item$ 0 \leq |C| < |D|/2$.
\end{itemize}

\subsubsection{When $|D|/2 \leq |C| \leq |D|$}

We will use the notion of {\em junction tree} as a black box for resolving this case. Informally, junction trees are trees that satisfy significant demand at low cost. They have a root node $r,$ a collection of paths into $r$, and paths out of $r$ to satisfy some of this demand. In our case, we also need these paths to be cheap and short. The following is a formal definition of a junction tree variant that fits the needs of our problem.

\begin{definition} \label{def:dpwj}
    \dpwj
    
    Let $G = (V, E)$ be a directed graph with edge lengths $\ell: E \to \mathbb{R}_{\ge 0}$, edge costs $c: E \to \mathbb{R}_{\ge 0}$, and a set $D \subseteq V \times V$ of ordered pairs and their corresponding pairwise distance bounds $\textit{ Dist}: D \rightarrow \R$ (where $\textit{Dist}(s,t) \geq d_G(s,t)$  for every terminal pair $(s,t) \in D$), and a root $r \in V$. We define {\em distance-preserving weighted junction tree}  to be a subgraph $H$ of $G$ that is a union of an in-arboresences and an out-arboresences both rooted at $r$ containing an $s \leadsto t$ path going through the root $r$
    of length at most $Dist(s,t)$, for one or more $(s,t) \in D$. 
\end{definition}
    
The {\em density} of a junction tree is defined as the ratio of the sum of costs of all edges in the junction tree to the number of pairs settled by the junction tree. 

\begin{claim} \label{cl:junction_tree_helper}
    If $|D|/2\leq |C| \leq |D|$, then there exists a distance-preserving weighted junction tree of density $ O\left( n^{4/5} \cdot \tau/|D|\right)$.
\end{claim}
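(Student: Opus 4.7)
The plan is to use an averaging argument on the optimal solution $H$ to pinpoint a single vertex $v^\ast$ that is traversed by feasible paths for many pairs in $C$, and then glue together shortest-\emph{length} arborescences rooted at $v^\ast$ inside $H$ to form the desired junction tree.

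Fix an optimal solution $H$ with $c(H) \leq \tau$. For each pair $(s,t) \in C$, fix a minimum-cost feasible $s \leadsto t$ path $P_{s,t}$ in $H$; by the definition of $C$ we have $c(P_{s,t}) \geq L$, and by feasibility $\ell(P_{s,t}) \leq Dist(s,t)$. For each edge $e \in E(H)$, let $x_e$ denote the number of paths $P_{s,t}$ that pass through $e$. Counting two ways,
\[
\sum_{e \in E(H)} c(e)\cdot x_e \;=\; \sum_{(s,t) \in C} c(P_{s,t}) \;\geq\; |C|\cdot L \;\geq\; \tfrac{|D|}{2}\cdot \tfrac{\tau}{n^{4/5}}.
\]
Since $\sum_{e \in E(H)} c(e) \leq \tau$, there must exist an edge $e^\ast$ with $x_{e^\ast} \geq |D|/(2 n^{4/5})$. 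Let $v^\ast$ be either endpoint of $e^\ast$, and let $C^\ast \subseteq C$ be the (at least $|D|/(2n^{4/5})$) pairs whose chosen path $P_{s,t}$ passes through $v^\ast$.

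Now construct the junction tree at $v^\ast$ as follows. Let $T^-$ be a shortest-length in-arborescence rooted at $v^\ast$ within $H$ (each vertex $u$ with $u\leadsto v^\ast$ in $H$ is attached via a minimum-$\ell$ path inside $H$), and similarly let $T^+$ be a shortest-length out-arborescence rooted at $v^\ast$ within $H$. Both $T^-$ and $T^+$ are subgraphs of $H$, so $c(T^-) + c(T^+) \leq 2\,c(H) \leq 2\tau$. For each $(s,t)\in C^\ast$, the $s\leadsto v^\ast$ path in $T^-$ has length at most $\ell\bigl(P_{s,t}[s, v^\ast]\bigr)$, and the $v^\ast\leadsto t$ path in $T^+$ has length at most $\ell\bigl(P_{s,t}[v^\ast, t]\bigr)$; concatenating them yields an $s\leadsto v^\ast\leadsto t$ walk in $T^-\cup T^+$ of length at most $\ell(P_{s,t}) \leq Dist(s,t)$. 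Hence $T^-\cup T^+$ is a distance-preserving weighted junction tree rooted at $v^\ast$ settling at least $|D|/(2n^{4/5})$ pairs at cost at most $2\tau$, giving density $O\bigl(n^{4/5}\cdot \tau/|D|\bigr)$, as desired.

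The core step is the averaging identity, which converts the pointwise lower bound $c(P_{s,t}) \geq L$ (holding for at least $|D|/2$ pairs) into the existence of an edge, hence a vertex, on many paths. The main additional subtlety relative to the unit-length Steiner-forest arguments of \cite{feldman2012improved,berman2013approximation} is distance preservation: we cannot pick arbitrary (or cheapest) arborescences inside $H$, because their $s\leadsto v^\ast$ paths could be too long. Choosing shortest-\emph{length} arborescences inside $H$ sidesteps this, because the actual prefix and suffix of $P_{s,t}$ live in $H$ and upper-bound the tree lengths vertex-by-vertex, while the cost bound $2\tau$ is automatic from $T^\pm\subseteq H$.
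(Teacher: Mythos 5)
Your proof is correct and follows essentially the same approach as the paper: a cost-weighted counting argument over the paths $\{P_{s,t}\}_{(s,t)\in C}$ inside $H$ locates a heavy vertex $v^\ast$ lying on $\Omega(|D|/n^{4/5})$ of these paths, and a junction tree rooted at $v^\ast$ is then carved out of $H$ with cost $O(\tau)$. Your step of taking shortest-$\ell$ in- and out-arborescences within $H$ and bounding each branch against the corresponding prefix/suffix of $P_{s,t}$ is a slightly more explicit justification of feasibility than the paper's phrasing (``adding feasible paths\dots we do not add an edge if it is not in $H$''), but it is the same construction in substance.
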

\begin{proof}
Let $H$ be an optimal solution subgraph of $G$ that connects all the costly thin pairs. Take the paths in $H$ connecting the pairs in $C$. The sum of the costs of all such paths is at least $|C|L$. Now, let $\mu$ be the maximum number of these paths that any edge in $G$ belongs to. The sum of the costs of the paths is at most $\mu \cdot \tau$ and thus there must exist an edge belonging to $\mu \geq |C|L/\tau$ paths. Pick such an arbitrary edge and call it the {\em heavy-enough  edge}, and call its source as the {\em heavy-enough vertex}, denoted $h_v$. Now, consider a tree made by adding feasible paths from $s \in S$ to $h_v$ and $h_v$ to $t \in T$ that satisfies at least $\mu$ pairs. We do not add an edge if it is not in $H$. This ensures that the cost of this tree is less than $\tau$. This tree would connect at least $\mu$ pairs, and thus it would have a density at most $\tau/\mu = \tau^2/(|C| L)$. 

If $L=\tau/n^{4/5}$, then $\tau^2/(|C|L) = n^{4/5} \cdot \tau/|C|$. If $|D| > |C| > |D|/2$, we have $n^{4/5} \cdot \tau/|C| = O\left(n^{4/5} \cdot \tau/|D|\right)$ and thus we have proved the existence of a junction tree of the required density.
\end{proof}

\begin{remark}
    The analysis so far for this case is quite similar to \cite{berman2013approximation}. The counting argument to establish the existence of a {\em heavy-enough edge} is identical. However, the usage of this counting argument to get a structure of the required density is slightly different because we need to account for feasibility. The main difference however is the fact that we need to use distance-preserving weighted junction tree as opposed to regular junction trees.
\end{remark}

The following lemma is essentially from \cite{chlamtavc2020approximating} (Theorem 5.1). But the small yet important modifications that we need are not covered in \cite{chlamtavc2020approximating}. We defer the proof to Appendix \ref{sec:dpwj}.

\begin{restatable}[]{lemma}{juncdp}
\label{le:poly_junction_tree_weighted_distance} 
    For any constant $\ep > 0$, there is a polynomial-time approximation algorithm for the minimum density distance-preserving junction tree as long as the edge lengths are integral and polynomial in $n$. In other words, there is a polynomial time algorithm which, given a weighted directed $n$ vertex graph $G=(V,E)$ where each edge $e \in E$ has a cost $c(e) \in \nnreals$ and integral length $\ell(e) \in \{1,2,...,\poly(n)\}$, terminal pairs $P \subseteq V \times V$, and distance bounds $Dist:P \rightarrow \N$ (where $Dist(s,t) \geq d_G(s,t)$)  for every terminal pair $(s,t) \in P$, approximates the following problem to within an $O(n^\ep)$ factor:
    \begin{itemize}
        \item Find a non-empty set of edges $F \subseteq E$ minimizing the ratio:
        \begin{equation}
            \min_{r \in V}\frac{\sum_{e \in F}c(e)}{|\{(s,t) \in P| d_{F,r}(s,t) \leq Dist(s,t)\}|}
        \end{equation}
    \end{itemize}
    where $d_{F,r}(s,t)$  is the length of the shortest path using edges in $F$ which connects $s$ to $t$ while going through $r$ (if such a path exists).
\end{restatable}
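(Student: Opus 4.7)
My plan is to follow the proof of \cite[Theorem 5.1]{chlamtavc2020approximating} for the unit-cost version of this lemma, modifying only the parts that break when edge costs are arbitrary non-negative rationals. The overall approach is a three-step recipe: (i) encode the length constraints into the graph topology via a polynomial-size \emph{layered graph} $\widehat{G}$, (ii) invoke an $O(N^{\ep})$-approximation for the minimum-density weighted junction tree problem on $\widehat{G}$, and (iii) project the layered solution back to $G$. Since $L := \max_{(s,t) \in P} Dist(s,t)$ is polynomial in $n$ by hypothesis, $N := |V(\widehat{G})| \le n(L+1)$ is also polynomial in $n$, so $O(N^{\ep}) = O(n^{O(\ep)})$ and rescaling $\ep$ by a constant absorbs the blow-up.

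Concretely, $\widehat{G}$ has a vertex $v^{(i)}$ for every $v \in V$ and every $0 \le i \le L$, and a directed edge $e_i := (u^{(i)}, v^{(i + \ell(e))})$ for every $e = (u,v) \in E$ and every valid $i$; the directed $u^{(0)} \leadsto v^{(j)}$ paths in $\widehat{G}$ are then in bijection with $u \leadsto v$ paths of length exactly $j$ in $G$. For each candidate root $r \in V$ (enumerated exhaustively), I contract $r^{(0)}, \ldots, r^{(L)}$ into a single vertex $r^{\star}$ and, for each terminal pair $(s,t) \in P$, add a fresh sink $t^{\star}_{s,t}$ together with zero-cost edges $t^{(j)} \to t^{\star}_{s,t}$ for every $0 \le j \le Dist(s,t)$, replacing $(s,t)$ by $(s^{(0)}, t^{\star}_{s,t})$ in the auxiliary terminal set. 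A subgraph $\widehat{F}$ of this modified $\widehat{G}$ containing, for each covered pair, a path $s^{(0)} \to r^{\star} \to t^{\star}_{s,t}$ projects under $e_i \mapsto e$ to a subgraph $F \subseteq E$ that is a distance-preserving weighted junction tree rooted at $r$ covering the same pairs, with $c(F) \le \sum_{e_i \in \widehat{F}} c(e)$.

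The only ingredient that goes beyond \cite{chlamtavc2020approximating} is ensuring that each original edge is paid for only once when several of its layer copies appear in the rounded solution. To this end I run the density LP for weighted junction trees from \cite{chekuri2011set} on $\widehat{G}$, but with a single cost variable $x_e$ per original edge linked to the layered flow variables $\widehat{y}_{e_i}$ by $x_e \ge \widehat{y}_{e_i}$ for every valid $i$, and objective $\sum_e c(e) x_e$ normalized by total coverage. Any distance-preserving junction tree in $G$ lifts to a feasible LP solution of the same density, so the LP optimum is at most the target density; the standard recursive Steiner-tree rounding then outputs an integral junction tree of cost $O(N^{\ep})$ times this LP value.

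The main obstacle is precisely this cost-accounting step, since the standard rounding analysis assumes the LP cost separates additively over the edges of the graph on which the algorithm runs, whereas here the objective is an upper envelope across layer copies. I handle this by sampling original edges with probability $x_e$ and activating only the layer copies demanded by the rounded flow; a union-bound argument analogous to that in \cite{chlamtavc2020approximating} then shows that the density guarantee is preserved up to polylogarithmic factors, which are absorbed into $n^{\ep}$. All other components --- the enumeration over candidate roots, the polynomial bound on $L$, and the recursive structure of the junction tree algorithm --- carry over unchanged from the unit-cost setting, yielding the claimed polynomial-time $O(n^{\ep})$-approximation.
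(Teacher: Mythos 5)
Your high-level plan---encode distance constraints in a polynomial-size layered graph, then call a minimum-density weighted junction tree approximation---is the right one and matches the paper's in spirit, but the contraction of the root layers breaks the reduction. After you identify $r^{(0)},\ldots,r^{(L)}$ into a single $r^{\star}$, the layer index at which a path enters $r^{\star}$ is completely decoupled from the layer index at which it leaves, so the index $j$ of the sink vertex $t^{(j)}$ records only (a shift of) the $r\leadsto t$ leg and not the full $s\leadsto r\leadsto t$ length. Concretely, the edge $r^{\star}\to v^{(m)}$ exists for \emph{every} $m\geq\ell(r,v)$, so one can enter $r^{\star}$ from $u^{(a)}$ with $a$ arbitrarily large and still exit to a small-index $v^{(m)}$; your constraint $j\leq Dist(s,t)$ then certifies paths that violate the distance bound. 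The construction used by the paper, following \cite{chlamtavc2020approximating}, avoids exactly this by putting the source side on negative layers $\{-n+1,\ldots,-1\}$, the sink side on positive layers, and the root at the unique layer-$0$ vertex $(r,0)$; the relation then admits $\bigl((s^t,-i),(t^s,j)\bigr)$ only when $i+j\leq Dist(s,t)$, which tracks both legs correctly. Without this bipolar layering (or an equivalent fix), your argument does not establish the lemma.

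Secondly, the ``main obstacle'' you build machinery around---a shared cost variable $x_e$ linked to all layer copies $\widehat{y}_{e_i}$, plus a bespoke sampling-and-union-bound rounding---addresses a non-issue. A junction tree rooted at $r$ is a union of an in-arborescence and an out-arborescence; in each arborescence every vertex has a unique path to $r$, hence a unique depth, so each edge of the arborescence lifts to exactly one layer copy. The lift of the optimal junction tree therefore costs at most twice its cost in $G$ (the paper's Claim~\ref{cl:layer-ps-opt}), and projecting any layered solution back to $G$ can only decrease cost, so the vanilla layered LP with $w(e_i)=c(e)$ already suffices. Dropping the shared-variable LP would remove the only step of your argument that is genuinely hand-waved. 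Finally, note the paper's actual route is a two-step reduction: it first proves the unit-length, general-cost version (Lemma~\ref{le:junction_tree_weighted_distance}) by a near-verbatim adaptation of \cite[Theorem~5.1]{chlamtavc2020approximating} with layer-edge weight $c(e)$, and then handles polynomial integral lengths by subdividing each edge $e$ into $\ell(e)$ unit-length edges of cost $c(e)/\ell(e)$. Your one-step construction with length-$\ell(e)$ jumps is morally equivalent once the layering is fixed, but the subdivision route lets the unit-length lemma be applied as a black box. (Also, $L=\max Dist(s,t)$ is not polynomial ``by hypothesis''; one must first argue that $Dist(s,t)$ can WLOG be capped at $n\cdot\max_e\ell(e)$.)
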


\begin{lemma} \label{le:thin_costly_final}
    When $|D|/2 \leq |C| \leq |D|$, we can get a set of edges $K_1$ that has density at most $\tO(n^{4/5+\ep} \cdot \tau/|D|)$.
\end{lemma}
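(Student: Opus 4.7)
The plan is to combine the existential bound from Claim \ref{cl:junction_tree_helper} with the approximation guarantee from Lemma \ref{le:poly_junction_tree_weighted_distance}. Specifically, I will apply the polynomial-time minimum-density distance-preserving junction tree algorithm, with terminal set $P := D$ (restricted to the unresolved thin pairs), and take the output as $K_1$.

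First I would invoke Claim \ref{cl:junction_tree_helper}, which, under the assumption $|D|/2 \le |C| \le |D|$, guarantees the \emph{existence} of some distance-preserving weighted junction tree of density $O(n^{4/5}\cdot \tau/|D|)$. This produces an upper bound on the optimum value of the minimum-density distance-preserving junction tree problem on our instance. Next I would run the algorithm from Lemma \ref{le:poly_junction_tree_weighted_distance} on the graph $G$ (with the remaining thin pairs as $P$ and the original distance bounds), which, since the edge lengths are integral and polynomial in $n$, runs in polynomial time and returns an edge set $K_1$ of density within an $O(n^\ep)$ factor of the optimum. Chaining the two bounds gives
\begin{equation*}
\text{density}(K_1) \;\le\; O(n^\ep)\cdot O\!\left(\frac{n^{4/5}\cdot \tau}{|D|}\right) \;=\; \tilde O\!\left(\frac{n^{4/5+\ep}\cdot \tau}{|D|}\right),
\end{equation*}
which is exactly the density bound claimed.

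The only subtlety to verify is that the edge set $K_1$ produced by the approximation algorithm does indeed \emph{settle} the pairs it is charged with in the density ratio, in the sense required by Section \ref{subsec:pws-thin} (feasible $s\leadsto t$ path in $K_1$). This is immediate from the definition used in Lemma \ref{le:poly_junction_tree_weighted_distance}: the denominator counts pairs $(s,t) \in P$ for which $d_{K_1,r}(s,t) \le Dist(s,t)$ for the chosen root $r$, so every counted pair admits a feasible $s \leadsto t$ path through $r$ using only edges of $K_1$, hence is settled by $K_1$ in the sense required. No further argument is needed.

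I expect the main conceptual work here is not in this lemma itself but was already discharged earlier: namely, establishing the existence of the heavy-enough edge under the distance constraints (Claim \ref{cl:junction_tree_helper}), and upgrading the junction tree approximation of \cite{chlamtavc2020approximating} to the distance-preserving weighted variant (Lemma \ref{le:poly_junction_tree_weighted_distance}). Given those two ingredients, the proof of Lemma \ref{le:thin_costly_final} is a direct two-line composition, and the main obstacle—handling the joint length/cost constraints inside the junction tree subroutine—lives entirely inside the deferred proof of Lemma \ref{le:poly_junction_tree_weighted_distance} rather than here.
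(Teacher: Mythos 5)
Your proposal is correct and matches the paper's proof essentially verbatim: invoke Claim \ref{cl:junction_tree_helper} for existence of a density $O(n^{4/5}\tau/|D|)$ distance-preserving weighted junction tree, then apply the $O(n^\ep)$-approximation algorithm of Lemma \ref{le:poly_junction_tree_weighted_distance} to obtain $K_1$ with density $\tO(n^{4/5+\ep}\tau/|D|)$. The extra sanity check you include, that the denominator in Lemma \ref{le:poly_junction_tree_weighted_distance} counts exactly the pairs settled in the required sense, is implicit in the paper but a harmless and correct observation.
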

\begin{proof}
    From Claim \ref{cl:junction_tree_helper}, we can see that there exists a distance-preserving weighted junction tree of density at most $O(n^{4/5} \cdot \tau/|D|)$. Now, we can just use Lemma \ref{le:poly_junction_tree_weighted_distance} to get a distance-preserving weighted junction tree with density at most $\tO(n^{4/5+\ep} \cdot \tau/|D|)$ and store the edges returned by it in $K_1$.
\end{proof}

\subsubsection{When $0 \leq |C| < |D|/2$}

To handle this case, we build a linear program (LP) that fits our problem's requirements, solve it approximately with the help of a separation oracle, and finally round it to get a set of edges with density $\tO(n^{4/5+\ep}) \cdot \tau/|D|$. The linear program is quite similar to the one used in \cite{berman2013approximation,feldman2012improved}, but it has a subtle distinction that significantly changes the tools and proof techniques we have to use. We will be referring to \cite{feldman2012improved} quite frequently in this section because \cite{berman2013approximation} does not directly present a way to solve the LP (it relies on \cite{feldman2012improved} for this).

\paragraph{Building and solving the linear program}

We will need the following definition in order to set up a relevant LP. 
For $(s,t)\in D$, let $\Pi(s,t)$ be the set of all {\em feasible} $s \leadsto t$ paths of cost at most $L$, and let $\Pi = \cup_{(s,t) \in D} \Pi(s,t)$. Each edge $e$ has a capacity $x_e$, each path $p \in \Pi$ carries $f_p$ units of flow, and $y_{s,t}$ is the total flow through all paths from $s$ to $t$. Define a linear program as follows:

 \begin{equation}
\begin{aligned} \label{lp:thin_pair_original}
& \min & & \sum_{e \in E}{c(e) \cdot x_e} \\
& \text{subject to}
& & \sum_{(s,t) \in D} y_{s,t} \geq \frac{|D|}{2}, \\
& & &\sum_{\Pi (s,t) \ni p \ni e} f_p \leq x_e & \forall (s,t) \in D, e \in E,\\
& & &\sum_{p \in \Pi(s,t)} f_p = y_{s,t} & \forall (s,t) \in D, \\
& & & 0 \leq y_{s,t},f_p,x_e \leq 1 & \forall (s,t) \in D, p \in \Pi, e \in E.
\end{aligned}
\end{equation}

LP \eqref{lp:thin_pair_original} tries to connect at least $|D|/2$ pairs from $D$ using paths of cost at most $L$ while minimizing the total cost of the edges that are used. It is almost identical to the corresponding LP in \cite{berman2013approximation, feldman2012improved} for Steiner forests, except that we consider only {\em feasible} paths that are cheaper than $L$, while they consider all paths that are cheaper than $L$.

\begin{lemma}
    Let $\opt$ be the optimal value of an instance of \pwsul. Then,
    the optimal value of LP \eqref{lp:thin_pair_original} corresponding to that instance is at most $\opt$. In addition, a solution for LP \eqref{lp:thin_pair_original} of value at most $(1+\ep)\cdot \opt$ can be found in polynomial time.
\end{lemma}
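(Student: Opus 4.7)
The plan is to establish the two claims of the lemma separately.

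For the LP-value bound, I would let $H^*$ be an optimal \pwsul solution of cost $\opt$. Since we are in the regime $|C| < |D|/2$, the set $D' := D \setminus C$ satisfies $|D'| > |D|/2$, and by definition of $C$ every $(s,t) \in D'$ admits in $H^*$ a path $p_{s,t} \in \Pi(s,t)$ (i.e.\ feasible and of $c$-cost at most $L$). I would then set $x_e = \mathbf{1}[e \in H^*]$, $y_{s,t} = \mathbf{1}[(s,t) \in D']$, $f_{p_{s,t}} = 1$ for $(s,t) \in D'$, and $f_p = 0$ for every other path. Each constraint of LP~\eqref{lp:thin_pair_original} is then checked in one line: (A) holds because $\sum_{(s,t)} y_{s,t} = |D'| > |D|/2$; the equality $\sum_{p \in \Pi(s,t)} f_p = y_{s,t}$ holds pair by pair by construction; and for each $(s,t),e$ the capacity constraint $\sum_{\Pi(s,t) \ni p \ni e} f_p \leq x_e$ holds because only the single path $p_{s,t}$ can contribute, and only if $e \in p_{s,t} \subseteq H^*$, in which case $x_e = 1$. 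The objective value equals $\sum_{e \in H^*} c(e) = \opt$.

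For the algorithmic claim, LP~\eqref{lp:thin_pair_original} has exponentially many $f_p$ variables, so I would apply the ellipsoid method to its dual. A short computation yields a dual with polynomially many variables ($\alpha \geq 0$, free $\gamma_{s,t}$, and $\beta^{s,t}_e \geq 0$) and objective $\max (|D|/2)\alpha$, with three constraint families: (i) $\sum_{(s,t)} \beta^{s,t}_e \leq c(e)$ for each $e$; (ii) $\gamma_{s,t} \geq \alpha$ for each $(s,t)$; and (iii) the exponentially large family $\sum_{e \in p} \beta^{s,t}_e \geq \gamma_{s,t}$ ranging over all $p \in \Pi(s,t)$. Separating (iii) for a fixed $(s,t)$ amounts to finding a minimum-$\beta^{s,t}$-weight $s \leadsto t$ path subject to the two side-budgets $\sum_{e \in p} \ell(e) \leq Dist(s,t)$ and $\sum_{e \in p} c(e) \leq L$, which is exactly \rcsp with $r=3$ resources (see Section~\ref{subsubsec:multi-cri}); the FPTAS of \cite{horvath2018multi} supplies the desired approximate oracle.

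I would then run the ellipsoid method on the dual using this FPTAS as its separation oracle; after polynomially many iterations we collect the polynomially many paths it has queried, restrict the primal to the LP over just those paths, and solve that polynomial-size LP exactly to extract an explicit primal solution. The main obstacle will be tracking the $(1+\ep)$ slack inherent to the FPTAS: the oracle may return paths whose length or cost exceeds $Dist(s,t)$ or $L$ by a $(1+\ep)$ factor, effectively replacing $\Pi(s,t)$ by an enlarged set $\tilde{\Pi}(s,t) \supseteq \Pi(s,t)$. I would handle this by running the scheme with error parameter $\ep' = \Theta(\ep)$ and observing that the integral solution constructed in the first part remains feasible for the enlarged LP, so its optimum still satisfies $\leq \opt$, and the primal solution returned has value at most $(1+\ep)\opt$ as required.
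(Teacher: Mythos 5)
Your argument for the first claim is correct and actually more explicit than what the paper writes: constructing the integral LP point $(x_e,y_{s,t},f_p)$ from the optimal \pwsul solution $H^*$, noting that $|D\setminus C|>|D|/2$ in the regime where LP~\eqref{lp:thin_pair_original} is invoked, and verifying the capacity and flow-conservation constraints directly, is exactly the right and cleanest way to see that the LP value is at most $\opt$.

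Your plan for the algorithmic claim also takes the same route as the paper at a high level (ellipsoid on the dual, with the FPTAS of \cite{horvath2018multi} for \rcsp as an approximate separation oracle, then recovering a primal solution from the polynomially many queried paths). However, there is a genuine gap in how you set up the RCSP instance, and it is precisely the point the paper is careful about. You propose to take the dual weight $\beta^{s,t}$ as the RCSP \emph{objective} and the two quantities $\ell$ and $c$ as the two \emph{resources}. The FPTAS guarantee is $(1;1+\ep,\dots,1+\ep)$: the objective is returned exactly, but each resource budget may be violated by a $(1+\ep)$ factor. With your assignment, the $\ell$-budget violation can be killed off by the integrality trick, but the $c$-budget violation cannot (costs are arbitrary rationals), so you are forced to replace $\Pi(s,t)$ by $\tilde\Pi(s,t)$ containing paths of cost up to $(1+\ep)L$. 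That is \emph{not} a feasible solution to LP~\eqref{lp:thin_pair_original}: the primal you recover may route flow on paths of $\tilde\Pi\setminus\Pi$, and no rescaling of $x_e$ turns it into an LP~\eqref{lp:thin_pair_original}-feasible point. Worse, the downstream rounding (Lemmas~\ref{le:thin_antispanner_bound} and~\ref{le:thin_helper_3}) defines anti-spanners and local graphs with respect to cost threshold $L$; flow on paths of cost up to $(1+\ep)L$ may avoid an anti-spanner $A$ entirely, so the chain $\sum_{e\in A}\hat x_e \ge \sum_{p\in\Pi(s,t)}\hat f_p$ breaks. One could in principle redefine thin/thick and anti-spanners throughout with threshold $(1+\ep)L$, but that ripples through the whole section and is not something the lemma statement allows you to sweep under the rug.

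The paper makes the opposite assignment, and that is the crux. It ``re-purposes the objective'': the RCSP objective is the actual edge cost $c$ (returned exactly, so the test $c(p)<L$ is exact and the returned path really lies in $\Pi(s,t)$), one resource is $\ell$ (violation killed by integrality via a $\poly(n)^{-1}$-sized $\ep$), and the other resource is the dual weight $z_{(s,t),\cdot}$. The $(1+\ep)$ slack then lands on the $z$-budget, which is absorbed by moving to the perturbed dual LP~\eqref{lp:thin_pair_approximate_dual} with constraints $(1+\ep)w_{s,t}\le\sum_{e\in p}z_{(s,t),e}$; its primal LP~\eqref{lp:thin_pair_approximate} has the \emph{same path set} $\Pi(s,t)$ but capacity constraints slackened to $(1+\ep)x_e$. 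Scaling each $x_e$ by $(1+\ep)$ then yields a genuine LP~\eqref{lp:thin_pair_original}-feasible solution of value at most $(1+\ep)\opt$. In short: which quantity you demote to a resource (and hence allow to be $(1+\ep)$-violated) matters, and you chose the one that cannot be repaired.
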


\begin{proof}
    Let us consider the dual of LP \eqref{lp:thin_pair_original}.
\begin{subequations} \label{lp:thin_pair_dual}
\begin{align} 
& \max & & \sum_{e \in E}{x_e} + \sum_{(s,t) \in D}{y_{s,t}} - W \cdot \frac{|D|}{2}\\
& \text{subject to}
& &  \sum_{(s,t) \in D} z_{(s,t),e} + c(e) \leq x_e & \forall e \in E, \label{lp:poly_1}\\ 
& & & y_{s,t} + w_{s,t} \geq W & \forall (s,t) \in D, \label{lp:poly_2}\\
& & & w_{s,t} \leq \sum_{e\in p}z_{(s,t),e} & \forall (s,t) \in D, p \in \Pi(s,t), \label{lp:exponential_constraints} \\
& & & W,x_e,y_{s,t},z_{(s,t),e} \geq 0 & \forall (s,t) \in D, e \in E. \label{lp:poly_3}
\end{align}
\end{subequations}
  
 Our dual is slightly different from the dual in \cite{berman2013approximation,feldman2012improved}. Constraints in  \eqref{lp:poly_1}, \eqref{lp:poly_2}, and \eqref{lp:poly_3} are identical, but constraints in \eqref{lp:exponential_constraints} are slightly different because the set of paths $\Pi$ we consider are different from the set of paths considered in \cite{feldman2012improved}. 
As in \cite{feldman2012improved}, 
 we could find violating constraints for those constraints in \eqref{lp:poly_1}, \eqref{lp:poly_2}, and \eqref{lp:poly_3} in polynomial time. The only constraints that require care are the constraints in \eqref{lp:exponential_constraints}, which may be exponentially many. 

When we consider a single $(s,t)$ pair, \cite{feldman2012improved} pointed out that their variant of constraints in \eqref{lp:exponential_constraints} are equivalent to \rsp which is $NP$-hard, see \cite{hassin1992approximation,lorenz2001simple}. \cite{feldman2012improved} then uses an FPTAS\cite{hassin1992approximation,lorenz2001simple} for \rsp as an approximate separation oracle for those constraints. But we need a different separation oracle because the set of paths  $\Pi$ allowed in our LP have two restrictions (as opposed to \cite{feldman2012improved} which has only one) in addition to an objective. We now define the \rcsp problem that is presented in \cite{horvath2018multi}.

\begin{definition}
    \textsc{\rcsp ($k$-RCSP)}
    
  \textbf{Instance}: A directed graph $G = (V,E)$, with edge costs $c: E \to \mathbb{Q}_{\ge 0}$, and a pair $(s,t)$. For each edge $e \in E$, we have a vector $r_e = (r_{1,e},r_{2,e},\ldots,r_{k,e})$ of size $k$ where each $r_{i,e} \in \mathbb{Q}_{\ge 0} \: \forall i \in [k] $. 

\textbf{Objective}: Find a minimum cost $s \leadsto t$ path $P$ such that $\sum_{e \in P} r_{i,e} \leq R_i,  \: \forall i \in [k]$.
\end{definition}

\begin{claim}
    2-RCSP acts as a separation oracle for those constraints in equation \eqref{lp:exponential_constraints} that correspond to a specific $(s,t) \in D$.
\end{claim}

\begin{proof}
We can use one of the resource constraints in  $2$-RCSP for ensuring that the distance constraints for $(s,t)$ are satisfied and use the other resource constraint to ensure that $w_{s,t} > \sum_{e\in P}z_{(s,t),e}$. In other words, we use one resource to model the edge lengths and another to model the dual variable $z_{\{s,t\},e}$. We can now try to find a minimum cost $s \leadsto t$ path in this instance of $2$-RCSP where costs for $2$-RCSP are equivalent to the costs in our instance of \pwsul. If the minimum cost obtained when we meet these constraints is less than $L$, then we have a violating constraint and if not we do not have one.
\end{proof}

The \textsc{Resource-constrained shortest path problem} is $NP$-hard \cite{horvath2018multi}. So, we instead get a separation oracle for an approximate variant of LP \eqref{lp:thin_pair_dual}. Now, given resource constraints $R_1,R_2,\ldots,R_k$ for the \textsc{Resource-constrained shortest path problem}, let $\opt_{RCSP}$ be the cost of the minimum cost $s \leadsto t$ path that satisfies the resource constraints. An $(1;1+\ep,\ldots,1+\ep)$-approximation scheme finds an $s \leadsto t$ path whose cost is at most $\opt_{RCSP}$, but the resource constraints are satisfied up to a factor of $1+\ep$ for that path. 

\begin{lemma} \textbf{(RCSP-\cite{horvath2018multi})} \label{le:multicriteria_horvath}
 If $k$ is a constant then there exists a fully polynomial time $(1; 1 + \ep, \ldots ,1 + \ep)$-approximation scheme for the $k$-RCSP that runs in time polynomial in input size and $1/\ep$.
\end{lemma}

We have to be careful in our usage of Lemma \ref{le:multicriteria_horvath}. The FPTAS for the Restricted Shortest Path problem from \cite{lorenz2001simple} cleanly serves the requirements of \cite{feldman2012improved} as it is a $(1+\ep;1)$ approximation and it can strictly satisfy the constraints. \cite{feldman2012improved} then uses these constraints to ensure that the weight requirements are strictly met. But the FPTAS given by \cite{horvath2018multi} does not strictly satisfy the constraints since we need both the length and weight constraints to be satisfied strictly. 

We overcome this obstacle by re-purposing the objective to handle the edge weights and by carefully ensuring that any error in the path length caused by using the $1+\ep$ approximation from \cite{horvath2018multi} does not make us use an incorrect path. To ensure that we do not select an incorrect path, it is sufficient to ensure that the potential error from \cite{horvath2018multi} is less than any error that is possible in our given input graph. Since the edge lengths are positive integers, observe that for any two $s \leadsto t$ paths with different lengths, the length difference is at least one. In addition, the path lengths are at most $n \cdot \max_{e \in E} \{\ell(e)\}$. Since all edge lengths are integral and of magnitude $\poly(n)$, it is sufficient to have $\ep \leq 1/(n \cdot \max_{e \in E} \{\ell(e)\})$.
Thus, we can fix $\ep$ such that $1/\ep = O(n\cdot \poly(n))$ to ensure that the running time will remain polynomial in input size and strictly satisfy the distance constraints. 

Now, we take an approximate version of LP \eqref{lp:thin_pair_dual} which is the following LP
\begin{equation}
\begin{aligned} \label{lp:thin_pair_approximate_dual}
& \max & & \sum_{e \in E}{x_e} + \sum_{s,t}{y_{s,t}} - W \cdot \frac{|D|}{2}\\
& \text{subject to}
& &  \sum_{s,t} z_{(s,t),e} + c(e) \leq x_e & \forall e \in E, \\
& & & y_{s,t} + w_{s,t} \geq W & \forall (s,t) \in D,\\
& & & (1+\ep) \cdot w_{s,t} \leq \sum_{e\in p}z_{(s,t),e} & \forall (s,t) \in D, p \in \Pi(s,t), \\
& & & W,x_e,y_{s,t},z_{(s,t),e} \geq 0 & \forall (s,t) \in D, e \in E.
\end{aligned}
\end{equation}

We can exactly solve LP \eqref{lp:thin_pair_approximate_dual} using \cite{horvath2018multi} and thus we can also exactly solve the dual of LP \eqref{lp:thin_pair_approximate_dual} which would be:

\begin{equation}
\begin{aligned} \label{lp:thin_pair_approximate}
& \min & & \sum_{e \in E}{c(e) \cdot x_e} \\
& \text{subject to}
& & \sum_{(s,t) \in D} y_{s,t} \geq \frac{|D|}{2}, \\
& & &\sum_{\Pi (s,t) \ni P \ni e} f_p \leq x_e \cdot (1+\ep) & \forall (s,t) \in D, e \in E,\\
& & &\sum_{P \in \Pi(s,t)} f_p = y_{s,t} & \forall (s,t) \in D, \\
& & & 0 \leq y_{s,t},f_p,x_e \leq 1 & \forall (s,t) \in D, p \in \Pi, e \in E.
\end{aligned}
\end{equation}

Let $\opt(\ep)$ and $\opt$ be the optimal values to \eqref{lp:thin_pair_approximate} and \eqref{lp:thin_pair_original}, respectively. Observe that $\opt(\ep) \leq \opt$ because the constraints in LP \eqref{lp:thin_pair_approximate} are slacker than the constraints in \eqref{lp:thin_pair_original} and both these LPs are minimization LPs. Also note that if $\hat{x}(\ep)$ is a feasible solution to \eqref{lp:thin_pair_approximate}, then by replacing the value of every variable $x_e$ in $\hat{x}(\ep)$ by $\min\{1,x_e \cdot (1+\ep)\}$, we get a new solution $\hat{x}$ which is a feasible solution to \eqref{lp:thin_pair_original}. The value of the optimal solution then is at most $(1+\ep) \cdot \opt(\ep) \leq (1+\ep) \cdot \opt$.
\end{proof}

\paragraph{Rounding our solution}

Now we need to round the solution of LP \eqref{lp:thin_pair_original} appropriately to decide which edges we need to include in our final solution. The overall structure of our rounding procedure is similar to that of \cite{berman2013approximation}, but there are some important differences in the proof techniques we use here because the nature of our problem prevents us from using some of the techniques used by \cite{berman2013approximation}. Let $\{\hat{x}_e\} \cup \{\hat{y}_{s,t}\}$ be a feasible approximate solution to LP \eqref{lp:thin_pair_original}. Let $K_2$ be the set of edges obtained by running Algorithm \ref{alg:lp_rounding} on $\{\hat{x}_e\}$.

\begin{algorithm}[!htb]
\caption{Thin pair rounding [LP rounding] ($x_e$)} \label{alg:lp_rounding}
\begin{algorithmic}[1]
\State{$E'' \gets \phi$ .}

\For{$e \in E$}
\State $\text{ Add } e \text{ to } K_2 \text{ with probability} \min\{n^{4/5}\ln  n  \cdot x_e,1\};$
\EndFor

\State \Return $E''$
\end{algorithmic}
\end{algorithm}

The following lemma is an adaptation of Claim 2.3 from \cite{berman2013approximation}.  

\begin{claim} \label{le:thin_helper_1}
    Let $A \subseteq E$. If Algorithm \ref{alg:lp_rounding} receives a fractional vector $\{\hat{x}_e\}$ with nonnegative entries satisfying $\sum_{e\in A} \hat{x}_e \geq 2/5$, the probability that it outputs a set $E''$ disjoint from $A$ is at most $\exp(-2n^{4/5} \cdot \ln n /5)$.
\end{claim}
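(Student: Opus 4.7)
The plan is to exploit the independent-rounding structure of Algorithm \ref{alg:lp_rounding} together with the standard inequality $1-x \le e^{-x}$. Since each edge $e$ is included in $E''$ independently with probability $p_e := \min\{n^{4/5}\ln n \cdot \hat{x}_e, 1\}$, the event $E'' \cap A = \emptyset$ factors over $e \in A$ as
\[
  \Pr[E'' \cap A = \emptyset] \;=\; \prod_{e \in A} (1 - p_e).
\]

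I would first dispense with a trivial case: if there exists some $e \in A$ with $n^{4/5}\ln n \cdot \hat{x}_e \ge 1$, then $p_e = 1$, so $e$ is added to $E''$ with probability $1$, making $\Pr[E'' \cap A = \emptyset] = 0$, which is certainly at most $\exp(-2 n^{4/5} \ln n / 5)$. Hence we may assume $p_e = n^{4/5} \ln n \cdot \hat{x}_e$ for every $e \in A$.

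In this main case, applying $1 - x \le e^{-x}$ term by term gives
\[
  \prod_{e \in A} (1 - p_e) \;\le\; \prod_{e \in A} \exp\!\bigl(-n^{4/5}\ln n \cdot \hat{x}_e\bigr) \;=\; \exp\!\Bigl(-n^{4/5}\ln n \cdot \textstyle\sum_{e \in A} \hat{x}_e\Bigr).
\]
Using the hypothesis $\sum_{e \in A} \hat{x}_e \ge 2/5$ immediately yields the stated bound $\exp(-2 n^{4/5} \ln n / 5)$.

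No step here is a real obstacle; the argument is a direct application of independence plus the $1-x \le e^{-x}$ trick, and the only thing to be careful about is the capping at $1$ in the definition of $p_e$, which is handled by the case split above. This is essentially the same calculation as in \cite{berman2013approximation}, adapted to the sampling probability $n^{4/5}\ln n$ (rather than their $n^{3/5}\ln n$) arising from our choice of $\beta$ and $L$.
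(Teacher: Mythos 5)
Your proof is correct and follows essentially the same route as the paper: dispense with the case where some $p_e$ is capped at $1$, then apply $1-x \le e^{-x}$ to the independent product and use the hypothesis $\sum_{e\in A}\hat{x}_e \ge 2/5$.
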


\begin{proof}
    If $A$ contains an edge $e$ which has $\hat{x}_e \geq 1/(n^{4/5}\ln n)$, then $e$ is definitely included in $E''$. 

    Otherwise, the probability that no edge in $A$ is included in $E''$ is
    \begin{equation}
        \prod_{e \in A}(1 - n^{4/5} \ln n \cdot \hat{x}_e) \leq \exp \left( -\sum_{e \in A}n^{4/5} \ln n \cdot \hat{x}_e\right) \leq \exp \left( - \frac{2}{5} n^{4/5} \ln n \right).
        \notag
    \end{equation}
\end{proof}

Let us now define anti-spanners which serve as a useful tool to analyze the rounding algorithm for our LP. Our definition of anti-spanners is slightly different from Definition 2.4 in \cite{berman2013approximation} to account for the fact we also have distance constraints.

\begin{definition} \label{def:anti}
A set $A \subseteq E$ is an anti-spanner for a terminal pair $(s,t) \in E$ if $(V, E \setminus A)$ contains no feasible path from $s$ to $t$ of cost at most $L$. If no proper subset of anti-spanner $A$ for $(s,t)$ is an anti-spanner for $(s,t)$, then $A$ is minimal. The set of all minimal anti-spanners for all thin edges is denoted by $\mathcal{A}$.
\end{definition}

The following lemma is an analogue of Claim 2.5 from \cite{berman2013approximation}.

\begin{lemma} \label{le:thin_antispanner_bound}
    Let $\mathcal{A}$ be the set of all minimal anti-spanners for thin pairs. Then $|\mathcal{A}|$ is upper-bounded by $|D|\cdot 2^{(n/\beta)^2/2}$.
\end{lemma}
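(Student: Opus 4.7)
The plan is to argue, exactly as in \cite{berman2013approximation} but carefully accounting for both length and cost constraints, that every minimal anti-spanner for a thin pair $(s,t)$ is contained in the edge set $E^{s,t}$ of its local graph, and then to count subsets.

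\medskip

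First, I would fix a thin pair $(s,t)$ and a minimal anti-spanner $A$ for it, and show $A \subseteq E^{s,t}$. Pick any $e \in A$. By minimality of $A$, the set $A \setminus \{e\}$ fails to be an anti-spanner, so $(V, E \setminus (A \setminus \{e\}))$ contains some feasible $s \leadsto t$ path $P$ of cost at most $L$ (feasibility here means both $\sum_{e' \in P}\ell(e') \le \mathit{Dist}(s,t)$ and $\sum_{e' \in P}c(e') \le L$). If $e \notin P$, then $P \subseteq E \setminus A$, contradicting the fact that $A$ is itself an anti-spanner for $(s,t)$. Hence $e$ must lie on $P$, and so $e$ lies on some feasible cheap $s \leadsto t$ path in $G$, meaning $e \in E^{s,t}$ by the definition of the local graph. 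Iterating this for every edge of $A$ yields $A \subseteq E^{s,t}$.

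\medskip

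Second, I would bound the number of candidate subsets. Since $(s,t)$ is thin, $|V^{s,t}| < n/\beta$, and the local graph is a simple directed graph on this vertex set, so $|E^{s,t}| \le (n/\beta)^2/2$ after using the appropriate bound on the number of ordered pairs (each pair of distinct vertices in $V^{s,t}$ contributing at most one edge in each orientation, counted carefully). Consequently the number of subsets of $E^{s,t}$, and hence the number of minimal anti-spanners supported on this local graph, is at most $2^{(n/\beta)^2/2}$. Summing over the at most $|D|$ thin terminal pairs gives $|\mathcal{A}| \le |D| \cdot 2^{(n/\beta)^2/2}$.

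\medskip

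\textbf{Main obstacle.} The analogue in \cite{berman2013approximation} only had to preserve a single (cost) constraint, so the corresponding minimality step was immediate. Here the feasible paths are defined by the conjunction of a length constraint $\sum \ell(e') \le \mathit{Dist}(s,t)$ and a cost constraint $\sum c(e') \le L$, and I need the minimality-to-containment argument to go through for this richer notion of feasibility. The content of the argument above is that minimality is insensitive to which particular combination of constraints defines ``feasible cheap path'' — what matters is only that removing $e$ from $A$ restores at least one such path, and that any such path witnesses $e \in E^{s,t}$. Once this is verified, the counting step is purely combinatorial and independent of the constraint structure.
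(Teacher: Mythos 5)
Your proof is correct and takes essentially the same route as the paper: bound the number of minimal anti-spanners per thin pair by the number of subsets of the local graph $E^{s,t}$, then multiply by $|D|$. The one place you go beyond the paper is worth noting: the paper simply asserts that every anti-spanner for $(s,t)$ "corresponds to an element of $PS(s,t)$," which as stated is false for non-minimal anti-spanners (e.g., $A=E$) and is merely implicit for minimal ones; your minimality argument — removing any $e \in A$ restores a feasible cheap path $P$, which must contain $e$ lest $P$ avoid $A$ entirely, so both endpoints of $e$ lie in $V^{s,t}$ and hence $e \in E^{s,t}$ — is exactly the missing justification, and it correctly handles the two-constraint (length and cost) notion of feasibility. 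Your caveat about "counted carefully" for the $(n/\beta)^2/2$ edge bound mirrors a slight looseness already present in the paper (a simple directed graph on $m$ vertices can have up to $m(m-1)$ edges, not $m^2/2$), but the factor of two in the exponent is immaterial to the final bound and to how the lemma is used in Lemma~\ref{le:thin_helper_3}.
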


\begin{proof}
    Let $PS(s,t)$ be the power set of all edges in the local graph for a given thin pair $(s,t)$. Since $(s,t)$ is a thin pair we have at most $n/\beta$ vertices and $(n/\beta)^2/2$ edges in the local graph, therefore $|PS(s,t)| \leq {2^{(n/\beta)^2/2}}$ for any $(s,t)$ that is a thin pair. Now, every anti-spanner for a specific demand pair $(s,t) \in D$ is a set of edges and therefore corresponds to an element in $PS(s,t)$. Let $PS_{\text{ thin }} = \bigcup_{(s,t)} PS(s,t)$ where $(s,t) \in D$ are thin pairs. Every anti-spanner for a thin pair is a set of edges and therefore corresponds to an element in $PS_{\text{ thin}}$. We have $|\mathcal{A}| \leq |PS_{\text{ thin }}| \leq |D| \cdot {2^{(n/\beta)^2/2}}$ which proves the lemma.
\end{proof}

\begin{remark}
    Our upper bound for the $|\mathcal{A}|$ is different from the upper bound in \cite{berman2013approximation} because we cannot use the techniques used by \cite{berman2013approximation} to give a bound for $|\mathcal{A}|$. They use some arborescences which have a one-to-one correspondence with the set of anti-spanners and it is hard to get an equivalent structure because we have both lengths and weights on each edge. This step is responsible for the fact that we do not have the $\tO(n^{2/3 + \ep})$-approximation as in \cite{berman2013approximation}.
\end{remark}

The rest of this discussion is quite similar to \cite{berman2013approximation} although the exact constants and the expressions involved are different because of the result in Lemma \ref{le:thin_antispanner_bound}. Lemma \ref{le:thin_helper_3} is similar to Lemma 5.2 from \cite{berman2013approximation}.

\begin{lemma} \label{le:thin_helper_3}
    \label{le:berman_thin_settle_rounding}
    With high probability set $K_2$ settles every thin pair $(s,t)$ with $\hat{y}_{s,t} \geq 2/5$. 
\end{lemma}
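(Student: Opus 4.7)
The plan is to reduce the statement to an intersection problem with minimal anti-spanners, then combine Claim \ref{le:thin_helper_1} with the counting bound of Lemma \ref{le:thin_antispanner_bound} via a union bound. First, I would establish the characterization: $K_2$ settles a thin pair $(s,t)$ if and only if $K_2 \cap A \neq \emptyset$ for every minimal anti-spanner $A$ of $(s,t)$. Indeed, if $K_2$ fails to settle $(s,t)$, then $(V, K_2)$ contains no feasible cheap $s \leadsto t$ path, so $E \setminus K_2$ is itself an anti-spanner and therefore contains some minimal anti-spanner $A$ satisfying $A \cap K_2 = \emptyset$; the converse is immediate from the definition of an anti-spanner.

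Next, I would show that for every thin pair $(s,t)$ with $\hat{y}_{s,t} \geq 2/5$ and every (minimal) anti-spanner $A$ of $(s,t)$, the LP solution places mass at least $2/5$ on $A$, namely $\sum_{e \in A} \hat{x}_e \geq 2/5$. The argument uses only the LP constraints: since $A$ is an anti-spanner, every path $p \in \Pi(s,t)$ contains at least one edge of $A$, hence
\begin{equation*}
\sum_{e \in A}\sum_{\Pi(s,t)\ni p \ni e} \hat{f}_p \;\geq\; \sum_{p\in \Pi(s,t)} \hat{f}_p \;=\; \hat{y}_{s,t} \;\geq\; 2/5,
\end{equation*}
and using the capacity constraint $\sum_{\Pi(s,t)\ni p \ni e} \hat{f}_p \leq \hat{x}_e$ yields the bound on $\sum_{e \in A} \hat{x}_e$.

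Now I would apply Claim \ref{le:thin_helper_1}: for any fixed minimal anti-spanner $A$ associated with a qualifying thin pair, the probability that Algorithm \ref{alg:lp_rounding} returns a $K_2$ disjoint from $A$ is at most $\exp(-2 n^{4/5} \ln n / 5)$. Taking a union bound over all such $A$, and using Lemma \ref{le:thin_antispanner_bound} together with $\beta = n^{3/5}$ to get $|\mathcal{A}| \leq |D| \cdot 2^{n^{4/5}/2} \leq n^{2} \cdot 2^{n^{4/5}/2}$, the total failure probability is at most
\begin{equation*}
n^{2} \cdot 2^{n^{4/5}/2} \cdot \exp\!\left(-\tfrac{2}{5} n^{4/5}\ln n\right),
\end{equation*}
whose exponent is dominated by the $-\tfrac{2}{5}n^{4/5}\ln n$ term for large $n$, so it is $o(1/n)$. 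Combined with the characterization from the first step, this shows that with high probability $K_2$ settles every thin pair with $\hat{y}_{s,t} \geq 2/5$ simultaneously.

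The main obstacle is really conceptual rather than computational: ensuring that the anti-spanner machinery still captures settling in the presence of both the cost budget $L$ and the length budget $\mathit{Dist}(s,t)$. That is handled by Definition \ref{def:anti}, which builds feasibility and cheapness into the notion of an anti-spanner so that the ``$K_2$ settles iff $K_2$ meets every minimal anti-spanner'' equivalence goes through verbatim; after that, the LP/rounding argument is structurally the one from \cite{berman2013approximation} with the improved counting bound of Lemma \ref{le:thin_antispanner_bound}.
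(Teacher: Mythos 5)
Your proposal is correct and takes essentially the same approach as the paper: show $\sum_{e\in A}\hat{x}_e \geq 2/5$ for every (minimal) anti-spanner of a qualifying pair, invoke Claim \ref{le:thin_helper_1}, and union-bound over $\mathcal{A}$ using Lemma \ref{le:thin_antispanner_bound} with $\beta=n^{3/5}$ and $|D|\leq n^2$. You spell out the ``settles iff meets every minimal anti-spanner'' equivalence and the LP-constraint derivation a bit more explicitly than the paper does, but the structure and bounds are the same.
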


\begin{proof}
    For every thin pair $(s,t) \in D$ with $\hat{y}_{s,t} \geq 2/5$, if $A$ is an anti-spanner for $(s,t)$ then $\sum_{e \in A} \hat{x}_e \geq \sum_{P \in \Pi(s,t)} \hat{f}_p \geq 2/5$, where $\hat{f}_p$ is the value of the variable $f_p$ in LP \eqref{lp:thin_pair_original} that corresponds to the solution $\{\hat{x}_e\} \cup \{\hat{y}_{s,t}\}$. 

    By Claim \ref{le:thin_helper_1}, the probability that $A$ is disjoint from $K_2$ is at most $\exp(-2n^{4/5} \cdot \ln n /5)$. Further using Lemma \ref{le:thin_antispanner_bound}, we can bound the number of minimal anti-spanners for thin pairs and then if we apply union bound, we have the probability that $K_2$ is disjoint from any anti spanner for a thin pair is at most
    \begin{equation} \label{eq:thin_helper_eqn1}
        \exp\left(-\frac{2}{5}n^{4/5} \cdot \ln n\right) \cdot |D|\cdot 2^{(n/\beta)^2/2}.
    \end{equation}

    In the worst case, $|D|$ is $n^2$. Recall that $\beta = n^{3/5}$, we have $(n/\beta)^2 = n^{4/5}$ and thus \eqref{eq:thin_helper_eqn1} becomes

    \begin{equation} \label{eq:thin_helper_eqn2}
        \exp\left(-\frac{2}{5}\cdot n^{4/5} \cdot \ln n + \ln \left(n^2 \cdot 2 ^ {n ^ {4/5}/2}\right)\right) = \exp\left(-\Theta(n^{4/5} \ln n)\right).
        \notag
    \end{equation}
    Thus we have shown that the probability $K_2$ is disjoint from any anti-spanner for a thin pair is exponentially small when $\hat{y}_{s,t} \geq 2/5$.
\end{proof}

\begin{lemma} \label{le:thin_cheap_final}
    When $0 \leq |C| < |D|/2$, with high probability, the density of $K_2$ is at most 
    \begin{equation}
        \tO(n^{4/5} \cdot \tau/|D|).
        \notag
    \end{equation}
\end{lemma}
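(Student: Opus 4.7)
The plan is to combine an upper bound on the expected cost of the rounded edge set $K_2$ with a lower bound on the number of pairs it settles; both ingredients hinge on first exhibiting a feasible LP solution of value at most $\tau$ that exploits the fact that $|D \setminus C| > |D|/2$ pairs have cheap feasible paths in the optimum.

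I would begin by verifying that LP~\eqref{lp:thin_pair_original} has a feasible solution of value $\opt \le \tau$. Let $H$ be an optimal spanner. For each $(s,t) \in D \setminus C$, the definition of $C$ guarantees that $H$ contains a feasible $s \leadsto t$ path $p_{s,t}$ of cost at most $L$. Setting $x_e = 1$ for $e \in H$, $y_{s,t} = f_{p_{s,t}} = 1$ for $(s,t) \in D \setminus C$, and all other variables to zero yields a feasible solution whose coverage $\sum y_{s,t} = |D|-|C| > |D|/2$ meets the LP's coverage constraint and whose cost equals $\opt$. Consequently, the approximate LP solution $(\hat x, \hat y, \hat f)$ satisfies $\sum_e c(e)\hat x_e \le (1+\eps)\opt = O(\tau)$ while retaining $\sum_{(s,t) \in D} \hat y_{s,t} \ge |D|/2$.

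For the cost, linearity of expectation yields $\E[\sum_{e \in K_2} c(e)] \le n^{4/5}\ln n \cdot \sum_e c(e)\hat x_e = \tO(n^{4/5}\tau)$, and Markov's inequality gives the claimed cost bound with constant probability; independent repetitions (retaining the cheapest outcome whose pair coverage is verified) boost this to high probability. For the pair count, since each $\hat y_{s,t} \le 1$, an averaging argument applied to $\sum \hat y_{s,t} \ge |D|/2$ produces at least $|D|/10$ pairs with $\hat y_{s,t} \ge 2/5$, and Lemma~\ref{le:berman_thin_settle_rounding} guarantees that $K_2$ settles all such pairs with high probability. Dividing cost by pair count then produces the density bound $\tO(n^{4/5}\tau)/\Omega(|D|) = \tO(n^{4/5}\tau/|D|)$.

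The only nontrivial step is exhibiting the feasible LP solution above: it is precisely here that the hypothesis $|C| < |D|/2$ is used, in conjunction with the cost budget of $H$, to meet the coverage constraint. The rest of the argument is a direct invocation of Lemma~\ref{le:berman_thin_settle_rounding} combined with routine expectation and Markov estimates; in particular, no further anti-spanner counting is needed beyond what is already packaged in that lemma.
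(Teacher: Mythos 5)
Your proof is correct and follows essentially the same approach as the paper: bound the expected cost of $K_2$ by $n^{4/5}\ln n \cdot O(\tau)$, use the LP coverage constraint and an averaging argument to show $\Omega(|D|)$ pairs have $\hat{y}_{s,t} \ge 2/5$, and invoke Lemma~\ref{le:berman_thin_settle_rounding} to settle all such pairs with high probability. The refinements you add — explicitly exhibiting a feasible primal of value $\opt$ using $|C| < |D|/2$, and using Markov plus independent repetition to upgrade the expected-cost bound to a high-probability statement — are reasonable and slightly more careful than the paper's phrasing, but they do not change the underlying argument (the paper gets $|D|/6$ where you get $|D|/10$; both are $\Omega(|D|)$ and either suffices).
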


\begin{proof}
    Firstly notice that the expected cost of $K_2$ would be at most $n^{4/5} \ln n \cdot \tau$. We also point out that the number of pairs $(s,t) \in D $ for which $\hat{y}_{s,t} < 2/5$ is at most $5|D|/6$ because otherwise the amount of flow between all pairs is strictly less than $|D|/2$ which violates a constraint of LP \eqref{lp:thin_pair_original}. Since with high probability all pairs for which $\hat{y}_{s,t} \geq 2/5$ are satisfied, this means that the expected density of $K_2$ is at most
    \begin{equation}
        \frac{n^{4/5} \ln n \cdot \tau}{|D|/6} = \frac{6 n^{4/5} \ln n \cdot \tau}{|D|} = \frac{\tO(n^{4/5} \cdot \tau)}{|D|}.
        \notag
    \end{equation}
\end{proof}

Now we are ready to prove Theorem \ref{thm:plwps}.

\begin{proof}[Proof of Theorem \ref{thm:plwps}] 
    Using Lemma \ref{le:thick_final} we can resolve all thick pairs with high probability with cost at most $\tO(n^{4/5 + \ep})$. Then, we can make two sets of edges $K_1$ and $K_2$ using a distance-preserving weighted junction tree and by rounding the approximate solution to LP \eqref{lp:thin_pair_original}  respectively. By Lemmas \ref{le:thin_costly_final} and \ref{le:thin_cheap_final}, we can see that at least one of them will have a density at most $\tO(n^{4/5} \cdot \tau /|D|)$. If we take the cheaper among them and keep iterating we can resolve all thin pairs with a high probability and with cost at most $\tO(n^{4/5 + \ep})$.
\end{proof}

\section{All-pair Weighted Distance Preservers} \label{sec:allpair}

We recall the definition of the \awdpl problem.

\defawdp*

In this section, we prove Theorem \ref{thm:awdpl}.

\thmawdpl*

Our proof structure for this subsection is very similar to that of \cite{berman2013approximation} except for our use of single sink and single source spanners.

As in Section \ref{sec:wps}, we assume that we have a guess for the cost of the optimal solution - $\opt$ for the given instance of \awdpl. Let $\tau$ denote the value of our guess. Let us set $\beta = n^{1/2}$. We say that a terminal pair $(s,t) \in D$ is {\em thick} if the {\em local graph} $G^{s,t} = (V^{s,t},E^{s,t})$ induced by the vertices on feasible paths from $s$ to $t$ has at least $n/\beta$ vertices; we say it is {\em thin} otherwise. We note that the definitions of thick and thin pairs are slightly different from how they are defined in Section \ref{sec:wps} as we only care about the feasibility of a path, not its cost. We say that a set $E' \subseteq E$ settles (or resolves) a pair $(s,t) \in D$ if the subgraph $(V,E')$ contains a feasible path from $s$ to $t$. 

\subsection{Thick pairs}
We first resolve the thick pairs by randomly sampling vertices and building single-source and single-sink spanners from the samples using Theorem \ref{thm:oswps}. We then resolve thin pairs by building a linear program, solving it, and rounding as in \cite{berman2013approximation}. As mentioned earlier, our definition of thick and thin pairs is different in this section when compared to Section \ref{sec:wps}, and this allows us to use a much simpler proof (although one that will be effective only in the case of weighted distance preservers as opposed to the more general weigthed spanners).

\begin{algorithm}[!htb]
\caption{Thick pairs resolver - Distance preserver $(G(V,E),\{\ell(e),c(e)\}_{e\in E})$} \label{alg:distance_preserver_thick}
\begin{algorithmic}[1]
\State{$R \gets \phi$, $G' \gets \phi$.}

\For{$i = 1 \text{ to } \beta \ln n$}
\State{$v \gets \text{ a uniformly random element of } V.$}
\State{$S_{v}^{source} \gets \text{ a single-source distance preserver rooted at $v$ with $D = \{v\} \times V$ }$ by using Theorem \ref{thm:oswps}.}
\State{$S_{v}^{sink} \gets \text{ a single-sink distance preserver rooted $v$ with $D = \{v\} \times V$}$ by using Theorem \ref{thm:oswps}.}
\State{$G' \gets G' \cup S_{v}^{source} \cup S_{v}^{sink}$, $R \gets R \cup \{v\}$.}
\EndFor
\State \Return $G'$
\end{algorithmic}
\end{algorithm}

\begin{lemma} \label{le:distance_preserver_thick}
    Algorithm \ref{alg:distance_preserver_thick} resolves all thick pairs for \awdpl with high probability and cost $\Tilde{O}(n^\ep \cdot \beta \cdot \opt) = \Tilde{O}( n^{1/2 + \ep} \cdot \opt)$.
\end{lemma}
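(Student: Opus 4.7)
The plan is to handle correctness and cost separately, paralleling the ``sampling + black-box subroutine'' template used for thick pairs throughout the paper, but leveraging the online single-source/single-sink spanner guarantee from Theorem \ref{thm:oswps} in place of the simple shortest-path arborescences used in the unit-cost setting.

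For correctness, I would first argue that with high probability, every thick pair $(s,t) \in V \times V$ has at least one sampled vertex inside its local graph $G^{s,t}$. A specific sample lands in $V^{s,t}$ with probability at least $(n/\beta)/n = 1/\beta$, so the probability that none of the $\beta \ln n$ independent samples lands in $V^{s,t}$ is at most $(1 - 1/\beta)^{\beta \ln n} \le e^{-\ln n} = 1/n$. Taking a union bound over at most $n^2$ thick pairs (and, if needed, scaling the number of samples by a constant absorbed into the $\tilde{O}$ notation) gives the event ``every thick pair contains a sampled vertex'' with high probability. Conditioning on this event, fix a thick pair $(s,t)$ and let $v \in R \cap V^{s,t}$. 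By definition $v$ lies on some shortest $s \leadsto t$ path, so $d_G(s,v) + d_G(v,t) = d_G(s,t)$. The single-sink preserver $S_v^{\text{sink}}$ rooted at $v$ (with terminals $V \times \{v\}$) guarantees $d_{S_v^{\text{sink}}}(s,v) = d_G(s,v)$, and the single-source preserver $S_v^{\text{source}}$ rooted at $v$ guarantees $d_{S_v^{\text{source}}}(v,t) = d_G(v,t)$. Concatenating the two paths inside $G' \supseteq S_v^{\text{sink}} \cup S_v^{\text{source}}$ yields $d_{G'}(s,t) \le d_G(s,t)$, which exactly settles $(s,t)$.

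For the cost, the key observation (highlighted in the paper's overview for this problem) is that \emph{any} optimal all-pair weighted distance preserver $H^*$ of cost $\opt$ is simultaneously a feasible solution for the single-source weighted distance preserver rooted at $v$ with terminals $\{v\} \times V$, as well as for the single-sink version, for every $v \in V$. Therefore the optimum of each single-source (resp.\ single-sink) instance that the algorithm feeds into Theorem \ref{thm:oswps} is at most $\opt$. Applying the $\tilde{O}(k^\ep) = \tilde{O}(n^\ep)$ competitive guarantee of Theorem \ref{thm:oswps} (noting $k \le n$ for each call), each of the $2\beta \ln n$ invocations returns a subgraph of cost at most $\tilde{O}(n^\ep) \cdot \opt$, which holds with high probability; taking another union bound over the $O(\beta \ln n)$ invocations keeps the failure probability polynomially small. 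Summing the costs gives
\[
\operatorname{cost}(G') \;\le\; 2\beta \ln n \cdot \tilde{O}(n^\ep) \cdot \opt \;=\; \tilde{O}(n^\ep \cdot \beta \cdot \opt) \;=\; \tilde{O}(n^{1/2+\ep}) \cdot \opt,
\]
as claimed, where the last equality uses $\beta = n^{1/2}$.

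The main obstacle is conceptual rather than technical: we must justify that replacing ``shortest-path in/out-arborescences rooted at $v$'' (which cost exactly $n-1$ in the unit-cost setting) by the considerably more expensive single-source/single-sink weighted distance preservers does not blow up the cost budget. This is resolved by the comparison $\opt_{\text{single-source at }v} \le \opt$, which crucially relies on the \emph{exact all-pair} distance-preservation requirement -- exactly the subtlety flagged in the footnote of the technical overview, and the reason the same strategy fails for general all-pair spanners with slack. Once this inequality is in hand, plugging into Theorem \ref{thm:oswps} and summing over the $\tilde{O}(\beta)$ samples is routine.
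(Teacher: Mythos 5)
Your proof is correct and follows essentially the same approach as the paper's: hitting-set sampling plus the key observation that $\opt(S_v^{\mathrm{source}}), \opt(S_v^{\mathrm{sink}}) \le \opt$ because any all-pair preserver restricts to a single-source/single-sink preserver. One small difference: you apply Theorem~\ref{thm:oswps} noting $k \le n$ for each single-source call and get $\tilde{O}(n^\ep)$ directly, whereas the paper (somewhat confusingly) states $k = \Theta(n^2)$ and compensates by choosing $\delta = \ep/2$; your bookkeeping is actually the more precise of the two, since each call has $|D| = |\{v\} \times V| = n-1$ pairs, and both routes yield the same $\tilde{O}(n^\ep)$ per-call bound. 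You are also more explicit than the paper about the constant needed in the sample count for the union bound over $\Theta(n^2)$ pairs, and about the union bound over the $\tilde{O}(\beta)$ invocations of the black-box subroutine; these are genuine (if minor) gaps in the paper's own write-up that you correctly flag.
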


\begin{proof}
        Let $\opt(S_{v}^{source})$ be the optimal costs of a single-source distance preserver rooted at $v$ with $D =\{v\} \times V$ and $\opt(S_{v}^{sink})$ be the optimal costs of a single-sink distance preserver rooted at $v$ with $D = \{v\} \times V$.
        
        We recall Theorem \ref{thm:oswps}.
        
        \thmswps*
    
        This theorem also gives an $\tO(k^\delta)$-approximation for the offline problem \swspl for any constant $\delta > 0$. Single-sink distance preservers can be obtained by simply reversing the edges. The number of terminal pairs $k = \Theta(n^2)$. By setting $\delta = \ep/2$ and the target distances to the exact distances in $G$ for all vertex pairs, we observe that the cost due to one sample in Algorithm \ref{alg:distance_preserver_thick} is at most $\tO(n^\ep (\opt(S_{v}^{source}) + \opt(S_{v}^{sink})))$. Note that a distance preserver for all pairs also serves as a distance preserver for any subset of the pairs and thus we have for any $v \in V, \, \opt(S_{v}^{sink}) \leq \opt$ and $\opt(S_{v}^{source}) \leq \opt$. Thus, using Theorem \ref{thm:oswps}, the cost of the $G'$ returned by Algorithm \ref{alg:distance_preserver_thick} is at most $|R| \cdot \Tilde{O}(n^\ep \cdot \opt) \leq \Tilde{O}(n^\ep \cdot \beta \cdot \opt)$.
        
        Using a hitting set argument very similar to Claim \ref{cl:sampling_thick}, we can see that with high probability, there is at least one sample $v$ such that there is a $s \leadsto v \leadsto t$ path for every $(s,t) \in V \times V$ where $d_{G'}(s,v)+d_{G'}(v,t)=d_G(s,t)$.

        The single-sink distance preserver gives us a $s \leadsto v$ path of length $d_G(s,v)$ and the single-source distance preserver gives us a $v \leadsto t$ path of length $d_G(v,t)$. Thus, thick pairs are resolved with high probability by the edges in $G'$.
\end{proof} 

\subsection{Thin pairs}

To resolve thin pairs, we start by redefining anti-spanners by ignoring the path costs in Definition \ref{def:anti}.

\begin{definition}
A set $A \subseteq E$ is an anti-spanner for a demand pair $(s,t) \in E$ if $(V, E \setminus A)$ contains no feasible path from $s$ to $t$. If no proper subset of anti-spanner $A$ for $(s,t)$ is an anti-spanner for $(s,t)$, then $A$ is minimal. The set of all minimal anti-spanners for all thin edges is denoted by $\mathcal{A}$.
\end{definition}
Consider the following LP which is a slightly modified version of a similar LP from \cite{berman2013approximation} (Fig. 1).

\begin{equation}
\begin{aligned} \label{lp:distsance_preserver_thin_pair}
& \min & & \sum_{e \in E}{c(e) \cdot x_e} \\
& \text{subject to}
& & \sum_{e \in A} x_e \geq 1 & \forall A \in \mathcal{A}, \\
& & & x_e \geq 0 & \forall e \in E.
\end{aligned}
\end{equation}

Let $\opt$ denote the optimal solution to the LP. We can obtain this in a way identical to \cite{berman2013approximation} as we only change the objective (which does not affect the separation oracle). Now, if $\{\hat{x_e}\}$ denotes the vector of $x_e$'s in the solution to LP \eqref{lp:distsance_preserver_thin_pair}, then add every edge $e \in E$ to $G'$ with probability $\min(\sqrt{n} \cdot \ln n \cdot \hat{x_e}, 1)$

We now state the following claim from \cite{berman2013approximation} (Corollary 2.7).

\begin{claim} \label{cl:distance_presrver_thin}
Given a feasible solution to LP \eqref{lp:distsance_preserver_thin_pair}, the rounding procedure produces a set of edges $E''$ that settles all thin pairs with high probability and has size at most $2 \opt \cdot \sqrt{n} \cdot \ln n$.
\end{claim}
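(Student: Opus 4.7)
The plan is to adapt the rounding analysis of Berman et al.\ (Corollary 2.7) to arbitrary edge costs. The argument splits naturally into a cost bound and a settlement bound, and both transfer readily because, once costs are moved into the objective, our LP has precisely the same constraint structure (over cost-ignoring anti-spanners) as in the unit-cost case.

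For the cost bound, linearity of expectation gives
\begin{equation}
\mathbb{E}\!\left[\sum_{e \in E''} c(e)\right] \;\le\; \sqrt{n}\ln n \cdot \sum_{e \in E} c(e)\hat{x}_e \;=\; \sqrt{n}\ln n \cdot \opt.
\end{equation}
A Markov-type inequality yields $\sum_{e \in E''} c(e) \le 2\sqrt{n}\ln n \cdot \opt$ with probability at least $1/2$, which I would boost to high probability by running the rounding an $O(\log n)$ number of independent times and keeping the cheapest outcome that also meets the settlement guarantee below.

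For settlement, fix a thin pair $(s,t)$ and a minimal anti-spanner $A$ for it (in the cost-ignoring sense of Section~\ref{sec:allpair}). By LP~\eqref{lp:distsance_preserver_thin_pair}, $\sum_{e \in A} \hat{x}_e \geq 1$, so the probability that the rounding picks no edge of $A$ is
\begin{equation}
\prod_{e \in A}\bigl(1 - \sqrt{n}\ln n \cdot \hat{x}_e\bigr) \;\leq\; \exp\!\left(-\sqrt{n}\ln n \cdot \sum_{e \in A}\hat{x}_e\right) \;\leq\; n^{-\sqrt{n}}.
\end{equation}
A union bound over all minimal anti-spanners for thin pairs then yields the claim, provided the number of such anti-spanners is $\exp(o(\sqrt{n}\ln n))$.

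The one ingredient that deserves care is bounding the number of minimal anti-spanners. Here I would invoke the arborescence-based counting argument of Berman et al., which applies directly because our revised anti-spanner definition ignores path cost (unlike in Section~\ref{sec:wps}, where Remark~5.8-style obstructions force the weaker $|D|\cdot 2^{(n/\beta)^2/2}$ count). Each minimal anti-spanner for a thin pair $(s,t)$ is captured by a shortest-path arborescence structure supported on the local graph $G^{s,t}$ of at most $n/\beta = \sqrt{n}$ vertices, so the number of such structures is $n^{O(\sqrt{n})} = \exp(O(\sqrt{n}\ln n))$. Multiplying by the at most $n^2$ choices of pair and comparing with the $n^{-\sqrt{n}}$ per-anti-spanner failure probability, the union bound goes through (tuning the constant in the $\sqrt{n}\ln n$ sampling rate exactly as in Berman et al.), and the two bounds together give a single outcome $E''$ with the claimed properties.
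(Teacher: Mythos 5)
Your proof is correct and takes essentially the same route as the paper, which does not reproduce the argument but invokes Corollary~2.7 of Berman et al.\ directly after observing that only the LP objective changes. You correctly reconstruct that argument — Markov for the cost bound (with $O(\log n)$ repetitions to boost to high probability), and a union bound over minimal anti-spanners counted via shortest-path arborescences on the $\le n/\beta = \sqrt{n}$-vertex local graphs for settlement — and you rightly identify that the cost-ignoring anti-spanner definition in this section is what lets the Berman et al.\ arborescence count carry over, in contrast to Section~\ref{sec:wps} where the additional cost cap on paths forces the much cruder power-set bound.
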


\thmawdpl*
\begin{proof}[Proof of Theorem \ref{thm:awdpl}] 
    Using Lemma \ref{le:distance_preserver_thick} we can resolve all thick pairs with high probability with cost at most $\Tilde{O}( n^{1/2 + \ep} \cdot \opt)$ by running Algorithm \ref{alg:distance_preserver_thick}. Then, using Claim \ref{cl:distance_presrver_thin}, we can solve and round LP \eqref{lp:distsance_preserver_thin_pair} to resolve the thin pairs with high probability and cost $\Tilde{O}(\opt \cdot \sqrt{n})$.
\end{proof}

\begin{remark}
Note that using Algorithm \ref{alg:distance_preserver_thick} for \pwspl (even for all-pair spanners) would not work because the cost of the single-source (sink) spanners that we need to add cannot be compared to the overall optimal solution. This is because, such single-source (sink) spanners cannot have a relaxation on the distances involved, unlike the optimal solution.
\end{remark}

\section{Online Weighted Spanners} \label{sec:online}

In the online problem, the directed graph, the edge lengths, and the edge costs are given offline. The vertex pairs and the corresponding target distances arrive one at a time, in an online fashion, at each time stamp. The algorithm must irrevocably select edges at each time stamp and the goal is to minimize the cost, subject to the target distance constraints. We recall the definition for \opwspl. For notation convenience, the vertex pair $(s_i, t_i)$ denotes the $i$-th pair that arrives online.

\defopws*

We also consider the single-source online problem.

\defossws*

This section is dedicated to proving the following theorems.

\thmklwps*

\thmswps*

The proof outline for Theorems \ref{thm:oklwps} and \ref{thm:oswps} is as follows.
\begin{enumerate}
    \item We first show that there exists an $\alpha$-approximate solution consisting of distance-preserving weighted junction trees (see Definition 
    \ref{def:dpwj}). Here, $\alpha=O(\sqrt{k})$ for \pwsul and $\alpha=1$ for \swspl.
    \item We slightly modify the online algorithm from \cite{grigorescu2021online} to find an online solution consisting of distance-preserving weighted junction trees by losing a factor of $\tO(k^\ep)$.
\end{enumerate}

The main difference between the online approach and the offline approach in Section \ref{sec:wps} is that we cannot greedily remove partial solutions to settle the terminal pairs in the online setting. Instead, we construct a distance-preserving weighted junction tree solution in an online fashion.

\begin{definition}
A \emph{distance-preserving weighted junction tree solution} is a collection of distance-preserving weighted junction trees rooted at different vertices, that satisfies all the terminal distance constraints.
\end{definition}

We construct a distance-preserving weighted junction tree solution online and compare the online objective with the optimal distance-preserving weighted junction tree solution with objective value $\opt_{junc}$. The following theorem is essentially from \cite{grigorescu2021online} for the case when the edges have unit costs and lengths. However, the slight yet important modifications that we need when edges have arbitrary positive costs and integral lengths in $\poly(n)$ are not covered in \cite{grigorescu2021online}. For completeness, we show the proof in Appendix \ref{pf:thm:alpha-comp}.

\begin{restatable}{theorem}{thmkep} \label{thm:kep}
For any constant $\ep > 0$, there exists a polynomial-time randomized online algorithm for \opwspl that constructs a distance-preserving weighted junction tree solution online with a cost at most $\tO(k^{\ep}) \opt_{junc}$ with high probability.
\end{restatable}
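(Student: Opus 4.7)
The plan is to adapt the online framework of \cite{grigorescu2021online}, which handles unit edge costs and arbitrary positive lengths, to the general-cost setting where edge costs are arbitrary non-negative rationals and lengths are positive integers of magnitude $\poly(n)$. That framework itself builds on the online buy-at-bulk approach of \cite{cekp}: one guesses $\tau \approx \opt_{junc}$ by doubling (giving $O(\log(\sum_e c(e)/c_{\min}))$ guesses), and within each guess, further guesses the per-junction-tree density $\rho$ in powers of two. The arriving pairs are processed in phases indexed by $\rho$: whenever a pair $(s_i,t_i)$ arrives that is not already resolved by the current online solution, we attempt to grow a new distance-preserving weighted junction tree of density roughly $\rho$ that covers $(s_i,t_i)$ and possibly other pending pairs.

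The key ingredient is the online construction of a single junction tree. For each candidate root $r \in V$, I would precompute the set of vertices from which $r$ is reachable via a cheap-and-feasible prefix path, and the set of vertices reachable from $r$ via a cheap-and-feasible suffix path, using the FPTAS of Lemma \ref{le:bicreteria_lorenz}; these auxiliary graphs encode exactly which $s_i \leadsto r \leadsto t_i$ routes satisfy the distance bound $\textit{Dist}(s_i,t_i)$ within the allowed per-path cost budget. The online directed Steiner tree algorithm of \cite{cekp}, which is already stated for general edge costs and is $\tO(k^\ep)$-competitive, is then invoked on these auxiliary graphs to grow an in-arboresence and an out-arboresence rooted at $r$ as additional pairs arrive. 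Among all candidate roots $r$, we charge growth to the tree whose density is closest to $\rho$ and commit its edges irrevocably to the online solution.

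The competitive analysis then follows the telescoping structure of \cite{grigorescu2021online}: in each $(\tau,\rho)$-phase, the online junction-tree growth loses a factor of $\tO(k^\ep)$ from the Steiner subroutine, and the $O(\log^2 n)$ overhead from the doubling guesses over $\tau$ and $\rho$ is absorbed into the $\tO(\cdot)$ notation. The existence, inside $\opt_{junc}$, of a junction tree achieving density $\rho$ in any given phase is guaranteed by an averaging argument over the junction-tree decomposition of the optimal solution, paralleling the offline density argument behind Lemma \ref{le:poly_junction_tree_weighted_distance}. The main obstacle will be verifying that the online Steiner tree subroutine's competitive guarantee carries over to the auxiliary restricted graphs without additional loss: one must ensure that edge costs are preserved faithfully when encoding cheap-and-feasible reachability through $r$, and that a physical edge reused across multiple junction trees rooted at different vertices is not double-counted in a way that inflates the total cost beyond $\tO(k^\ep)\,\opt_{junc}$. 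This is handled by always committing only the cheapest-density candidate root at each step and by the standard observation that an edge's cost appears in at most $O(\log n)$ active density classes simultaneously.
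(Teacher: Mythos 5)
Your proposal correctly identifies the high-level ingredients (reduction to online junction trees, the buy-at-bulk doubling framework of \cite{cekp}, and a Steiner-tree-like subroutine at each root), but it differs from the paper's argument and has a genuine gap in handling the distance constraints. The paper proceeds by (i) subdividing each edge $e$ into $\ell(e)$ unit-length edges of cost $c(e)/\ell(e)$ to reduce to unit lengths, (ii) constructing for each root $r$ a \emph{layered} graph $G_r$ in which the layer index of a vertex equals the path length from (or to) $r$, so that the distance constraint $Dist(s_i,t_i)$ becomes a purely combinatorial admissibility relation $R_i$ between pairs of layered terminal copies, (iii) phrasing the resulting problem as \oslc with this relation, (iv) applying the height-reduction lemma to each layered graph to obtain a forest $H$ while losing only $O(k^{\ep})$, and (v) invoking the existing $\polylog(n)$-competitive \oslc algorithm on $H$.

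The gap in your argument is in the step where you claim that precomputing, via the RSP FPTAS, "the set of vertices from which $r$ is reachable via a cheap-and-feasible prefix path, and the set of vertices reachable from $r$ via a cheap-and-feasible suffix path" encodes exactly which $s_i \leadsto r \leadsto t_i$ routes satisfy $Dist(s_i,t_i)$. Feasibility of the concatenated route is a \emph{joint} constraint: you need a prefix of length $j_1$ and a suffix of length $j_2$ with $j_1 + j_2 \leq Dist(s_i,t_i)$, and the admissible split $(j_1, j_2)$ differs per terminal pair. Two independent reachability (or RSP) queries cannot express this coupling, and so an online Steiner tree run on your auxiliary graphs may happily connect $s_i$ to $r$ and $r$ to $t_i$ by paths whose combined length exceeds $Dist(s_i,t_i)$, even though each half was individually "cheap-and-feasible" for some uncoordinated budget. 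Worse, even if you tried to fix per-pair split budgets, the arborescences the Steiner subroutine builds are shared across pairs, and the length of the tree path from $s_i$ to $r$ is not under your control. This is exactly what the layered-graph construction repairs: because the layer coordinate \emph{is} the prefix (resp. suffix) length, connectivity in the layered graph is equivalent to satisfying the distance budget, and the set of admissible $(j_1,j_2)$ pairs is recorded in the label-cover relation $R_i$ rather than baked into a fixed auxiliary graph. Your "averaging argument over the junction-tree decomposition" paragraph also tacitly re-introduces the greedy density-extraction strategy, which the paper explicitly notes is not available online (you cannot remove settled pairs and recurse, because future arrivals are unknown); the paper sidesteps this by reducing to \oslc and relying on its online algorithm rather than on a density-extraction loop.
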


With this theorem, we are ready to prove Theorems \ref{thm:oklwps} and \ref{thm:oswps}.

\begin{proof}[Proof for Theorems \ref{thm:oklwps} and \ref{thm:oswps}] Let $\opt$ denote the cost of the optimal solution and $\alpha$ denote the ratio between $\opt_{junc}$ and $\opt$. It suffices to show that $\alpha=O(\sqrt{k})$ for \pwsul and $\alpha=1$ for \swspl because Theorem \ref{thm:kep} implies the existence of an $\tO(\alpha k^\ep)$-competitive online algorithm.

To show that $\alpha=1$ for \swspl, let $H$ be an optimal solution. We observe that $H$ itself is a distance-preserving weighted junction tree rooted at the source $s$ that is connected to all the $k$ sinks, so $\alpha=1$.

To show that $\alpha=O(\sqrt{k})$ for \pwsul, we use a density argument via a greedy procedure which implies an $O(\sqrt{k})$-approximate distance-preserving weighted junction tree solution. We recall the density notion in Section \ref{subsec:pws-thin}. The density of a distance-preserving weighted junction tree is its cost divided by the number of terminal pairs that it connects within the required distances.

Intuitively, we are interested in finding low-density distance-preserving weighted junction trees. We show that there always exists a distance-preserving weighted junction tree with density at most a $\sqrt{k}$ factor of the optimal density. The proof of Lemma~\ref{lem:sqrt-k-den} closely follows the one for the directed Steiner network problem in \cite{chekuri2011set} and pairwise spanners \cite{grigorescu2021online} by considering whether there is a \emph{heavy} vertex that lies in $s_i \leadsto t_i$ paths for distinct $i$ or there is a simple path with low density. The case analysis also holds when there is a distance constraint for each $(s_i,t_i)$. We provide the proof in Appendix~\ref{pf:lem:sqrt-k-den} for the sake of completeness.

\begin{restatable}{lemma}{lemsqrtkden} \label{lem:sqrt-k-den}
There exists a distance-preserving weighted junction tree $J$ with density at most $\opt / \sqrt{k}$.
\end{restatable}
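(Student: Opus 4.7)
The plan is to follow the standard heavy-vertex dichotomy used for directed Steiner network in \cite{chekuri2011set}, adapted so that the chosen substructures inherit the distance feasibility of the optimal solution. Fix an optimal offline solution $H \subseteq E$ with $\sum_{e \in H} c(e) = \opt$. By feasibility of $H$, for each terminal pair $(s_i, t_i) \in D$ there exists a path $P_i \subseteq H$ from $s_i$ to $t_i$ with $\sum_{e \in P_i} \ell(e) \le Dist(s_i,t_i)$. For each vertex $v \in V$, define $\mu_v = |\{i \in [k] : v \in V(P_i)\}|$, the number of these feasible paths that pass through $v$. The argument splits on whether some vertex is ``heavy.''

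Case 1: there exists $v^\star \in V$ with $\mu_{v^\star} \ge \sqrt{k}$. I would form $J^\star = \bigcup_{i : v^\star \in V(P_i)} P_i$, rooted at $v^\star$: each such $P_i$ decomposes into its $s_i \leadsto v^\star$ prefix (contributing to an in-arborescence at $v^\star$) and its $v^\star \leadsto t_i$ suffix (contributing to an out-arborescence at $v^\star$). Since each $P_i$ is feasible, the induced $s_i \leadsto v^\star \leadsto t_i$ path in $J^\star$ has length at most $Dist(s_i,t_i)$, so $J^\star$ is a distance-preserving weighted junction tree settling at least $\sqrt{k}$ pairs. Because $J^\star \subseteq H$, we have $\sum_{e \in J^\star} c(e) \le \opt$, and the density of $J^\star$ is at most $\opt/\sqrt{k}$.

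Case 2: $\mu_v < \sqrt{k}$ for every $v \in V$. Then for every edge $e = (u,v) \in E$, the number of paths $P_i$ that contain $e$ is at most $\min(\mu_u,\mu_v) < \sqrt{k}$, so
\[
\sum_{i=1}^{k} c(P_i) \;=\; \sum_{e \in E} c(e) \cdot |\{i : e \in P_i\}| \;\le\; \sqrt{k} \sum_{e \in H} c(e) \;=\; \sqrt{k}\cdot \opt.
\]
By averaging, some $P_{i^\star}$ satisfies $c(P_{i^\star}) \le \opt/\sqrt{k}$. A single feasible path is itself a distance-preserving weighted junction tree (e.g.\ rooted at $s_{i^\star}$ with a trivial in-arborescence and $P_{i^\star}$ as the out-arborescence), and it settles the pair $(s_{i^\star},t_{i^\star})$, giving density at most $\opt/\sqrt{k}$.

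The main thing to be careful about is not a counting difficulty but a definitional one: I must verify that the substructures produced in both cases genuinely fit Definition~\ref{def:dpwj}, i.e.\ that they decompose into an in-arborescence and an out-arborescence at a common root and that the distance bounds survive. Both verifications are immediate from the fact that the $P_i$'s are feasible paths inside $H$, so no genuine obstacle arises beyond bookkeeping. The $\sqrt{k}$ threshold balances the two cases, yielding the claimed density bound $\opt/\sqrt{k}$.
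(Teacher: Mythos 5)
Your proof is correct and follows essentially the same heavy-vertex dichotomy as the paper: Case~1 roots a junction tree at a vertex carried by $\ge \sqrt{k}$ feasible paths, and Case~2 counts edge loads to extract a cheap single-path junction tree. The only cosmetic difference is in Case~2, where the paper phrases the bound via duplicating each edge $\sqrt{k}$ times to get edge-disjoint paths and the student averages directly; both yield $\sum_i c(P_i) \le \sqrt{k}\cdot\opt$ and the same conclusion.
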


Consider the procedure that finds a minimum density distance-preserving weighted junction tree in each iteration, and continues on the remaining disconnected terminal pairs. Suppose there are $t$ iterations, and after iteration $j \in [t]$, there are $n_j$ disconnected terminal pairs. Let $n_0 = k$ and $n_t = 0$. After each iteration, the minimum cost for connecting the remaining terminal pairs in the remaining graph is at most $\opt$, so the total cost of this procedure is upper-bounded by
\[
\sum_{j=1}^t \frac{(n_{j-1} - n_j)\opt}{\sqrt{n_{j-1}}} \leq \sum_{i=1}^k \frac{\opt}{\sqrt{i}} \leq \int_1^{k+1} \frac{\opt}{\sqrt{x}} dx = 2 \opt (\sqrt{k+1} - 1) = O(\sqrt{k}) \opt\]
where the first inequality uses the upper bound by considering the worst case when only one terminal pair is removed in each iteration of the procedure.
\end{proof}

\begin{remark} \label{re:online_to_offline}
The online algorithms imply efficient algorithms for the corresponding offline problems with the same approximation ratios. For offline algorithms, one can also use a greedy approach by iteratively removing the connected terminal pairs and extracting low-density distance-preserving weighted junction trees, with a poly-logarithmic factor improvement. The poly-logarithmic factor naturally appears from the challenge that the terminal pairs arrive online. This greedy approach is not applicable in the online setting.
\end{remark}

\section{Conclusion} \label{sec:conclusion}

In this paper, we presented algorithms for a variant of directed spanners that could also handle costs on edges, in addition to the more standard setting of edge lengths. The proof strategy for Theorem \ref{thm:plwps} follows a high-level structure that is similar to other results for directed Steiner forests, but involves significant obstacles in each part of the proof due to the addition of distance constraints. We overcome these obstacles by using the proper approaches. For example, the \rcsp problem from \cite{horvath2018multi} is carefully adapted for our specifics. We also needed to carefully adapt many other parts of the proof, such as the analysis of our junction-tree approximation and our rounding algorithm for the LPs, to fit the addition of distance constraints.

We also present online algorithms for \opwspl and \oswspl. We use our result for \oswspl to solve a special case of \pwspl, namely, \awdpl, and obtain a significantly better approximation for that case. 

We propose the following directions for future work:
\begin{itemize}
    \item Is it possible to get a better analysis for the rounding algorithm for Theorem \ref{thm:plwps} as in \cite{berman2013approximation}? This should improve the overall approximation factor for \pwspl in Theorem \ref{thm:plwps}. 
    \item Is there a hardness bound for \pwspl that is greater than the existing hardness bounds for Steiner forests and unit-cost spanners?
    \item Is there a better approximation factor for all-pair weighted spanners, i.e., an instance of \pwspl where $D = V \times V$?
    \item Can we get a result for pairwise weighted distance preserver that is better than using the \pwspl results in Theorems \ref{thm:plwps}?
    
\end{itemize}

\bibliographystyle{acm}
\bibliography{reference}

\appendix

\section{Missing Proofs in Section \ref{sec:wps}}
\subsection{Proof of Claim \ref{cl:sampling_thick}}
\label{sec:missingproofs}

We recall Claim \ref{cl:sampling_thick} and present its proof here.

\samph*

\begin{proof}
Because $(s,t)$ is a thick pair, we have at least $n/\beta$ vertices in its local graph by definition. Let $\textsc{Event}(s,t)$ be the event that no vertex in the local graph of $(s,t)$ is sampled. We have that

     \begin{equation*}
        \mathbf{Pr}[\textsc{Event}(s,t)] = \left(1 - \frac{n/\beta}{n}\right)^{3 \beta \ln n} = \left(1 - \frac{1}{\beta}\right)^{3 \beta \ln n} \le \exp\left(-\frac{3 \beta \ln n}{\beta}\right) = \frac{1}{n^3}.
    \end{equation*}
     
     Taking the union bound over all demand pairs in $D$ where $|D| \le n^2$, the probability that there is one $(s,t) \in D$ whose local graph does not include a sampled vertex is at most $1/n$. The probability that  any $(s,t) \in D$ has at least one vertex sampled is at least $1 - 1/n$.
\end{proof}

\subsection{Weighted junction tree with distance constraints}
\label{sec:dpwj} \label{app:sec:wjt}

We recall Lemma \ref{le:poly_junction_tree_weighted_distance} and present its proof here. But we first prove a result for the same problem of the junction trees in Lemma \ref{le:poly_junction_tree_weighted_distance} where we assume all edges have unit lengths (but costs are general) and then reduce the polynomial integral length version to that problem.

\begin{lemma}
\label{le:junction_tree_weighted_distance} 
    For any constant $\delta > 0$, there is a polynomial-time approximation algorithm for the minimum density distance-preserving weighted junction tree as long as the edge lengths are uniform. In other words, there is a polynomial-time algorithm which, given a weighted directed $n$ vertex graph $G=(V,E)$ where each edge $e \in E$ has a cost $c(e) \in \nnreals$ and unit length $\ell(e) = 1$, terminal pairs $P \subseteq V \times V$, and distance bounds $Dist:P \rightarrow \N$ (where $Dist(s,t) \geq d_G(s,t)$)  for every terminal pair $(s,t) \in P$, approximates the following problem to within an $O(n^\delta)$ factor:
    \begin{itemize}
        \item Find a non-empty set of edges $F \subseteq E$ minimizing the ratio:
        \begin{equation}
            \min_{r \in V}\frac{\sum_{e \in F}c(e)}{|\{(s,t) \in P| d_{F,r}(s,t) \leq Dist(s,t)\}|}
        \end{equation}
    \end{itemize}
    where $d_{F,r}(s,t)$  is the length of the shortest path using edges in $F$ which connects $s$ to $t$ while going through $r$ (if such a path exists).\footnote{In \cite{chlamtavc2020approximating}, the numerator is $|F|$. The difference occurs because they effectively have $c(e) = 1$ for all $e \in E$.}
\end{lemma}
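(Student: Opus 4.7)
The plan is to adapt the proof of Theorem 5.1 in \cite{chlamtavc2020approximating} almost verbatim. That theorem treats exactly our setting but with unit edge costs (density defined as $|F|/|\text{covered pairs}|$ rather than $\sum_{e \in F} c(e)/|\text{covered pairs}|$), so the essential observation is that their LP and rounding scheme depend on the combinatorial and distance-preserving structure of the problem, not on the particular form of the objective. Replacing the unit-cost objective with an edge-cost-weighted one is therefore expected to be transparent, provided the rounding analysis is audited carefully.

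First I would guess the root $r \in V$ by enumerating all $n$ candidates and returning the best junction tree found. For each fixed $r$, I would set up the same density LP as in \cite{chlamtavc2020approximating}: the primary variables are $x_e \in [0,1]$ indicating whether $e$ is selected, together with flow variables that, for each terminal pair $(s,t)$ and each admissible split $Dist(s,t) = D_1 + D_2$, route one unit of flow from $s$ to $r$ using at most $D_1$ edges and from $r$ to $t$ using at most $D_2$ edges. Because edge lengths are uniform, ``length at most $D_i$'' is equivalent to ``at most $D_i$ edges,'' which is cleanly enforceable via a layered-graph construction of polynomial size. The only modification relative to \cite{chlamtavc2020approximating} is that the objective becomes $\sum_e c(e)\, x_e$ and the density LP is normalized so that the fractional number of covered pairs equals $1$.

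Second, I would round the LP with the same procedure as in \cite{chlamtavc2020approximating}, arguing an $O(n^\delta)$ loss in the density ratio. Concretely, the analysis there bounds, on one hand, the expected cardinality of the rounded edge set against $\sum_e x_e$, and on the other, the expected number of covered pairs against the fractional covering. By linearity of expectation, replacing the indicator weight $1$ per edge with $c(e)$ in the first bound yields $\mathbb{E}\!\left[\sum_{e \in F} c(e)\right] \leq O(n^\delta) \cdot \sum_e c(e)\, x_e$, while the second bound is unchanged. Dividing these gives the desired $O(n^\delta)$-approximation in density, and a concentration argument (or repeated trials) turns the expected guarantee into a high-probability one. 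Finally, the LP optimum is a lower bound on the integral minimum density, so the rounded tree is within $O(n^\delta)$ of the optimum.

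The main obstacle is auditing the rounding scheme of \cite{chlamtavc2020approximating} to make sure that no step secretly exploits unit edge costs. The potential danger is an inductive or layered argument in which edges are charged against some combinatorial quantity (e.g., a counting argument tied to vertex layers) rather than against the LP value $\sum_e x_e$. A careful reading should confirm that every edge charge is proportional to its $x_e$ value, so substituting $c(e)$ for $1$ preserves the analysis line by line. Once this is verified, the rest is bookkeeping, and the lemma follows directly.
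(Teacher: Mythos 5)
Your overall strategy --- ``take Theorem 5.1 of Chlamt\'a\v{c} et al.\ and check that nothing in it secretly depends on unit edge costs'' --- is exactly the one the paper uses, and your conclusion (that the only change needed is to carry $c(e)$ through as the edge weight) is correct. However, your concrete account of \emph{how} Theorem 5.1 works is wrong in a way that matters, so as written the proposal has a genuine gap rather than just a different phrasing.

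You describe the technique as: set up a flow-based density LP on (a layered version of) $G$, round it, and charge the $O(n^\delta)$ loss to the rounding. That is not the mechanism. The $O(n^\delta)$ factor does not come from LP rounding; it comes from a \emph{height reduction} step (Lemma~\ref{le:height_reduction}, from Chekuri et al.\ and Chlamt\'a\v{c} et al.), which embeds the layered graph $G_r$ into a tree $\hat T_r$ of constant height $\sigma$ and size $n^{O(\sigma)}$, at the cost of a multiplicative blow-up $O(\sigma |L|^{1/\sigma})=O(n^{\delta})$ for $\sigma\approx 1/\delta$. After height reduction one is solving a \textsc{Minimum Density Label Cover} instance on a \emph{tree}, where an LP/rounding in the style of group Steiner tree gives only a $\mathrm{polylog}(n)$ loss. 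Your proposed ``round the density LP on the layered graph directly'' has no reason to enjoy an $O(n^\delta)$ integrality gap --- the whole point of height reduction is that we do not know how to round such LPs well on general directed layered graphs. So the plan as stated would fail at exactly the step you labelled ``rounding.''

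The rest of your reasoning transfers once this is fixed. The layered construction $G_r$ encodes ``path of length $\le D$ through $r$'' correctly, and the reduction to \textsc{Minimum Density Label Cover} with terminal sets $S_{s,t},T_{s,t}$ and relation $R_{s,t}$ is purely combinatorial; the only place costs enter is as the edge weights of $G_r$ (and then of $\hat T_r$), and both the height reduction guarantee $w(J)\le O(\sigma|L|^{1/\sigma}\hat w(\hat J))$ and the tree label cover approximation are stated for arbitrary nonnegative edge weights, so substituting $w(e)=c(u,v)$ for $w(e)=1$ is indeed line-by-line compatible --- which is the point you correctly anticipated needing to audit. To turn your proposal into a complete proof you should (i) name and use the height reduction lemma explicitly as the source of the $O(n^\delta)$ factor, (ii) state the reduction to \textsc{Minimum Density Label Cover} via $G_r$, and (iii) invoke the tree-label-cover approximation for the final $\mathrm{polylog}$ loss, rather than presenting the argument as a single LP rounding.
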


\begin{proof}

    In this proof, we follow the structure for the proof of Theorem 5.1 in \cite{chlamtavc2020approximating} very closely. We first define the \textsc{Minimum Density Label Cover} problem, then we reduce an instance of our problem to an instance of the \textsc{Minimum Density Label Cover} problem using a graph construction. We also state the height reduction lemma from \cite{chlamtavc2020approximating,chekuri2011set} and use it on our instance of \textsc{Minimum Density Label Cover}. Finally, we use Lemma (5.4) from \cite{chlamtavc2020approximating} to prove our result. We do this procedure with every possible root $r \in V$ and take the distance-preserving weighted junction tree with the minimum density distance-preserving weighted junction tree among all possible choices of $r$.

    We start by defining the \textsc{Minimum Density Label Cover} problem to which we will reduce our problem.

\begin{definition}
    In the \textsc{Minimum Density Label Cover} problem, we are given a directed graph $G = (V,E),$ nonnegative edge costs $w:E \to \R_{\geq 0}$, a collection of set pairs $B\subseteq 2^V \times 2^V$, and for each pair $(S,T) \in B$, a relation $R(S,T) \subseteq S \times T$. The goal is to find a set of edges $F\subseteq E$ minimizing the ratio
    \begin{equation*}
        \frac{\sum_{e \in F}w(e)}{|\{(S,T)\in B \mid \exists (s,t) \in 
 R(S,T):\text{ $F$ contains an $s \leadsto t$ path }\}|}.
    \end{equation*}
\end{definition}

We also need the height reduction lemma  from \cite{chekuri2011set,chlamtavc2020approximating}. 

\begin{restatable}{lemma}{lemhr} \label{le:height_reduction}
    (Height Reduction) Let $G = (V,E)$ be an edge-weighted directed graph with edge weights $w:E\rightarrow \R_{\geq 0}$, let $r\in V$ be a source vertex of $G$, and let $\sigma > 0 $ be some parameter. Then we can efficiently construct an edge-weighted undirected tree $\hat{T}_r$ rooted at $\hat{r}$ of height $\sigma$ and size $|V|^{O(\sigma)}$ together with edge weights $\hat{w}:E(\hat{T}_r) \to \R_{\geq 0}$, and a vertex mapping $\psi :V(\hat{T}_r) \rightarrow V(G),$ such that
    \begin{itemize}
        \item For any arboresence $J \subseteq G$ rooted at $r$, and terminal set $S \subseteq J,$ there exists a tree $\hat{J} \subseteq \hat{T}_r$ rooted at $\hat{r}$ such that letting $L(J)$ and $L(\hat{J})$ be the set of leaves of $J$ and $\hat{J}$, respectively, we have $L(J) = \psi(L(\hat{J}))$. Moreover, $w(J) \leq O(\sigma |L|^{1/\sigma} \cdot \hat{w}(\hat{J}))$.
        \item Given any tree $\hat{J} \subseteq \hat{T}_r$ rooted at $\hat{r}$, we can efficiently find an arboresence $J \subseteq G$ rooted at $r$ such that, for leaf sets $L(J)$ and $L(\hat{J})$ as above, we have $L(J) = \psi(L(\hat{J})),$ and moreover, $w(J) \leq \hat{w}(\hat{J})$.
    \end{itemize}
\end{restatable}

We do the following procedure using every possible root $r \in V$ and then take the minimum density distance-preserving weighted junction tree.

For a specific $r$, we turn our distance problem into a connectivity problem using the following reduction which closely matches \cite{chlamtavc2020approximating} except for one change. We construct a layered directed graph with vertices:

\begin{equation*}
    V_r = ((V\setminus r) \times \{-n+1,\ldots,-2,-1,1,2,\ldots,n-1\}) \cup \{(r,0)\}
\end{equation*}
and
\begin{equation*}
    E_r = \{((u,i)(v,i+1)) | (u,i),(v,i+1) \in V_r,(u,v)\in E\}.
\end{equation*}

We set the edge weights for $e\in E_r$ connecting some $((u,i)(v,i+1))$ to $w(e) = c((u,v))$. This is the only change we make from \cite{chlamtavc2020approximating} and it has very little impact on the overall proof.

For every terminal pair $(s,t) \in P$ with distance bound $Dist(s,t)$, add new vertices $(s^t,-i)$ and $(t^s,j)$ for all $i,j \geq 0$ such that $(s,-i),(t,j) \in V_r$, and for all such $i$ and $j$ add zero-weight edges $((s^t,-i)(s,-i))$ and $((t,j)(t^s,j))$. Denote this graph as $G_r$, now for every terminal pair $(s,t) \in P$ define terminal sets $S_{s,t} = \{(s^t,-i) | i \geq 0\} \cap V(G_r)$ and $T_{s,t} = \{(t^s,j) | j \geq 0\} \cap V(G_r)$ and relation $R_{s,t} = \{(s^t,-i),(t^s,j) \in S_{s,t}\times T_{s,t} | i+j \leq Dist(s,t) \}$.

In this construction, for every terminal pair $(s,t) \in P$ and label $i,$ there is a bijection between paths of length $i$ from $s$ to $r$ in $G$, and paths from $(s^t,-i)$ to $(r,0)$ in $G_r,$ and similarly a bijection between paths of length $i$ from $r$ to $t$ in $G$, and paths from $(r,0)$ to $(t^s,i)$ in $G_r$. Now, to keep track of path lengths in $G$ we can just connect the appropriate terminal pairs in $G_r$. This construction also creates disjoint terminal pairs, by creating a separate copy $s^t$ of $s$ for every terminal pair $(s,t) \in P$ that $s$ participates in (and similarly for terminals $t$).

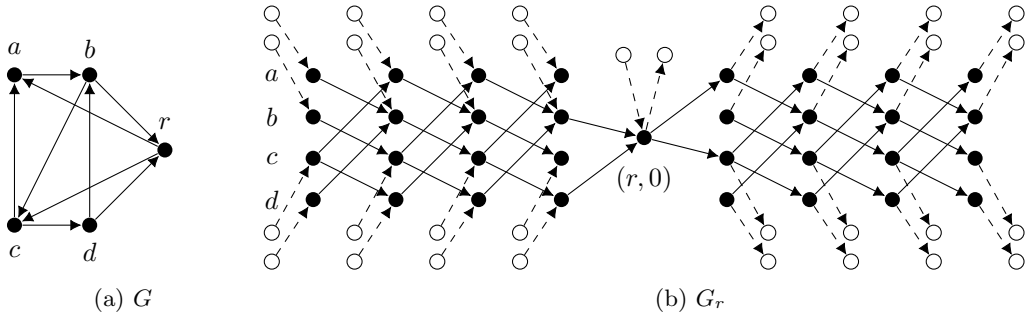
\begin{figure}[H]
\centering
\begin{subfigure}{.2\textwidth}
\begin{tikzpicture}[scale=0.5]
    \node[fill,circle, inner sep=0pt, minimum size=0.2cm] (a) at (-4,2) {};
    \node[fill,circle, inner sep=0pt, minimum size=0.2cm] (c) at (-4,-2) {};
    \node[fill,circle, inner sep=0pt, minimum size=0.2cm] (b) at (-2,2) {};
    \node[fill,circle, inner sep=0pt, minimum size=0.2cm] (d) at (-2,-2) {};
    \node[fill,circle, inner sep=0pt, minimum size=0.2cm] (r) at (0,0) {};
    \node at (-4,2.7) {$a$};
    \node at (-4,-2.7) {$c$};
    \node at (-2,2.7) {$b$};
    \node at (-2,-2.7) {$d$};
    \node at (0,0.7) {$r$};
    \path[->]
        (a) edge (b)
        (b) edge (c)
        (b) edge (r)
        (c) edge (a)
        (c) edge (d)
        (d) edge (b)
        (d) edge (r)
        (r) edge (a)
        (r) edge (c);
\end{tikzpicture}
\subcaption{$G$}
\end{subfigure}
\begin{subfigure}{.7\textwidth}
\begin{tikzpicture}[scale=0.55]
    \node at (-9,1.5) {$a$};
    \node at (-9,0.5) {$b$};
    \node at (-9,-0.5) {$c$};
    \node at (-9,-1.5) {$d$};
    \node at (0,-1) {$(r,0)$};
    \node[fill,circle, inner sep=0pt, minimum size=0.2cm] (r0) at (0,0) {};
    \foreach \x in {1,...,4}
    \foreach \y in {1,...,4}{
        \node[fill,circle, inner sep=0pt, minimum size=0.2cm] (l\x\y) at (-2*\x,-2.5 + \y) {};
        \node[fill,circle, inner sep=0pt, minimum size=0.2cm] (r\x\y) at (2*\x,-2.5 + \y) {};
    }
    \foreach \x in {1,...,4}{
        \node[draw,circle, inner sep=0pt, minimum size=0.2cm] (lo\x4) at (-2*\x-1, 3) {};
        \node[draw,circle, inner sep=0pt, minimum size=0.2cm] (lo\x3) at (-2*\x-1, 2.3) {};
        \node[draw,circle, inner sep=0pt, minimum size=0.2cm] (lo\x2) at (-2*\x-1, -2.3) {};
        \node[draw,circle, inner sep=0pt, minimum size=0.2cm] (lo\x1) at (-2*\x-1, -3) {};
        \path[->]
            (lo\x4) edge[dashed] (l\x4)
            (lo\x3) edge[dashed] (l\x3)
            (lo\x2) edge[dashed] (l\x2)
            (lo\x1) edge[dashed] (l\x1);
        \node[draw,circle, inner sep=0pt, minimum size=0.2cm] (ro\x4) at (2*\x+1, 3) {};
        \node[draw,circle, inner sep=0pt, minimum size=0.2cm] (ro\x3) at (2*\x+1, 2.3) {};
        \node[draw,circle, inner sep=0pt, minimum size=0.2cm] (ro\x2) at (2*\x+1, -2.3) {};
        \node[draw,circle, inner sep=0pt, minimum size=0.2cm] (ro\x1) at (2*\x+1, -3) {};
        \path[->]
            (r\x4) edge[dashed] (ro\x4)
            (r\x3) edge[dashed] (ro\x3)
            (r\x2) edge[dashed] (ro\x2)
            (r\x1) edge[dashed] (ro\x1);
    }
    \foreach \x in {1,...,3}{
        \pgfmathtruncatemacro{\xnext}{\x+1}
        \draw[->] (l\xnext1) -- (l\x3);
        \draw[->] (l\xnext2) -- (l\x1);
        \draw[->] (l\xnext3) -- (l\x2);
        \draw[->] (l\xnext2) -- (l\x4);
        \draw[->] (l\xnext4) -- (l\x3);
        \draw[->] (r\x1) -- (r\xnext3);
        \draw[->] (r\x2) -- (r\xnext1);
        \draw[->] (r\x3) -- (r\xnext2);
        \draw[->] (r\x2) -- (r\xnext4);
        \draw[->] (r\x4) -- (r\xnext3);
    }
    \draw[->] (l11) -- (r0);
    \draw[->] (l13) -- (r0);
    \draw[->] (r0) -- (r12);
    \draw[->] (r0) -- (r14);
    \node[draw,circle, inner sep=0pt, minimum size=0.2cm] (lr0) at (-0.5, 2) {};
    \node[draw,circle, inner sep=0pt, minimum size=0.2cm] (rr0) at (0.5, 2) {};
    \path[->]
        (lr0) edge[dashed] (r0)
        (r0) edge[dashed] (rr0);
\end{tikzpicture}
\subcaption{$G_r$}
\end{subfigure}
\caption{Construction of $G_r$ given $G=(V,E)$ and $r \in V$. The dotted lines in $G_r$ have zero weight while the solid lines in $G_r$ have the same weight as the corresponding edges in $G$. The figure is from \cite{grigorescu2021online}.}
\label{fig:lg}
\end{figure}

We finally generate a new graph $G'$ which is made up of two graphs $G_{+}$ and $G_{-}$ which are copies of $G_r$ intersecting only in the vertex $r$. For every node $u \in V$, we use $u_{+},u_{-}$ to denote the copies of $u$ in $G_{+},G_{-}$ respectively. 

The following lemma (which is also in \cite{chlamtavc2020approximating}) follows from our construction.

\begin{lemma} \label{le:junction_tree_helper}
    For any $f > 0$, and set of terminal pairs $P' \subseteq P,$ there exists an edge set $F \subseteq E(G')$ of weight $\sum_{e \in F} w(e) \leq f$ containing a path of length at most $Dist(s,t)$ from $s_{-}$ to $t_{+}$ for every $(s,t) \in P'$ iff there exists a junction tree $J \subseteq E(G_r)$ of weight $w(J) \leq f$ such that for every terminal pair $(s,t) \in P',$ $J$ contains leaves $(s^t,-i),(t^s,j)$ such that $((s^t,-i)(t^s,j)) \in R_{s,t}$. Morever, given such a junction tree $J$, we can efficiently find a corresponding edge set $F$.
\end{lemma}

Now, to prove Lemma \ref{le:junction_tree_weighted_distance}, we just need to show that we can achieve an $O(n^\ep)$ approximation for the \textsc{Minimum Density Label Cover} instance $(G_r,w,\{(S_{s,t}, T_{s,t}), R_{s,t} \mid (s,t) \in P\})$ obtained from our reduction. We apply Lemma \ref{le:height_reduction} to our weighted graph $(G_r,w)$ with constant parameter $\sigma > 1/\ep,$ and obtain a shallow tree $\hat{T}_r$ with weights $\hat{w}$, and a mapping $\psi: V(\hat{T}_r) \rightarrow V(G_r)$. All terminals $(s^t,-i),(t^s,j)$ in $G_r$ are now represented by sets of terminals $\psi^{-1}((s^t,-i))$ and $\psi^{-1}((t^s,j))$, respectively in $V(\hat{T}_r)$. We can extend the relation $R_{s,t}$ in the natural way to 
\begin{equation*}
    \hat{R}_{s,t} = \{(\hat{s},\hat{t}) \in \psi^{-1}(S_{s,t}) \times \psi^{-1}(T_{s,t}) \mid  (\psi(\hat{s}),\psi(\hat{t})) \in R_{s,t}\}.
\end{equation*}

Thus, using Lemmas \ref{le:height_reduction} and \ref{le:junction_tree_helper}, it suffices to show the following lemma which is already proved in \cite{chlamtavc2020approximating}.

\begin{lemma}
    There exists a polynomial time algorithm which, in the above setting, gives an $O(\log^3 n)$ approximation for the following problem:
    \begin{itemize}
        \item Find a tree $T \subseteq \hat{T}_r$ minimizing the ratio
        \begin{equation*}
            \frac{\hat{w}(E(T))}{|\{(s,t)\in P \mid \exists (\hat{s},\hat{t}) \in 
 \hat{R}(S,T):\text{ T contains an $\hat{s}$-$\hat{t}$ path }\}|}.
        \end{equation*}
    \end{itemize}
\end{lemma}
\end{proof}

We now adapt the above lemma to have general lengths with a very simple reduction. 

\juncdp*

\begin{proof}

Let the edge lengths in our instance be in $O(n^{\lambda})$ for some constant $\lambda > 0$. This assumption is valid because the edge lengths are in $\{1,2,...,\poly(n)\}$. Given any integer and polynomial length instance, we can reduce it to an instance with unit lengths as follows. For any edge $e = (u,v) \in E$, with length $\ell(e)$, break it into $\ell(e)$ new edges which are connected as a path that starts at $u$ and ends at $v$. Each of these new edges will have cost $c(e)/\ell(e)$ and be of unit length. We observe that solving this new instance of will give us a solution for our original problem. The whole reduction can be done in polynomial time and size because the newly created graph will have $O(n^\lambda \cdot |V|)$ vertices and $O(n^\lambda \cdot |E|)$ edges. Thus by setting $\delta = \ep/{(\lambda+1)}$ in Lemma \ref{le:junction_tree_weighted_distance}, we prove this lemma.  
\end{proof}

\section{Missing Proofs in Section \ref{sec:online}}

\subsection{Missing proof for Theorem~\ref{thm:kep}} \label{pf:thm:alpha-comp}

\thmkep*

\begin{proof}
    It suffices to show that the theorem holds even for \opwspl when all the edges have unit lengths. Suppose we have a black-box online algorithm when edges have unit lengths. We can construct an online algorithm for \opwspl by using the black box as follows. We decompose each edge $e \in E$ into a path consisting of $\ell(e)$ edges, each with unit length and cost $c(e)/\ell(e)$. The terminal vertices correspond to the endpoint vertices of the paths for the decomposed edges. The number of vertices in this graph is still polynomial in $n$ because the edge lengths are polynomial in $n$. A black-box online algorithm on graphs with unit length edges therefore implies Theorem \ref{thm:kep}.

    The proof for Theorem \ref{thm:kep} on graphs with unit edge lengths is a slight modification from the $\tO(k^{1/2 + \ep})$-competitive online algorithm on graphs with unit edge lengths and costs in \cite{grigorescu2021online}. The outline is as follows.

    \begin{enumerate}
        \item We consider \opwspl on a graph $G'$ consisting of disjoint layered graphs constructed from the original graph $G$ by losing a constant factor. The layers allow us to capture the distance constraints. This allows us to further reduce the problem to \oslc on $G'$.
        \item We further consider \oslc on an undirected forest $H$ constructed from $G'$ with a loss of an $O(k^{\ep})$ factor via the height reduction technique introduced in \cite{chekuri2011set,helvig2001improved}. 
    \end{enumerate}

    We start with the definition for \slc introduced in \cite{grigorescu2021online}.

    \begin{restatable}{definition}{defslc}\label{def:slc}
    In the \emph{Steiner label cover} problem, we are given a (directed or undirected) graph $G=(V,E)$, non-negative edge costs $w: E \to \R_{\geq 0}$, and a collection of $k$ disjoint vertex subset pairs $(S_i, T_i)$ for $i \in [k]$ where $S_i, T_i \subseteq V$ and $S_i \cap T_i = \emptyset$. Each pair is associated with a relation (set of permissible pairs) $R_i \subseteq S_i \times T_i$. The goal is to find a subgraph $F=(V,E')$ of $G$, such that 1) for each $i \in [k]$, there exists $(s,t) \in R_i$ such that there is an $s \leadsto t$ path in $F$, and 2) the cost $\sum_{e \in E'}w(e)$ is minimized.
\end{restatable}

In \oslc, $(S_i, T_i)$ and $R_i$ arrive online, and the goal is to irrevocably select edges to satisfy the first requirement and also approximately minimize the cost.

Given an instance of \opwspl on a graph with unit edge lengths, we construct the graph $G'$ by following the construction of $G_r$ in the proof of Lemma \ref{le:poly_junction_tree_weighted_distance} in Appendix \ref{app:sec:wjt} and taking the union of $G_r$ for all $r \in V$. That is, we \emph{simultaneously} consider all the vertices $r \in V$ as a candidate root for the distance-preserving weighted junction tree solution while constructing the online solution.

Recall that $G_r=(V_r,E_r)$ is constructed as follows (see Figure \ref{fig:lg} for an example):
\begin{equation*}
    V_r = ((V\setminus r) \times \{-n+1,\ldots,-2,-1,1,2,\ldots,n-1\}) \cup \{(r,0)\}
\end{equation*}
and
\begin{equation*}
    E_r = \{((u,i),(v,i+1)) \mid (u,i),(v,i+1) \in V_r,(u,v)\in E\}.
\end{equation*}
For each edge $e = ((u,i),(v,i+1)) \in E_r$, let the edge weight be $w(e) = c((u,v))$. This is the only change we make from \cite{grigorescu2021online} and it has very little impact on the overall proof. For each vertex $(u,-j)$ (respectively $(u,j)$) in $V_r$ where $j \in [n-1] \cup \{0\}$, add a vertex $(u^-,-j)$ (respectively $(u^+,j)$), and create an edge $(u^-,-j) \to (u,-j)$ (respectively $(u,j) \to (u^+,j)$) with weight 0. This concludes the construction of $G_r$.

Let $G'$ be the disjoint union of $G_r$ for $r \in V$. Given an \opwspl instance on $G$, we can construct an \oslc instance on $G'$. 
For explicitness, we denote the vertex $(u,j)$ in $V_r$ by $(u,j)_r$. For each $(s_i, t_i)$, let $S_i = \{(s^-_i, -j)_r \mid j > 0, r \in V \setminus \{s_i\}\} \cup (s^-_i,0)_{s_i}$, $T_i = \{(t^+_i, j)_r \mid j > 0, r \in V \setminus \{t_i\}\} \cup (t^+_i,0)_{t_i}$, and $R_i = \{((s^-_i, -j_s)_r,(t^+_i, j_t)_r) \mid r \in V, j_s+j_t \leq Dist(s_i,t_i)\}$. Intuitively, a copy of $s_i$ and a copy of $t_i$ belongs to the relation $R_i$ if 1) they are connected by a distance-preserving weighted junction tree with the same root $r$, and 2) the distance between them is at most $Dist(s_i,t_i)$ in this distance-preserving weighted junction tree.

We can recover a solution for \opwspl on $G$ with at most the cost of the corresponding \oslc solution in $G'$. Suppose we have the \oslc solution in $G'$, for each selected edge $(u,j) \to (v,j+1)$ in $G_r$, we select $u \to v$ for the pairwise spanner solution in $G$. The optimum of the Steiner label cover solution in $G'$ is at most $2\opt_{junc}$. In each distance-preserving weighted, each edge is used at most once in the in-arborescence and at most once in the out-arborescence. The following claim is from \cite{grigorescu2021online}.

\begin{claim} \label{cl:layer-ps-opt}
The optimal value of the \slc problem in $G'$ is at most $2\opt_{junc}$.
\end{claim}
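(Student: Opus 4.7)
The plan is to convert an optimal distance-preserving weighted junction tree solution $\mathcal{J}=\{J_r\}_{r \in V}$ with total cost $\opt_{junc}$ into a feasible \slc solution in $G'$ whose cost is at most $2\opt_{junc}$. Write each $J_r = J_r^{in} \cup J_r^{out}$, where $J_r^{in}$ is an in-arborescence into $r$ and $J_r^{out}$ is an out-arborescence out of $r$. For every edge $(u,v) \in J_r^{in}$, the parent of $u$ in $J_r^{in}$ is $v$, so $u$ and $v$ have well-defined depths $d_{J_r^{in}}(u,r)$ and $d_{J_r^{in}}(u,r)-1$ toward $r$; I include in my \slc solution the corresponding edge $(u,-d_{J_r^{in}}(u,r))_r \to (v,-d_{J_r^{in}}(u,r)+1)_r$ in $G_r$. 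Symmetrically, for $(u,v) \in J_r^{out}$ I include $(u,d_{J_r^{out}}(r,v)-1)_r \to (v,d_{J_r^{out}}(r,v))_r$. The zero-cost edges $(s^-,-j)_r \to (s,-j)_r$ and $(t,j)_r \to (t^+,j)_r$ are already present in $G'$ and cost nothing.

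For feasibility: if a pair $(s_i,t_i)$ is covered by $J_r$ via an $s_i \leadsto r \leadsto t_i$ path of total length $\ell_s+\ell_t \le Dist(s_i,t_i)$, then the in- and out-arborescence portions of this path translate, edge by edge, into a directed path in $G_r$ from $(s_i,-\ell_s)_r$ to $(r,0)_r$ to $(t_i,\ell_t)_r$, which can be extended by the zero-cost edges to a path from $(s_i^-,-\ell_s)_r$ to $(t_i^+,\ell_t)_r$. Since $\ell_s+\ell_t \le Dist(s_i,t_i)$, the endpoint pair lies in $R_i$, so the \slc constraint for pair $i$ is satisfied.

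For the cost bound: each edge $e=(u,v)\in E(G)$ that appears in $J_r$ is used at most once in $J_r^{in}$ and at most once in $J_r^{out}$ (because each is an arborescence rooted at $r$), and since the depths in an arborescence are fixed, each such use maps to a single edge in $G_r$ of weight $c(e)$. Hence the total \slc cost contributed by $J_r$ is at most $2\,c(J_r)$. The copies $G_r$ for distinct $r$ are disjoint in $G'$ (they share only the fact of being subgraphs of $G'$; the edges constructed live in separate $G_r$'s), so no further cancellation or double counting is needed, and summing over $r$ yields total cost at most $2\sum_r c(J_r)=2\opt_{junc}$. The only subtle point is to verify that ``the same edge used in both $J_r^{in}$ and $J_r^{out}$'' is the only way an edge of $G$ contributes twice within a single $G_r$, which is immediate from the arborescence structure; this is what accounts for the factor $2$ rather than $1$ in the claimed bound.
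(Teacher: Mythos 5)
Your proof is correct and matches the (very brief) explanation the paper gives just before the claim: the paper states that in each distance-preserving weighted junction tree, each edge is used at most once in the in-arborescence and at most once in the out-arborescence, which is precisely the observation you formalize. Your explicit depth-based mapping of arborescence edges into the layered graph $G_r$ and the feasibility check via the relation $R_i$ supply the details the paper elides, and your cost accounting (each $e$ yields at most one edge in the negative layers and one in the positive layers, each of weight $c(e)$) correctly recovers the factor $2$.
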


The remaining is to show how height reduction is used to find a solution for \oslc on $G'$. We recall Lemma \ref{le:height_reduction}.

\lemhr*

We construct a tree $H_r$ from $G_r$ by the height reduction technique. We recall that $G_r$ is a layered graph with a center vertex $(r,0)$. Let $G^-_r$ denote the subgraph of $G_r$ induced by the vertex set
\[\tuple{\bigcup_{v \in V \setminus \{r\}}\{v, v^-\} \times \{-j \mid j \in [n-1]\}} \cup \{(r,0), (r^-,0)\},\]
and similarly, let $G^+_r$ denote the subgraph of $G_r$ induced by the vertex set
\[\tuple{\bigcup_{v \in V \setminus \{r\}}\{v, v^+\} \times [n-1]} \cup \{(r,0), (r^+,0)\}.\]
By Lemma~\ref{le:height_reduction}, with $(r,0)$ being the root, we can construct a tree $T_r^-$ rooted at $r^-$ for $G_r^-$ which approximately preserves the cost of any in-arborescence rooted at $(r,0)$ in $G_r^-$ by a subtree in $T_r^-$, and similarly a tree $T_r^+$ rooted at $r^+$ for $G_r^+$ which approximately preserves the cost of any out-arborescence rooted at $(r,0)$ in $G_r^+$ by a subtree in $T_r^+$. We further add a super root $r'$ and edges $\{r',r^-\}$ and $\{r',r^+\}$ both with weight 0. This concludes the construction of the weighted tree $H_r$ (see Figure~\ref{fig:hr} for an illustration).

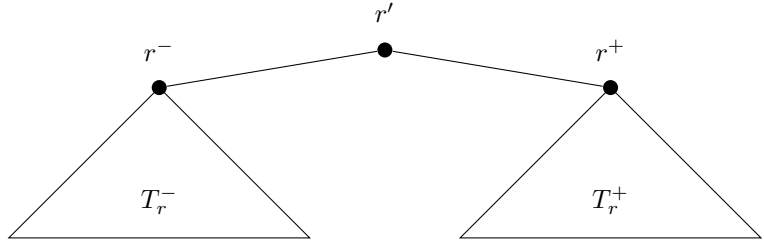
\begin{figure}[h]
\centering
\begin{tikzpicture}
    \draw (-2,0) -- (2,0) -- (0,2) --cycle;
    \node[draw=none] at (0,0.5) {$T_r^-$};
    \draw (4,0) -- (8,0) -- (6,2) --cycle;
    \node[draw=none] at (6,0.5) {$T_r^+$};
    \node at (0,2.5) {$r^-$};
    \node at (6,2.5) {$r^+$};
    \node at (3,3) {$r'$};
    \node[fill,circle, inner sep=0pt, minimum size=0.2cm] (t1) at (0,2) {};
    \node[fill,circle, inner sep=0pt, minimum size=0.2cm] (t2) at (6,2) {};
    \node[fill,circle, inner sep=0pt, minimum size=0.2cm] (t3) at (3,2.5) {};
    \path[-]
        (t1) edge (t3)
        (t2) edge (t3);
\end{tikzpicture}
\caption{Construction of $H_r$. The figure is from \cite{grigorescu2021online}.}
\label{fig:hr}
\end{figure}

Let $H$ be the disjoint union of $H_r$ for $r \in V$. We show that we can achieve an $\tO(k^\ep)$-approximation for the Steiner label cover problem on graph $H$ in an online manner. We set $\sigma=\lceil 1/\ep \rceil$ and apply Lemma~\ref{le:height_reduction} to obtain the weight $\hat{w}:E(H) \to \R_{\geq 0}$ and the mapping $\Psi_r: V(H_r) \to V(G_r)$ for all $r \in V$. Let $\Psi$ be the union of the mappings for all $r \in V$. For each pair $(s_i,t_i)$ in the original graph $G$, we recall that we focus on the vertex subset pair $(S_i,T_i)$ of $G'$ and its relation $R_i$ which captures the distance requirement.

To establish the correspondence between the Steiner label cover instances, we clarify the mapping between the leaves of the arborescences in $G'$ and $H$. Given a vertex $(s^-_i,-j)_r$ in $V(G')$ with a non-negative $j$, let $\Psi^{-1}((s^-_i,-j)_r)$ denote the set of leaves in $T^-_r$ that maps to $(s^-_i,-j)_r$ by $\Psi$. Similarly, given a vertex $(t^+_i,j)_r$ in $V(G')$ with a non-negative $j$, let $\Psi^{-1}((t^+_i,j)_r)$ denote the set of leaves in $T^+_r$ that maps to $(t^+_i,j)_r$ by $\Psi$. The mapping $\Psi^{-1}$ naturally defines the terminal sets of interest $\hat{S_i}:=\Psi^{-1}(S_i)=\{\hat{s} \in V(H) \mid \Psi(\hat{s}) \in S_i\}$ and $\hat{T_i}:=\Psi^{-1}(T_i)=\{\hat{t} \in V(H) \mid \Psi(\hat{t}) \in T_i\}$. The relation is also naturally defined: $\hat{R}_i:= \{(\hat{s},\hat{t}) \in \hat{S}_i \times \hat{T}_i \mid (\Psi(\hat{s}),\Psi(\hat{t})) \in R_i\}$. We note that for $\hat{s} \in \hat{S}_i$ and $\hat{t} \in \hat{T}_i$, $(\hat{s},\hat{t})$ belongs to $\hat{R}_i$ only when $\hat{s}$ and $\hat{t}$ belong to the same tree $H_r$ (see Figure \ref{fig:sc} for an illustration).

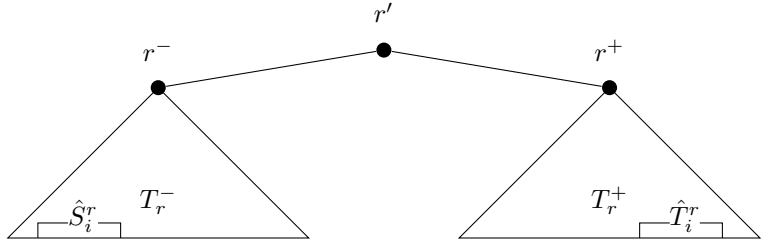
\begin{figure}[h]
\centering
\begin{tikzpicture}
    \draw (-2,0) -- (2,0) -- (0,2) --cycle;
    \node[draw=none] at (0,0.5) {$T_r^-$};
    \draw (4,0) -- (8,0) -- (6,2) --cycle;
    \node[draw=none] at (6,0.5) {$T_r^+$};
    \node at (0,2.5) {$r^-$};
    \node at (6,2.5) {$r^+$};
    \node at (3,3) {$r'$};
    \node[fill,circle, inner sep=0pt, minimum size=0.2cm] (t1) at (0,2) {};
    \node[fill,circle, inner sep=0pt, minimum size=0.2cm] (t2) at (6,2) {};
    \node[fill,circle, inner sep=0pt, minimum size=0.2cm] (t3) at (3,2.5) {};
    \path[-]
        (t1) edge (t3)
        (t2) edge (t3);
    \node at (-1,0.3) {$\hat{S}^r_i$};
    \node at (7,0.3) {$\hat{T}^r_i$};
    \draw (-1.6,0) -- (-1.6,0.2) -- (-1.2,0.2);
    \draw (-0.8,0.2) -- (-0.5,0.2) -- (-0.5,0);
    \draw (8-1.6,0) -- (8-1.6,0.2) -- (8-1.2,0.2);
    \draw (8-0.8,0.2) -- (8-0.5,0.2) -- (8-0.5,0);
\end{tikzpicture}
\caption{From \opwspl to \oslc. In $G$, consider connecting $s_i$ and $t_i$ by using a distance-preserving weighted junction tree rooted at $r$. In $H_r$, we consider terminal sets $\hat{S}^r_i := \hat{S}_i \cap H_r$ and $\hat{T}^r_i := \hat{T}_i \cap H_r$. A distance-preserving weighted junction tree in $G$ rooted at $r$ corresponds to a subtree with leaves that belongs to $\hat{S}^r_i \cup \hat{T}^r_i$. The set $\hat{R}^r_i := \hat{R}_i \cap (\hat{S}^r_i \times \hat{T}^r_i)$ defines the admissible pairs, which maintains the distance requirement $Dist(s_i,t_i)$ in the layered graph $G_r$. The figure is from \cite{grigorescu2021online}.}
\label{fig:sc}
\end{figure}

We use the following theorem from \cite{grigorescu2021online}.

\begin{theorem} \label{thm:oslc}
    For the \oslc problem on $H, \hat{S}_i$, $\hat{T}_i$, $\hat{R}_i$, there is a randomized polynomial-time algorithm with competitive ratio $\polylog(n)$.
\end{theorem}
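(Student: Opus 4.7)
The plan is to reduce the \oslc problem on the forest $H$ to an online fractional covering problem and then round online, leveraging the fact that $H$ is a forest so that every admissible pair has a unique connecting path. Since each relation $\hat{R}_i$ only contains pairs $(\hat{s},\hat{t})$ whose endpoints lie in a common tree $H_r$, there is a unique $\hat{s}$-$\hat{t}$ path $P_{\hat{s},\hat{t}}$ in $H$. I would then formulate the natural covering LP with a variable $x_e \in [0,1]$ for each $e \in E(H)$ and auxiliary variables $y_{i,(\hat{s},\hat{t})} \in [0,1]$ for $i \in [k]$ and $(\hat{s},\hat{t}) \in \hat{R}_i$, subject to $\sum_{(\hat{s},\hat{t}) \in \hat{R}_i} y_{i,(\hat{s},\hat{t})} \geq 1$ for every $i$, and $y_{i,(\hat{s},\hat{t})} \leq x_e$ for every $e \in P_{\hat{s},\hat{t}}$, minimizing $\sum_e \hat{w}(e) x_e$. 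New constraints appear monotonically as requests arrive, so the LP fits the monotone covering template.

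Next I would apply the Buchbinder--Naor online primal--dual framework to maintain a fractional feasible solution $\{x_e\}$ online with competitive ratio $O(\log n \log k)$ against the offline fractional optimum, which in turn lower-bounds the offline integer optimum. This step is black-box once the covering LP is written down, because $H$ is a polynomial-sized structure and each constraint has polynomially many nonzero coefficients.

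To convert the fractional $\{x_e\}$ to an integral online solution, I would use the tree-based online rounding scheme in the spirit of the online group Steiner tree algorithm of Alon--Awerbuch--Azar--Buchbinder--Naor on trees. Upon arrival of request $i$, the fractional mass on the candidate paths $P_{\hat{s},\hat{t}}$ for $(\hat{s},\hat{t}) \in \hat{R}_i$ is used to independently flip biased coins along these paths so that at least one candidate pair becomes fully connected with constant probability; repeating the rounding $O(\log n)$ times per request and taking a union bound over all $k \leq n^2$ requests yields high-probability feasibility, at an additional $O(\log^2 n)$ multiplicative cost factor.

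The main obstacle is the \emph{disjunctive} structure of the relations: we need to connect only \emph{some} pair in $\hat{R}_i$, not all of them, so the rounding cannot simply buy every fractionally purchased path. The tree structure of $H$ is essential here, since the uniqueness of paths lets us aggregate fractional mass on a per-path basis and invoke the group Steiner tree rounding oracle, whereas in a general graph the analogous step is what causes the polylogarithmic blow-up to become polynomial. Multiplying the $O(\log n \log k)$ fractional competitive ratio by the $O(\log^2 n)$ rounding factor gives the claimed $\polylog(n)$ bound.
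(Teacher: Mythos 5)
The statement you are proving is not proved in this paper at all: the authors invoke it as a black box citation to \cite{grigorescu2021online}, so the "paper's proof" is a one-line reference. Your proposal is therefore a reconstruction, and its overall shape (online fractional covering LP for the label-cover constraints, followed by a randomized tree-based rounding in the style of the online group Steiner tree algorithms) does match the strategy that the cited work and its predecessor \cite{cekp} actually use. The LP you write down, together with the appeal to the Buchbinder--Naor primal--dual framework for the online fractional stage, is the right first half.

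The second half is where your argument has a genuine gap. You write the per-pair constraints $y_{i,(\hat{s},\hat{t})} \leq x_e$ for each $e \in P_{\hat{s},\hat{t}}$, and then claim that flipping independent biased coins along the candidate paths connects some admissible pair with constant probability while paying only an $O(\log^2 n)$ overhead. Two things go wrong simultaneously. First, with only the per-pair constraint (rather than a flow constraint of the form $\sum_{(\hat{s},\hat{t}) \in \hat{R}_i : e \in P_{\hat{s},\hat{t}}} y_{i,(\hat{s},\hat{t})} \leq x_e$), path-wise independent inclusion can charge the same edge many times across overlapping candidate paths, so $\mathbb{E}[\text{cost}]$ is not bounded by $\polylog(n) \cdot \sum_e \hat{w}(e)\,x_e$. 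Second, if you instead round edge-by-edge in the Garg--Konjevod--Ravi top-down fashion (which is the standard way to avoid the cost blow-up), then the events "\,$\hat{s}$ reaches $r'$\," and "\,$\hat{t}$ reaches $r'$\," are independent because they live on opposite sides of the super-root, so the probability of buying the full $\hat{s}$-$\hat{t}$ path is about $x_{e(\hat{s})}\cdot x_{e(\hat{t})} \geq y_{i,(\hat{s},\hat{t})}^2$, and $\sum y_{i,(\hat{s},\hat{t})} \geq 1$ does not imply $\sum y_{i,(\hat{s},\hat{t})}^2 = \Omega(1/\polylog)$. This is precisely the difficulty introduced by the disjunctive pair relation $\hat{R}_i$, and asserting "constant probability" sweeps it under the rug. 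The correct argument, as in \cite{grigorescu2021online}, has to exploit two structural facts you do not use: that the height-reduced trees $\hat{T}_r$ have constant height $\sigma = O(1/\ep)$, which tames the variance in the GKR-style analysis, and that the rounding must be coupled across the two sides of $H_r$ (or the LP strengthened with flow constraints) so that the pair is bought with probability comparable to $y_{i,(\hat{s},\hat{t})}$ rather than its square. Without one of these fixes, the rounding step as written does not yield a $\polylog(n)$ competitive ratio.
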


Combining Theorem \ref{thm:oslc}, Lemma \ref{le:height_reduction}, and Claim \ref{cl:layer-ps-opt}, we can recover a solution of cost at most $\tO(k^{\ep}) \opt_{junc}$ for \opwspl on $G$ from a solution of \oslc on $H$.
\end{proof}

\subsection{Missing proof for Lemma~\ref{lem:sqrt-k-den}} \label{pf:lem:sqrt-k-den}

\lemsqrtkden*

\begin{proof}
Let $G^*$ (a subgraph of $G$) be the optimal pairwise weighted spanner solution with cost $\opt$. The proof proceeds by considering the following two cases: 1) there exists a vertex $r \in V$ that belongs to at least $\sqrt{k}$ $s_i \leadsto t_i$ paths of distance at most $Dist(s_i,t_i)$ in $G^*$ for distinct $i$, and 2) there is no such vertex $r \in V$.

For the first case, we consider the union of the $s_i \leadsto t_i$ paths in $G^*$, each of distance at most $Dist(s_i,t_i)$, that passes through $r$. This subgraph in $G^*$ contains an in-arborescence and an out-arborescence both rooted at $r$, whose union forms a distance-preserving weighted junction tree. This distance-preserving weighted junction tree has cost at most $\opt$ and connects at least $\sqrt{k}$ terminal pairs, so its density is at most $\opt / \sqrt{k}$.

For the second case, each vertex $r \in V$ appears in at most $\sqrt{k}$ $s_i \leadsto t_i$ paths in $G^*$. More specifically, each edge $e \in E$ also appears in at most $\sqrt{k}$ $s_i \leadsto t_i$ paths in $G'$. By creating $\sqrt{k}$ copies of each edge, all terminal pairs can be connected by edge-disjoint paths. Since the overall duplicate cost is at most $\sqrt{k} \cdot \opt$, at least one of these paths has cost at most $\sqrt{k} \cdot \opt / k $. This path constitutes a distance-preserving weighted junction tree whose density is at most $\opt / \sqrt{k}$.
\end{proof}

\end{document}